\newtheorem{theorem}{Theorem}[section]
\newtheorem{corollary}[theorem]{Corollary}
\newtheorem{lemma}[theorem]{Lemma}
\newcommand\epsi{\left({1}/{\epsilon}\right)^i}
\newcommand\eps[1]{\left(\frac{1}{\epsilon}\right)^{#1}}
\newcommand\nfrac{n^{1+\frac{1}{\kappa}}}
\newcommand\congest{{\sf CONGEST}}
\newcommand\congestmo{{\sf CONGEST} model}
\newcommand\centell{\left\lceil {\log \frac{\kappa+1}{2}}\right\rceil}
\newcommand{\degi}{n^\frac{2^i}{\kappa}}
\newcommand{\oeb}{(1+\epsilon,\beta)}
\newcommand{\betaemu}{O\left(\frac{{\log \kappa}}{\epsilon}\right)^{{\log \kappa}-1}}
\newcommand{\ize}{\lfloor{\log \kappa\rho}\rfloor}
\newcommand{\distell}{i_0+ \lceil \frac{\kappa+1}{\kappa\rho}\rceil-1}
\begin{document}
	
	\title{Ultra-Sparse Near-Additive Emulators\thanks{This research was supported by the ISF grant No. (2344/19).}}
	\renewcommand\footnotemark{}
	\author{Michael Elkin$^1$
		%	\footnote{This research was supported by the ISF grant No. (2344/19).}
		$\ $and Shaked Matar$^1$}

	\date{$^1$Department of Computer Science, Ben-Gurion University of the Negev, Beer-Sheva, Israel.\\
		Email: \texttt{elkinm@cs.bgu.ac.il, matars@post.bgu.ac.il}}

\maketitle

\begin{abstract}
	Near-additive (aka $(1+\epsilon,\beta)$-) emulators and spanners are a fundamental graph-algorithmic construct, with numerous applications for computing approximate shortest paths and related problems in distributed, streaming and dynamic settings.

	Known constructions of near-additive emulators enable one to trade between their sparsity (i.e., number of edges) and the additive stretch $\beta$. Specifically, for any pair of parameters $\epsilon >0$, $ \kappa=1,2,\dots$, one can have a $(1+\epsilon,\beta)$-emulator with $O(\nfrac)$ edges, with $\beta = \left(\frac{\log \kappa}{\epsilon}\right)^{\log \kappa}$. At their sparsest, these emulators employ $c\cdot n$ edges, for some constant $c\geq 2$. 
	We tighten this bound, and show that in fact precisely $\nfrac$ edges suffice.
	
	In particular, our emulators can be \emph{ultra-sparse}, i.e., we can have an emulator with $n+o(n)$ edges and $\beta = \left(\frac{\log {\log n}}{\epsilon }\right)^{{\log {\log n}}(1+o(1))}$. 
	
	We also devise a distributed deterministic algorithm in the \congestmo\ that builds these emulators in low polynomial time (i.e., in $O(n^\rho)$ time, for an arbitrarily small constant parameter $\rho >0$). 
	
	Finally, we also improve the state-of-the-art distributed deterministic \congest-model construction of 
	$(1+\epsilon,\beta)$-spanners devised in the PODC'19 paper 
	\cite{ElkinMatar}. Specifically, the spanners of \cite{ElkinMatar} have $O(\beta\cdot \nfrac)$ edges, i.e., at their sparsest they employ	
	 $ O\left(\frac{\log {\log n}}{\epsilon }\right)^{{\log {\log n}}}\cdot n$ edges. In this paper, we devise an efficient distributed deterministic \congest-model algorithm that builds such spanners with $O(\nfrac)$ edges for $\kappa = O\left(\frac{\log n}{\log ^{(3)}n}\right)$. At their sparsest, these spanners employ only $O(n\cdot {\log {\log n}})$ edges. 
\end{abstract}
	\maketitle

\newpage

\section{Introduction}

\subsection{Background and Our Results}

Given an unweighted undirected $n$-vertex graph $G=(V,E)$, and a pair of parameters $\alpha \geq 1$, $\beta \geq 0$, a
graph $G'=(V',E')$, with $V\subseteq V'$ is called an \emph{$(\alpha,\beta)$-emulator} for $G$, if for every pair of vertices $u,v\in V$ it holds that 
\begin{equation*}
d_{G}(u,v)\leq d_{G'}(u,v)\leq \alpha d_{G}(u,v)+\beta.
\end{equation*}

If $G'$ is a subgraph of $G$, it is called an \emph{$(\alpha,\beta)$-spanner}. 

In STOC'01, Elkin and Peleg \cite{ElkinP01} showed that for any $\epsilon >0$ and $\kappa =1,2,\dots$, there exists a $\beta = \beta(\epsilon,\kappa)$ such that for any $n$-vertex graph $G$ there exists a $(1+\epsilon,\beta)$-emulator of size $O({\log \kappa}\cdot \nfrac)$ and a $(1+\epsilon,\beta)$-spanner of size $O(\beta\cdot \nfrac)$. 
Emulators and spanners with these parameters are called \emph{near-additive}. The parameter $\beta$ is called the \emph{additive stretch} or \emph{error} of the respective emulator or spanner. 
In \cite{ElkinP01} the additive stretch is $\beta \approx \left(\frac{{\log \kappa}}{\epsilon}\right)^{{\log \kappa}-1}$, and this estimate stays the state-of-the-art. 
Based on \cite{AbboudB16}, Abboud et al. \cite{AbboudBP18} showed a lower bound of 
$\beta = \Omega \left(\frac{1}{\epsilon{\log \kappa}}\right)^{{\log \kappa}-1}$.

In SODA'06, Thorup and Zwick \cite{ThorupZ06} devised another scale-free construction of near-additive emulators. Their size and additive stretch are similar to those in \cite{ElkinP01}, but the same construction applies for all $\epsilon>0$.

Near-additive emulators and spanners were a subject of intensive research in the last two decades \cite{Elkin01,ElkinZ04,ThorupZ06,Pettie07,Pettie08,Pettie10,ElkinN16,ElkinN17,ElkinN20,ElkinP01,AbboudBP18,HuangP18,ElkinMatar}. They found numerous applications for computing almost shortest paths and distance oracles in various computational settings \cite{Elkin01,ElkinZ04,BernsteinR11,ElkinP15_distance_oracles,ElkinN17,AndoniSZ20}. 
Moreover, a strong connection between them and hopsets was discovered in \cite{ElkinN16,ElkinN17,HuangP17a}. Hopsets are also extremely useful for dynamic and distributed algorithms \cite{HenzingerKN15c,coh94,HenzingerKN16,LenzenP15,ElkinN16routing,ElkinN17routing,Censor-HillelDK19,DoryP20,LackiN20}. 
See also a recent survey \cite{ElkinN20} for an extensive discussion about the relationship between emulators, spanners and hopsets.

A significant research effort was put into decreasing the sparsity level of near-additive emulators and spanners \cite{Pettie07,Pettie08,Pettie10,ElkinN17,AbboudBP18}. Pettie \cite{Pettie08} showed that one can efficiently construct near-additive spanners of size $O(n\cdot ({\log{\log n} })^\phi)$, where $\phi \approx 1.44$ (with $\kappa = {\log n}$ and 
$\beta = \left(\frac{{\log {\log n}}}{\epsilon}\right) ^{\phi{\log {\log n}}}$). He then further improved the size bound to $O(n\cdot ({\log ^{(4)}n}))$, where ${\log ^{(4)}n}$ is the four-times iterated logarithm \cite{Pettie10}. 
The latter construction is however less efficient. (By efficient construction, we mean here centralized running time of $O(|E|\cdot n^\rho)$, for an arbitrarily small constant $\rho>0$, and a distributed \congest\ time of $O(n^\rho)$. We also call the latter \emph{low polynomial time}.)

Elkin and Neiman \cite{ElkinN17} devised an efficient construction of near-additive spanners with size $O(n{\log{\log n}})$ and of linear-size emulators. In the same paper, they also came up with an efficient distributed construction of \emph{ultra-sparse} (i.e., of size $n+o(n)$) multiplicative spanners\footnote{A subgraph $G'(V,H)$ is said to be a \emph{multiplicative $k$-spanner} of $G=(V,E)$ if for every pair of vertices $u,v\in V$, $d_{G'}(u,v)\leq k\cdot d_G(u,v)$. The ultra-sparse multiplicative spanners of \cite{ElkinN17} have stretch ${\log n}\cdot f(n)$ for an arbitrary slow-growing function $f(n) = \omega(1)$.}.

In the current paper we devise the first construction of \emph{ultra-sparse near-additive emulators}. Specifically, we show that for any $\epsilon >0$ and $\kappa=1,2,\dots$, there exists $\beta = \beta(\epsilon,\kappa)\approx \left(\frac{{\log \kappa}}{\epsilon}\right)^{\log \kappa-1}$, such that for any $n$-vertex graph $G=(V,E)$, there exists a $(1+\epsilon,\beta)$-emulator of size at most $\nfrac$. (Note that the leading constant in front of $\nfrac$ is $1$.)
By substituting here $\kappa = \omega({\log n})$, one obtains a near-additive emulator with $\beta = \left(\frac{\log{\log n}}{\epsilon}\right)^{{\log {\log n}}(1+o(1))}$ and size $n+o(n)$.

We also devise efficient (in the above sense) centralized and distributed deterministic algorithms that construct ultra-sparse emulators. 
Specifically, for any arbitrarily small constant $\rho >0$ (in addition to $\epsilon>0$ and $\kappa=1,2,\dots$), there exists $\beta= \beta(\epsilon,\kappa,\rho) $ such that our distributed deterministic \congest-model algorithm (see Section \ref{sec pre congest} for the definition of \congestmo) computes $\oeb$-emulators with at most $\nfrac$ edges, for 
$\beta = \left(\frac{{\log \kappa\rho}+1/\rho}{\epsilon\rho}\right)^{{\log \kappa\rho}+1/\rho}$ in time $O(\beta n^\rho)$. In particular, our algorithm can construct ultra-sparse emulators with $\beta = \left(\frac{\rho{\log{\log n}}+1/\rho}{\epsilon\rho} \right)^{{\log {\log n}}(1+o(1))}$ in deterministic distributed low polynomial time.

A variant of our algorithm also constructs sparse near-additive spanners. Specifically, the state-of-the-art distributed \congest-model deterministic algorithm for building near-additive spanners is due to \cite{ElkinMatar}. For any $\epsilon >0$, $\kappa =1,2,\dots$, $\rho>0$, there exists $\beta= \beta(\epsilon,\kappa,\rho) =
 \left(\frac{{\log \kappa\rho}+1/\rho}{\epsilon\rho}\right)^{{\log \kappa\rho}+1/\rho}$, such that there for any $n$-vertex graph $G= (V,E)$, the algorithm of \cite{ElkinMatar} constructs a $(1+\epsilon,\beta)$-spanner with $O(\beta\cdot\nfrac)$ edges in low polynomial time $O(\beta n^\rho)$. 
At their sparsest, the spanners of \cite{ElkinMatar} employ $n\cdot \left(\frac{{\log {\log n}}+1/\rho}{\epsilon\rho}\right)^{{\log {\log n}}+O(1)}$ edges. Improving upon this result, we devise a deterministic \congest-model algorithm with the same running time that constructs $\oeb$-spanners with $O({\log \kappa}\cdot \nfrac)$ edges. At their sparsest, these spanners employ just $O(n{\log {\log n}})$ edges. 

\subsection{Technical Overview}

All known  constructions of sparse near-additive emulators and spanners can be roughly divided into those that follow the \emph{superclustering-and-interconnection} (henceforth, SAI) approach of \cite{ElkinP01} and those that follow its scale-free version \cite{ThorupZ06}. (The constructions of \cite{Elkin01,ElkinZ04} follow a different approach, and result in emulators of size at least $\Omega(n{\log n})$.) 

In the SAI approach, one starts with a partition $P_0 = \{ \{v\}\ |\ v\in V \}$ of the vertex set into singleton clusters. Let $\ell \approx {\log \kappa}$. There are $\ell+1$ phases, numbered $0,1,\dots,\ell$, and in all phases except the last one there are two steps: the superclustering and the interconnection. In the last phase, the superclustering step is skipped. The input to each phase $i\in [0,\ell]$ is a partial partition\footnote{A family $A$ of pairwise disjoint subsets of a set $B$ 
	is called a \emph{partial partition} of $B$.}
 $P_i$ of $V$.
The phase also accepts as input two parameters, $\delta_i$ and $deg_i$, where the \emph{distance threshold} $\delta_i$ determines which clusters of $P_i$ are considered close or \emph{nearby} (those whose centers are at distance at most $\delta_i$ in $G$ from one another), and $deg_i$ determines how many nearby clusters a cluster needs to have to be considered \emph{popular} (at least $deg_i$). 

Intuitively, popular clusters $C$ then create \emph{superclusters} around them, which contain $C$ and all the nearby clusters. 
Unpopular clusters are \emph{interconnected} via emulator edges of weight equal to the distance between them. 
The set of superclusters is the partial partition $P_{i+1} $ for the $(i+1)$st phase.

In Thorup-Zwick's \cite{ThorupZ06} scale-free version of this construction, clusters of $P_i$ are sampled independently at random, with probability $\frac{1}{\deg_i}$ each, and each unsampled cluster joins the closest sampled cluster. In this way superclusters of $P_{i+1}$ are created. 
In addition, for every unsampled cluster $C$, it is connected via an emulator edge to every other unsampled cluster $C'$ which is closer to it than the closest sampled cluster. The weight of this edge is equal to the distance in $G$ between the respective cluster centers. This is an analogue of the interconnection step from \cite{ElkinP01}.

In both these approaches, ultimately the number of edges in the emulator is analyzed as the sum over all phases $i$ of the number of edges added to the emulator on phase $i$. 
One notes that each superclustering step forms a forest and thus contributes $O(n)$ edges. In addition, in \cite{ElkinP01}, the degree sequence $deg_i$ is designed in such a way that the interconnection step of each phase $i$ contributes at most $\nfrac$ edges. 
As a result, the overall size of the emulator is $O(({\log \kappa})(n+\nfrac)) = O({\log \kappa }\cdot \nfrac)$. For $\kappa = {\log n}$ this becomes $O(n\cdot {\log {\log n}})$. 
Subsequent improvements in the sparsity level of near-additive emulators and spanners \cite{Pettie08,Pettie10,ElkinN17,AbboudBP18,HuangP18,Pettie07} optimized the degree sequence $deg_0,deg_1,\dots,\deg_\ell$, so that the numbers of edges $m_0,m_1,\dots,m_\ell$ contributed on the interconnection steps of phases $0,1,\dots,\ell$, respectively, decrease geometrically and the total number of edges sums up to $O(\nfrac)$.

In this way one can guarantee that the overall contribution of interconnection steps is $O(\nfrac)$, while the additive stretch $\beta$ grows very little if at all. 
Elkin and Neiman \cite{ElkinN17} argued also that the overall contribution of superclustering steps is $O(n)$ (as opposed to the naive $O(n{\log \kappa})$), and as a result derived an emulator of linear size. 

Our main technical contribution is in a novel analysis. We adopt the original degree sequence of \cite{ElkinP01} as is, rather than using the optimized degree sequences from \cite{Pettie08,Pettie10,ElkinN17}. We then argue that the overall contribution of all the superclustering and interconnection steps \emph{together} is at most $\nfrac$. We achieve this by carefully charging edges inserted to the emulator during the entire algorithm to vertices, and arguing that no vertex is overloaded. By doing so we obtain a new structural understanding of this important construction, and derive the existence of ultra-sparse near-additive emulators.

We also show an efficient centralized implementation of this algorithm. Specifically, given parameters $\epsilon>0$, $\rho>0$ and $\kappa = 1,2,\dots,$ our algorithm constructs a $(1+\epsilon,\beta)$-emulator with at most $\nfrac$ edges and $\beta= \beta(\epsilon,\kappa,\rho) = \left(\frac{{\log \kappa\rho}+1/\rho}{\epsilon}\right)^{{\log \kappa\rho}+1/\rho}$ in $O(|E|\cdot n^\rho)$ deterministic time. This running time matches the state-of-the-art running time known for building denser near-additive emulators and spanners \cite{Pettie10,ElkinN17,ElkinMatar}.

\subsubsection{Distributed Implementation}

Our distributed algorithm is deterministic, and at the end of it, every vertex $v\in V$ knows about all emulator edges incident on it.

A large research effort was invested in implementing the SAI approach efficiently in the distributed \congestmo\ in the context of near-additive spanners and hopsets \cite{ElkinP01,ElkinN16,ElkinN17,HuangP17a,ElkinMatar}. Implementing this approach in the \congestmo\ in the context of near-additive \emph{emulators} presents a new challenge, since for every new emulator edge $e= (u,v)$, both its endpoints need to be aware of its existence and weight. Specifically, the center of every supercluster needs to learn of all clusters that have joined its supercluster. Since the number of clusters that join a single supercluster might be very large, this causes congestion. This issue does not arise in the construction of near-additive spanners, as spanner edges can be added locally, and the center of the supercluster does not need to learn of all the clusters that have joined its supercluster. In the context of hopsets, this challenge was addressed by broadcasting messages along a BFS tree that spanned the entire graph. This approach results in running time which is at least linear in the graph's diameter, which may be prohibitively large. 

In the current paper, we devise a superclustering scheme that ensures that the number of messages each vertex has to send in every step is relatively small. This is done by splitting very large superclusters into many superclusters. Intuitively, such a splitting may result in an increased number of levels $\ell$ of the construction, and therefore, in a higher additive term $\beta$ and increased running time. We show that this is not the case for our algorithm.

To our knowladge, there are no known distributed deterministic algorithms for building near-additive emulators of linear size. The only existing algorithm with these properties is the randomized algorithm of Elkin and Neiman \cite{ElkinN16}. However, the algorithm of \cite{ElkinN16} does not ensure that for every emulator edge $(u,v)$, both endpoints know of its existence. (Our algorithm provides, in fact, ultra-sparse emulators, while that of \cite{ElkinN16} guarantees just linear size.)

The only known distributed \congest\ \emph{deterministic} algorithm for building near-additive spanners \cite{ElkinMatar} constructs spanners of size $O(\beta \nfrac)$. The construction there can be adapted to build emulators of size $O({\log \kappa}\cdot \nfrac)$ (i.e., of size $\Omega(n\cdot {\log {\log n}})$), but, like the algorithm of \cite{ElkinN16}, it does not guarantee that for every emulator edge $e= (u,v)$ both $u$ and $v$ will be aware of it.

\subsection{Related Work}

The problem of efficiently constructing ultra-sparse multiplicative spanners and emulators was extensively studied in \cite{AlthoferDDJS93,HalperinZ96,RodittyZ04,DubhashiMPRS05,DerbelMZ06,Pettie07,Pettie10,ElkinN17}.

Spanners and emulators are known to be related to spectral sparsifiers \cite{KapralovP12,JambulapatiS20}. 
Ultra-sparse sparsifiers, or shortly ultra-sparsifiers, play a key role in a variety of efficient algorithms. Spielman and Teng \cite{SpielmanT04} used them for computing sequences of preconditioners. See also \cite{CohenKMPPRX14,KelnerLOS14,KollaMST10,Sherman13} for their applications to maximum flow and other fundamental problems.

We believe that the problem of devising ultra-sparse near-additive emulators is as fundamental as that of devising ultra-sparse multiplicative spanners and ultra-sparsifiers.

\subsection{Outline}
Section \ref{sec preliminaries} provides basic definitions for this paper.
In Section \ref{sec cent} we present a construction of ultra-sparse near-additive emulators in the centralized model. The properties of the construction are summarized in Corollaries \ref{coro emu} and \ref{coro emu us}.
In Section \ref{sec congest} we show a distributed \congest\ implementation of our algorithm. The properties of the distributed construction are summarized in Corollaries \ref{coro emu dist} and \ref{coro emu us dist}. In Section \ref{sec fast cent} we provide an efficient centralized construction of ultra-sparse near-additive emulators, which is based on our distributed construction. The properties of the construction are summarized in Theorems \ref{theo emu cent fast} and \ref{theo emu us cent fast}.
Section \ref{sec span} contains an efficient, deterministic \congest-model construction of sparse near-additive spanners. The properties of the construction are summarized in Corollary \ref{coro span}.

\subsection{Preliminaries }\label{sec preliminaries}

Throughout this paper, we denote by $r_C$ the center of the cluster $C$ and say that $C$ is \textit{centered around} $r_C$. The center $r_C$ is a designated vertex from $C$, i.e., $r_C\in C$. 
Throughout the paper, when the logarithm base is unspecified, it is equal to $2$. 
For a pair of integers $a,b$, where $ a\leq b$, the term $[a,b]$ stands for $\{a,a+1,\dots, b\}$.

\subsubsection{The Distributed \congest\ Model}	\label{sec pre congest}

In the distributed model \cite{Peleg00book} we have processors residing in vertices of the graph. The processors communicate with their graph neighbors in synchronous rounds. In the \congestmo, messages are limited to $O(1)$ words, i.e., $O(1)$ edge weights\footnote{In our case, the input graph is unweighted.} or ID numbers. %In the \localmo, the length of messages is unbounded. 
The running time of an algorithm in the distributed model is the worst case number of communication rounds that the algorithm requires.

\subsubsection{Ruling Sets}\label{sec def ruling setes}

Given a graph $G= (V,E)$, a set of vertices $W\subseteq V$ and parameters $\alpha,\beta \geq 0$, a set of vertices $A\subseteq W$ is said to be
an \textit{$(\alpha,\beta)$-ruling set} for $W$ if for every pair of vertices $u,v\in A$, the distance between them in $G$ is at least $\alpha$, and for every $u\in W$ there exists a representative $v\in A$ such that the distance between $u,v$ is at most $\beta$.

\section{Centralized Construction}\label{sec cent}

In this section we devise an algorithm that, given a graph $G=(V,E)$ on $n$ vertices and parameters $\epsilon<1$ and $\kappa \geq 2$ constructs a $(1+\epsilon,\beta)$-emulator for $G$ with at most $\nfrac$ edges in polynomial time in the centralized model, where $\beta = O\left(\frac{\log \kappa}{\epsilon}\right)^{{\log \kappa}-1}$.
In particular, by setting $\kappa = \omega({\log n})$, we construct an emulator with $n+o(n)$ edges with $\beta = \left(\frac{{\log{\log n}}}{\epsilon}\right)^{{\log {\log n}} + O(1)}$.

Section \ref{sec cent const} contains a general overview of the centralized construction. The properties of the resulting emulator and of the construction are analyzed in Section \ref{sec cent analysis of const}.

Our centralized construction is based on the centralized algorithm of Elkin and Peleg \cite{ElkinP01}. As was described in the introduction, both algorithms (of \cite{ElkinP01} and ours) follow the SAI approach to constructing near-additive emulators. 
There are, however, some important differences in both the algorithm and in its analysis. In the algorithm of \cite{ElkinP01} popular clusters $C$ (see Section \ref{sec preliminaries} for its definition) create superclusters $\widehat{C}$ around them, that contain only clusters that are close to the cluster $C$.
% and these superclusters are structured as \emph{stars} with $C$ becoming the star center and all nearby clusters $C'$ that are merged $\widehat{C}$ become the star leaves. 
All unpopular clusters $C''$ that are not merged into one of the superclusters are then interconnected with other nearby unpopular clusters, but  \emph{not} with nearby superclusters. To guarantee connectivity (and small stretch) between the superclusters and nearby unpopular clusters, the algorithm of \cite{ElkinP01} employs a separate ground partition. The spanning forest of this ground partition contributes at most $n-1$ additional edges to the emulator, which are swallowed by the overall size estimate of  $O(\nfrac)$. 

On the other hand, in our current algorithm we aim at a size bound of exactly $\nfrac$, and thus we cannot afford using a separate ground partition. Instead, once our algorithm creates a star-like supercluster $\widehat{C}$, it also inserts all  unclustered clusters $C''$  that are nearby the \emph{supercluster} $\widehat{C}$ into a set $N_i$ of \emph{buffer} clusters. These buffer clusters will not be allowed to create superclusters around them. They will be allowed to join other superclusters that will be constructed in future. However, if no future supercluster will incorporate them, the supercluster $\widehat{C}$ will do so. As a result, superclusters created in our algorithm may have larger radii in comparison to those constructed in \cite{ElkinP01}. 
This adaptation of the SAI approach of \cite{ElkinP01} takes care of the connectivity (and small stretch) between superclusters $\widehat{C}$ and their nearby unclustered clusters $C''$, \emph{without} paying an additional (additive) term of $n-1$ edges in the size of the emulator. 

In addition, the size analysis of \cite{ElkinP01} also analyzes separately contributions of different phases of the algorithm, and then sums them up. This is also the case in all subsequent works \cite{ThorupZ06,Pettie09,ElkinN17,ElkinMatar}.  As was discussed in the introduction, this naive summation (even with optimized degree sequences) is doomed to result in an emulator of size at least $\nfrac + n - O(1) \geq 2n-O(1)$. Our size analysis carefully combines contributions of all different phases altogether, and thus results in the bound of exactly $\nfrac$. 

Finally, the adaptation that we discussed above induces some modifications of stretch analysis as well. This is since, as discussed earlier, the radii of clusters constructed by our algorithm may be larger than the radii of clusters constructed in \cite{ElkinP01}. Specifically, both in our result and in that of \cite{ElkinP01}, $\beta = O\left( \frac{{\log \kappa}}{\epsilon}\right)^{{\log \kappa}-1}$, but the constant hidden by the $O$-notation in our bound is slightly larger than that in \cite{ElkinP01}. 

Yet another variant of the construction of \cite{ElkinP01} was given in \cite{ElkinN17}. In this variant of the construction, cluster centers are sampled, and clusters that are close to sampled clusters join them to create superclusters. As a result, the connectivity (and small stretch) between superclusters $\widehat{C}$ and nearby unclustered clusters $C''$ is ensured without employing a ground partition. On the other hand, this scheme requires randomization, while our approach is deterministic. Also, the size analysis of \cite{ElkinN17}, like that of \cite{ElkinP01}, analyzes each phase separately, and as a result, it cannot be used to provide ultra-sparse emulators.

\subsection{The Construction}\label{sec cent const}

Our algorithm initializes $H= \emptyset$ and proceeds in phases. The input to each phase $i\in [0,\ell]$ is a collection of clusters $P_i$, a degree parameter $deg_i$ and a distance threshold parameter $\delta_i$. 
The parameters $\ell,\{\deg_i,\delta_i\ | \ i\in[0,\ell]\}$ are specified in Section \ref{sec set param}. The set $P_0$ is initialized as the partition of $V$ into singleton clusters, i.e., clusters containing one single vertex each.

Consider an index $i\in [0,\ell]$, and let $C,C'$ be a pair of clusters in $P_i$, centered around vertices $r_C,r_{C'}$, respectively. We say that $r_C,r_{C'}$ are \emph{neighboring cluster centers} if $d_G(r_{C},r_{C'})\leq \delta_i$. If $r_C,r_{C'}$ are neighboring cluster centers, their respective clusters $C,C'$ are said to be \emph{neighboring clusters}.

Intuitively, in each phase $i$ the algorithm sequentially considers centers of clusters from $P_i$ and connects them with their neighboring cluster centers, i.e., it adds to the emulator $H$ an edge between them. The weight of each new edge is set to be the length of the shortest path in $G$ between its endpoints. Each added edge is charged to a center of a cluster in $P_i$. Centers of clusters that do not have many neighboring cluster centers are charged with all the edges that were added to the emulator when they were considered. However, cluster centers that have many neighboring cluster centers, i.e., \emph{popular cluster centers}, require a different approach. They are still connected with their neighboring cluster centers, but they are not charged with these edges. Instead, their neighbors are required to \textit{share the burden}.

Generally speaking, in each phase, we interconnect cluster centers that are not popular, and form superclusters around popular cluster centers. The set of superclusters formed in phase $i$ is the input $P_{i+1}$ for phase $i+1$. This allows us to defer work on these dense areas of the graph to later phases of the algorithm.

\subsubsection{Execution Details} 
We now describe the execution of a phase $i\in [0,\ell]$ of the algorithm.
At the beginning of phase $i$, define $S_i$ to be the set of centers of clusters $ C\in P_i $ and $U_i, N_i = \emptyset$. $U_i$ is the set of unclustered clusters during phase $i$. $N_i$ is an additional auxiliary set of cluster centers, which will be eventually superclustered. On the other hand, once a cluster $C$ joins $N_i$, it will not be allowed to create a supercluster around it.

The algorithm sequentially considers vertices from $S_i$.
While the set $S_i$ is not empty, the algorithm removes a single vertex $r_C$ from $S_i$. A Dijkstra exploration is executed from $r_C$ to depth $\delta_i$. 
Let $\Gamma(r_C)$ be the set of vertices $r_{C'}\in S_i\cup N_i$ that were discovered by the exploration (note that $r_C\notin S_i\cup N_i$, and so $r_C\notin \Gamma(r_C)$). For each vertex $r_{C'}\in \Gamma(r_C)$, the edge $(r_C,r_{C'})$ is added to the emulator $H$ with weight $d_G(r_C,r_{C'})$.

If $|\Gamma(r_C)|< deg_i$, then the center $r_C$ is charged with all edges added to $H$ as a result of an exploration originated from it. See Figure \ref{fig cent intercon} for an illustration. The cluster $C$ of the vertex $r_C$ is added to the set $U_i$ of unclustered clusters. 

%\begin{figure}
%	
%	\centering
%	\includegraphics[scale=0.07]{"cent intercon4".png}
%	\caption{\textbf{Interconnection edges.} The considered (dark gray) cluster center does not have many neighbors. The direction of the edges indicates that they are charged to the center of $C$.}
%	\label{fig cent intercon}
%\end{figure}

However, if $|\Gamma(r_C)|\geq deg_i$, then $r_C$ cannot be charged with these edges. A new supercluster $\widehat{C}$ is formed around $r_C$. The new supercluster contains the cluster $C\in P_i$ of $r_C$, and all clusters $C'$ such that their centers are in $\Gamma(r_C)$. The vertex $r_C$ becomes the center of the new supercluster $\widehat{C}$. The new supercluster $\widehat{C}$ joins the set $P_{i+1}$, which is the input collection for the next phase. See Figure \ref{fig cent supercluster} for an illustration. The cluster centers in $\Gamma(r_C)$ are removed from $S_i$ and from $N_i$. Thus, they will not be considered by the algorithm.

%\begin{figure}
%	\centering
%	\includegraphics[scale=0.08]{"cent superc3".png}
%	\caption{\textbf{Superclustering edges.} The considered cluster center (dark gray) has many neighbors. A supercluster $\widehat{C}$ is formed around $C$ and contains the neighbors $C'$ to which $C$ added an edge. The direction of the edges indicates that the white clusters are charged to the centers of the neighbors of $C$. }
%	\label{fig cent supercluster}
%\end{figure}
The algorithm described thus far is not sufficient. Consider a case where the algorithm has already formed some superclusters in phase $i$, and then it considers a cluster center $r_C$ that has at least $deg_i$ neighboring cluster centers, but many of them have been superclustered in this phase. See Figure \ref{fig clustering problem} for an illustration. The cluster center $r_C$ must be connected with its neighboring cluster centers. However, the center $r_C$ cannot be charged with these edges, nor can we form a supercluster around it containing all of its neighbors, as many of them already belong to other superclusters.

% \begin{figure}
% 	\centering
% 	\includegraphics[scale=0.13]{"clustering problem".png}
% 	\caption{ The cluster $C$ (depicted as white circle) can have many neighbors in $P_i$ that have joined superclusters in phase $i$ (the gray curved areas) before its center was considered. When we consider the center $r_C$ of $C$, we cannot connect it with its neighboring cluster centers, since we cannot charge $r_C$ for the added edges, nor can we require its neighbors to be charged for them. In the figure, the solid and dotted lines represent emulator and graph edges, respectively. }
% 	\label{fig clustering problem}
% \end{figure}

To avoid such occurrences altogether, when a supercluster $\widehat{C}$ is formed around a vertex $r_C$ in phase $i$, every cluster center $r_{C''}\in S_i$ with $\delta_i < d_G(r_C,r_{C''})\leq 2\delta_i$ is removed from $S_i$, and is added to a set $N_i$.
 
Consider a vertex $r_{C'}\in N_i$. If at the end of phase $i$ it has not been superclustered, it is added to the supercluster $\widehat{C}$ that was formed when $r_{C'}$ was added to $N_i$. Let $r_C$ be the center of $\widehat{C}$. The edge $(r_C,r_{C'})$ is added to $H$ with weight $d_G(r_C,r_{C'})$. This edge is charged to $r_{C'}$. 
See Figure \ref{fig cent supercluster2} for an illustration.

This completes the description of phase $i$.
Observe that the designation of a cluster center as popular or unpopular depends on the order in which the algorithm removes cluster centers from $S_i$. For example, consider the star graph $G=(V,E)$ with $V=\{u_0,u_1,\dots, u_n\}$ and $E=\{(u_0,u_i)| i\in [1,n] \}$. If in phase $0$ the algorithm begins by considering the cluster center $u_0$, then it is designated as popular. However, if the algorithm considers the cluster center $u_0$ last, then it will not be designated as a popular cluster, since it does not have any neighbors from $S_i\cup N_i$ (as at this point the sets $S_i,N_i$ are empty). Hence we cannot a-priori define a set of popular clusters.

\begin{figure}
	\begin{minipage}{.4\textwidth}
		\centering
		\includegraphics[scale=0.07]{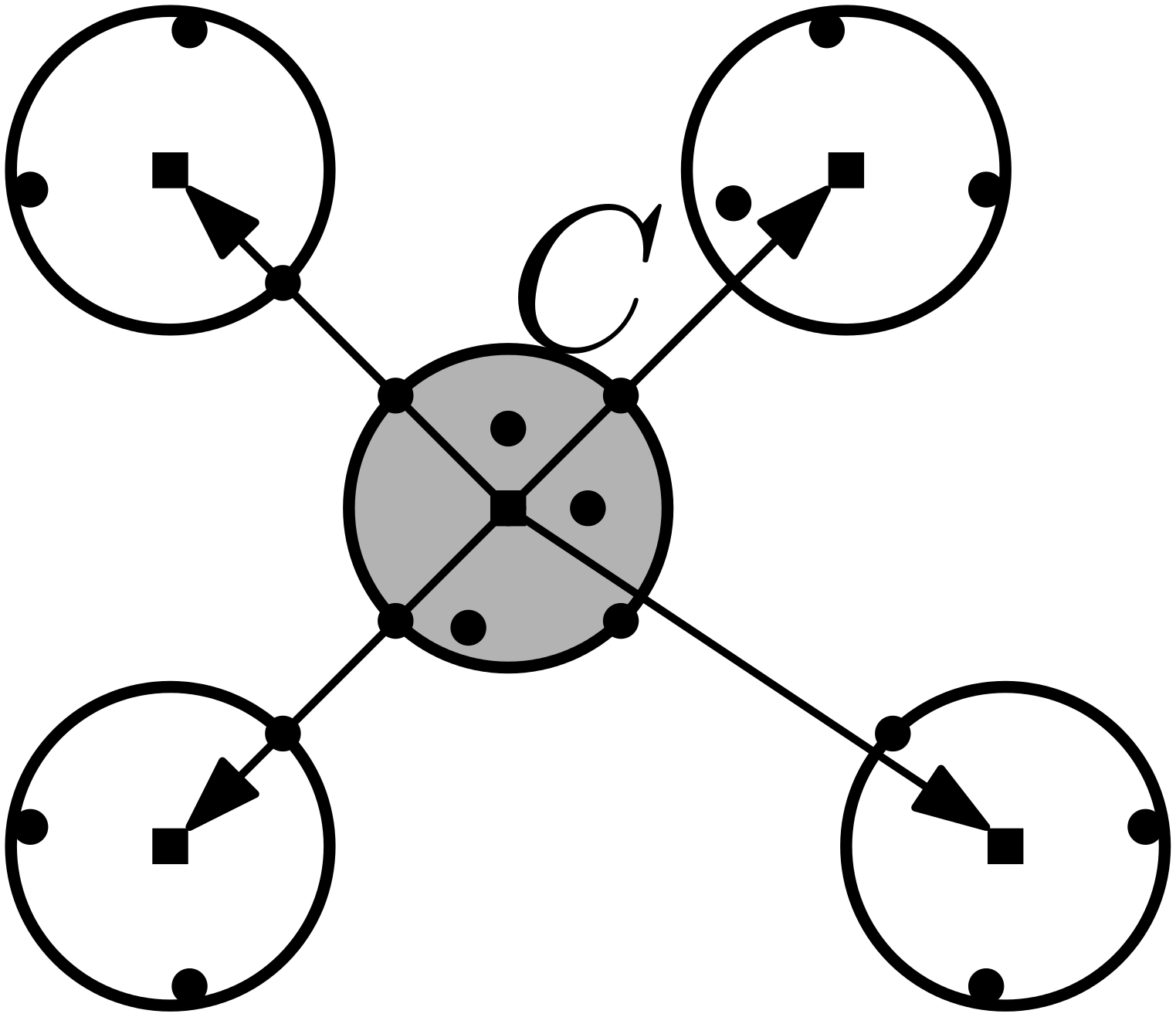}
		\caption{\textbf{Interconnection edges.} The considered (dark gray) cluster center does not have many neighbors. The direction of the edges indicates that they are charged to the center of $C$.}
		\label{fig cent intercon}
	\end{minipage}
	\begin{minipage}{.02\textwidth}
		\hspace{.05cm}
	\end{minipage}
	\begin{minipage}{.57\textwidth}
		\centering
		\includegraphics[scale=0.08]{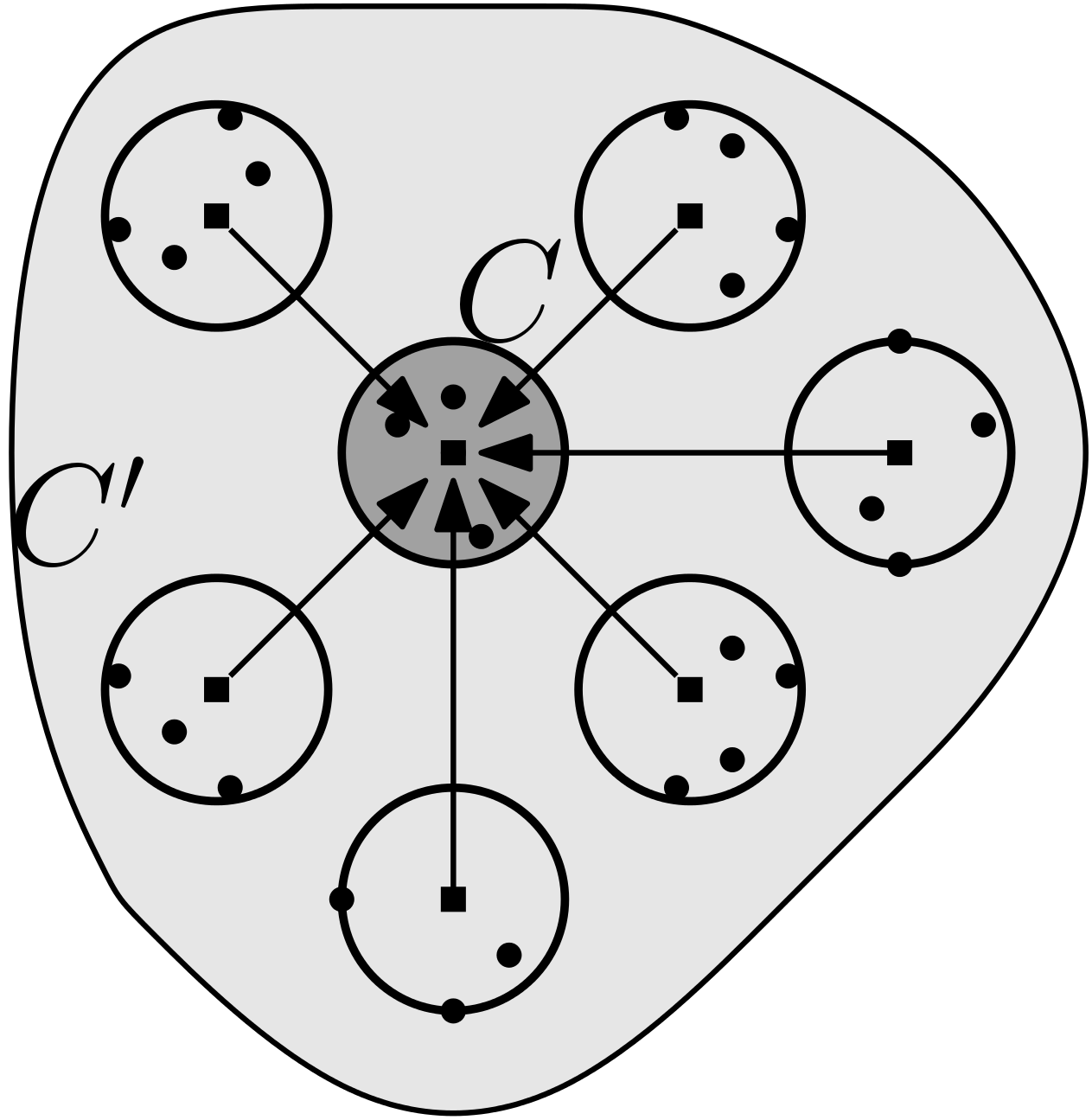}
		\caption{\textbf{Superclustering edges.} The considered cluster center (dark gray) has many neighbors. A supercluster $\widehat{C}$ is formed around $C$ and contains the neighbors $C'$ to which $C$ added an edge. The direction of the edges indicates that the white clusters are charged to the centers of the neighbors of $C$. }
		\label{fig cent supercluster}
	\end{minipage}%
\end{figure}
\begin{figure}
	\begin{minipage}{.48\textwidth}
		\centering
		\includegraphics[scale=0.13]{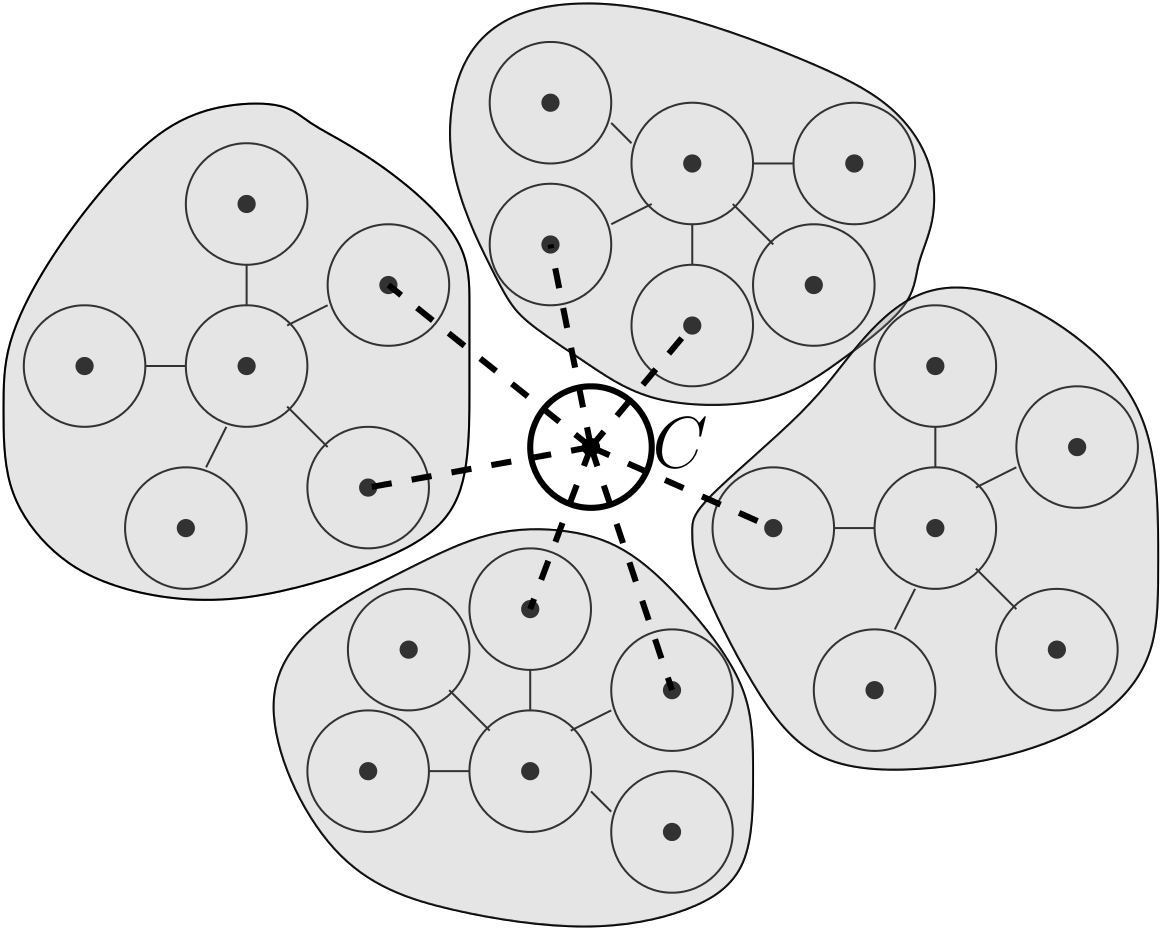}
		\caption{ The cluster $C$ (depicted as white circle) can have many neighbors in $P_i$ that have joined superclusters in phase $i$ (the gray curved areas) before its center was considered. When we consider the center $r_C$ of $C$, we cannot connect it with its neighboring cluster centers, since we cannot charge $r_C$ for the added edges, nor can we require its neighbors to be charged for them. In the figure, the solid and dotted lines represent emulator and graph edges, respectively. }
		\label{fig clustering problem}
	\end{minipage}
	\begin{minipage}{.02\textwidth}
		\hspace{.05cm}
	\end{minipage}
	\begin{minipage}{.48\textwidth}
		\centering
		\includegraphics[scale=0.07]{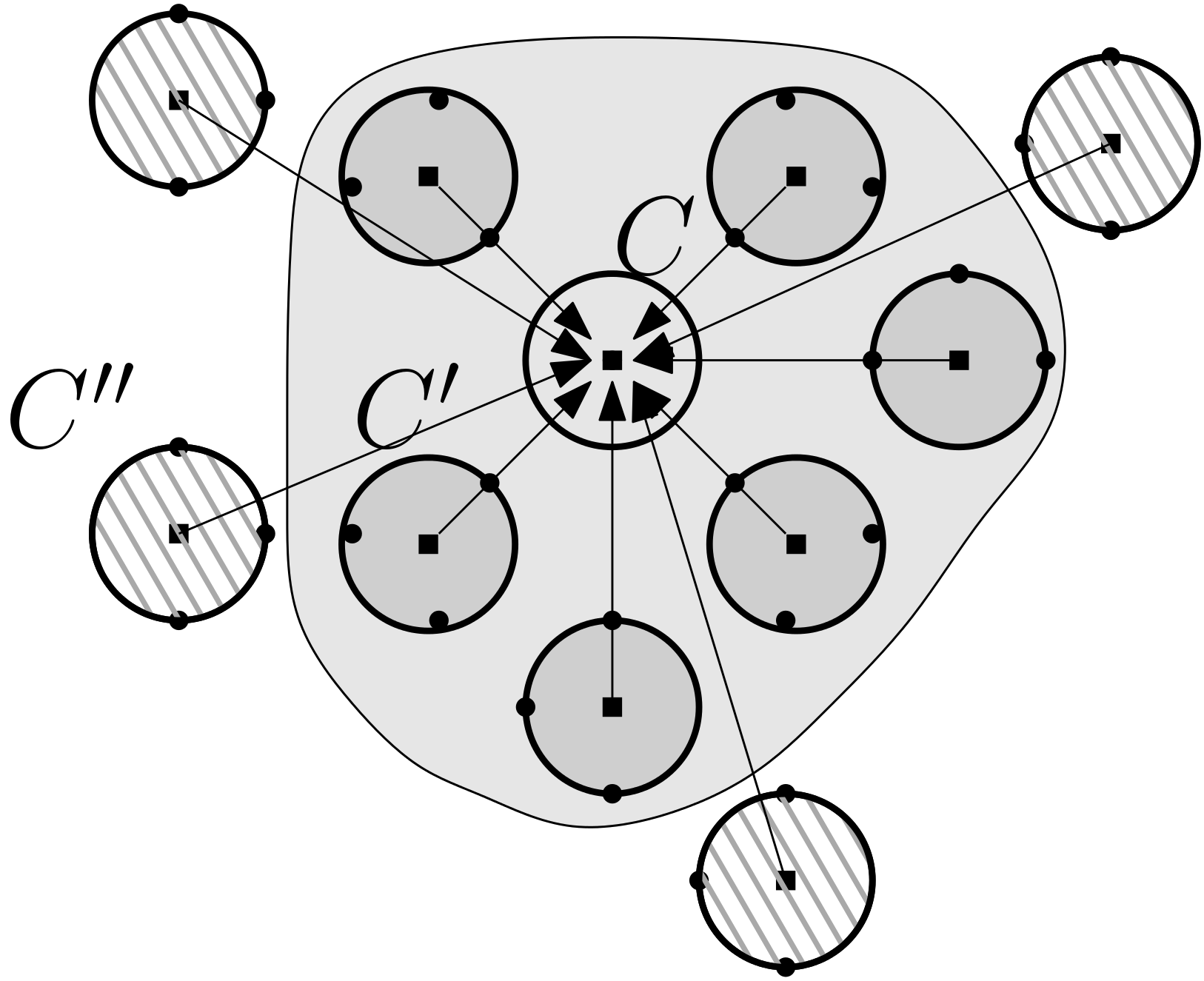}
		\caption{Clusters from $N_i$ join a neighboring supercluster. In the figure, the gray curved area represents a supercluster centered around $C$. The dark gray circles represent neighbors of $C$ that joined $\widehat{C}$ when it was formed. The stripped circles represent clusters that had centers in $S_i$ when $\widehat{C}$ was formed, and remained in $N_i$ until the end of phase $i$. At the end of phase $i$, they join the cluster $\widehat{C}$. Their cluster centers are charged with the edges that connect them with $\widehat{C}$. }
		\label{fig cent supercluster2}
	\end{minipage}%
\end{figure}
%This completes the description of the algorithm. 
%The pseudocode of the algorithm is given in Algorithm \ref{Alg process cluster cent}. 

%
%\begin{figure}
%		\centering
%		\includegraphics[scale=0.07]{"cent superc2".png}
%		\caption{Clusters from $N_i$ join a neighboring supercluster. In the figure, the gray curved area represents a supercluster centered around $C$. The dark gray circles represent neighbors of $C$ that joined $\widehat{C}$ when it was formed. The stripped circles represent clusters that had centers in $S_i$ when $\widehat{C}$ was formed, and remained in $N_i$ until the end of phase $i$. At the end of phase $i$, they join the cluster $\widehat{C}$. Their cluster centers are charged with the edges that connect them with $\widehat{C}$. }
%		\label{fig cent supercluster2}
%	\end{figure}
This completes the description of the algorithm. 
The pseudocode of the algorithm is given in Algorithm \ref{Alg process cluster cent}. 

\begin{algorithm}[H]
	\caption{Construction of a Near-Additive Emulator}
	\label{Alg process cluster cent}
	\begin{algorithmic}[1]
		\Statex \textbf{Input:} Graph $G=(V,E)$, a parameter $\ell\in \mathbb{N}$, and sequences $\langle deg_0,\dots,deg_\ell\rangle $, $\langle \delta_0,\dots,\delta_\ell\rangle $ 
		\State $P_{0} = \{ \{v\} \ | \ v\in V \}$
		\For {$i\in [0,\ell ]$}
		\State \label{line 3} $S_i \gets$ all centers of clusters from $P_i$
		\State \label{line 4} $U_i,N_i, P_{i+1} \gets \emptyset$
		\While {$S_i\neq \emptyset$}
		\State \label{line 6}remove a cluster center $r_C$ from $S_i$
		\For {all cluster centers $r_{C'}\in S_i\cup N_i$ s.t. $d_G(r_{C},r_{C'})\leq \delta_i$}\label{line 7}
		\State add to the emulator $H$ an edge $(r_{C},r_{C'}) $ with weight $d_G(r_{C},r_{C'})$\label{line 8}
		\EndFor
		\If {$r_C$ has less than $deg_i$ neighboring cluster centers in $ S_i\cup N_i$}\label{line 9}
		\State {add the cluster $C$ of $r_C$ to $U_i$}\label{line 10}
		\Else
		\State $\widehat{C}\gets C $ \label{line 12}
		\State $r_{\widehat{C}} \gets r_C$
		
		\For {all clusters $r_{C'}\in S_i\cup N_i$ such that $d_G(r_C,r_{C'})\leq \delta_i$}\label{line 14}
		\State remove $r_{C'}$ from $S_i\ or \ N_i$.\label{line 15}
		\State let $C'$ be the cluster centered at $r_{C'}$
		\State $\widehat{C}\gets \widehat{C} \cup \{C'\} $	
		\EndFor
		\For {all clusters $r_{C''}\in S_i$ such that $d_G(r_C,r_{C'})\leq 2\delta_i$}				
			\State $S_i = S_i \setminus \{r_{C''}\}$\label{line 17}
			\State $N_i = N_i \cup \{r_{C''}\}$\label{line 18}
		\EndFor
		\State $P_{i+1} \gets P_{i+1}\cup \{ \widehat{C}\}$\label{line 13}
		
		\EndIf
		\EndWhile
		\For{all cluster centers $r_{C''}\in N_i$} \label{line 19}
		
		\State let $\widehat{C}$ be the supercluster that was formed when $ r_{C''}$ joined $N_i$, and let $r_C$ be the center of $\widehat{C}$
		\State let $C''$ be the cluster centered at $r_{C''}$
		\State add to the emulator $H$ an edge $(r_{C},r_{C''}) $ with weight $d_G(r_{C},r_{C''})$\
		\State $\widehat{C}\gets \widehat{C}\cup \{C''\}$ 
		\EndFor
		\EndFor
	\end{algorithmic}
\end{algorithm}

\subsubsection{Setting Parameters}\label{sec set param}
In this section we specify the selection of the parameters $deg_i,\ell$ and $\delta_i$. 

The degree parameter $deg_i$ controls the number of edges added to the emulator, and also the number of phases required until we are left with a small number of clusters. 
For every $i\in [0,\ell]$, we set $deg_i=n^\frac{2^i}{\kappa}$.

Set $\ell = \centell$. In Section \ref{sec cent analysis of size} we show that $|P_{\ell}|\leq deg_\ell$. Therefore, there are no popular cluster centers in phase $\ell$, and superclusters are not formed during this phase. It follows that $P_{\ell+1}= \emptyset$ and $U_\ell = P_{\ell}$.

Define recursively $R_0= 0$, and for every $i\in [1,\ell]$ define $R_{i+1} = 2\delta_i+R_i$. The distance threshold parameter is defined by $\delta_i = \epsi+2R_i$, for every $i\in [0,\ell]$. Intuitively, $R_i$ is an upper bound on the radii of clusters in $P_i$, i.e., the maximal distance in the emulator $H$ between a center $r_C$ of a cluster $C\in P_i$ and a vertex $u\in C$. In Lemma \ref{lemma radpi leq ri} we prove that this inequality indeed holds. 
\subsection{Analysis of the Construction}
\label{sec cent analysis of const}
In Section \ref{sec cent analysis of size} we analyze the size of the emulator. 
In Section \ref{sec cent rt} we show that the algorithm can by executed in polynomial time. 
Finally, in Section \ref{sec cent stretch} we analyze the stretch of the emulator.

\subsubsection{Analysis of the Number of Edges}\label{sec cent analysis of size}

In this section, we analyze the size of the emulator $H$. We will charge each edge in the emulator $H$ to a single vertex.
We begin by proving that in the concluding phase $\ell$ there are no popular clusters. To do so we show that the size of $P_{\ell}$ is at most $deg_\ell$.

\begin{lemma}\label{lemma widehat} 
	For every index $i\in [0,\ell]$, each supercluster $\widehat{C}$ constructed in phase $i$ consists of at least $deg_i+1$ clusters from $P_i$. 
\end{lemma}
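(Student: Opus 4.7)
The plan is to read off the claim directly from the pseudocode, since a supercluster is only ever created in the branch where the popularity test succeeds. Fix a phase $i\in[0,\ell]$ and suppose that the algorithm creates a supercluster $\widehat{C}$ around a cluster center $r_C$. Referring to Algorithm \ref{Alg process cluster cent}, this can only happen in the \textbf{else} branch starting at line \ref{line 9}, which is entered precisely when
\[
|\{r_{C'}\in S_i\cup N_i : d_G(r_C,r_{C'})\le \delta_i\}| \;=\; |\Gamma(r_C)| \;\ge\; deg_i.
\]
On lines \ref{line 12}--\ref{line 13} the supercluster is then initialized to $\widehat{C}=\{C\}$ and augmented by every cluster $C'$ whose center lies in $\Gamma(r_C)$. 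Thus, already at the moment of its creation, $\widehat{C}$ contains $1+|\Gamma(r_C)|\ge deg_i+1$ clusters of $P_i$.

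To turn this count into a proper proof I would verify two small consistency points. First, each neighbor counted in $\Gamma(r_C)$ is the center of a distinct cluster of $P_i$: the set $S_i$ is initialized on line \ref{line 3} as the set of all centers of clusters in $P_i$, and the auxiliary set $N_i$ is only ever populated (lines \ref{line 17}--\ref{line 18}) from elements previously in $S_i$, so every center in $S_i\cup N_i$ belongs to a unique cluster of $P_i$. Second, $r_C\notin\Gamma(r_C)$: the cluster center $r_C$ is removed from $S_i$ on line \ref{line 6} before the exploration and, having never been placed in $N_i$, is not in $S_i\cup N_i$ during the neighborhood scan of line \ref{line 7}. Hence the ``$+1$'' contributed by $C$ itself is not double-counted against the $deg_i$ neighbors.

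Finally, I would note that clusters that later join $\widehat{C}$ from $N_i$ at the cleanup loop on line \ref{line 19} can only increase $|\widehat{C}\cap P_i|$, so the bound $|\widehat{C}\cap P_i|\ge deg_i+1$ is preserved until the end of phase $i$. I do not foresee any real obstacle: the lemma is essentially a re-reading of the popularity condition guarding supercluster creation, and the entire argument is a short structural observation about the algorithm.
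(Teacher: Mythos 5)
Your proof is correct and follows the same route as the paper's: the popularity test guarding supercluster creation guarantees at least $deg_i$ neighboring centers in $S_i\cup N_i$, and together with the cluster $C$ itself this yields at least $deg_i+1$ clusters in $\widehat{C}$. The extra consistency checks you add (distinctness of the counted clusters, $r_C\notin\Gamma(r_C)$) are sound and only make the argument more careful than the paper's version.
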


\begin{proof}
	Let $i\in[0,\ell]$, and let $\widehat{C}$ be a supercluster that was created around a cluster $C$ in phase $i$. The algorithm added edges from $C$ to at least $deg_i$ clusters from $S_i\cup N_i$. These clusters, and $C$ itself, all became superclustered into $\widehat{C}$. Thus, $\widehat{C}$ contains at least $deg_i+1$ clusters from $P_i$. 
\end{proof}

In the next lemma, we argue that superclusters are disjoint, and thus Lemma \ref{lemma widehat} can be used to bound the number of superclusters formed during each phase.

\begin{lemma}\label{lemma disjoint}
	For $i\in [0,\ell-1]$, all superclusters formed during phase $i$ are pairwise disjoint.
\end{lemma}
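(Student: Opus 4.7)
The plan is to show that each cluster $C \in P_i$ joins at most one supercluster during phase $i$, which immediately yields pairwise disjointness. I will do this by enumerating every way the algorithm can place a cluster into a supercluster and checking that each such event terminates $C$'s further participation in phase $i$.

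There are exactly three ways a cluster $C$ can be absorbed into a supercluster $\widehat{C}$ formed in phase $i$: (a) $C$ is the seed cluster, i.e., $\widehat{C}$ is initialized as $C$ at line \ref{line 12} after $r_C$ was just popped from $S_i$; (b) $r_C$ lies in $S_i \cup N_i$ and is within distance $\delta_i$ of the new supercluster's center, so $C$ is added in the inner for loop of lines \ref{line 14}--\ref{line 15}; (c) $r_C$ is still present in $N_i$ when the while loop terminates, and $C$ is absorbed by the final for loop starting at line \ref{line 19} into the supercluster to which $r_C$ was associated when first placed in $N_i$. I will verify that cases (a) and (b) both end with $r_C \notin S_i \cup N_i$: in (a) the removal happens at line \ref{line 6}, and in (b) at line \ref{line 15}. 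This invariant must be maintained for the remainder of the while loop, since no step ever re-inserts a vertex into $S_i$ or $N_i$ once it has been removed during a superclustering event.

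The main subtlety is ruling out that $r_C$ could simultaneously fall under case (b) for one supercluster and case (c) for another. For this I need a separate monotonicity observation: the set $S_i$ is only ever shrunk after its initialization on line \ref{line 3}, and a center is inserted into $N_i$ only through the transfer at lines \ref{line 17}--\ref{line 18}, which requires the center to currently lie in $S_i$. Consequently, each center $r_C$ enters $N_i$ at most once, so there is a well-defined ``the supercluster formed when $r_{C}$ joined $N_i$'' referred to in line \ref{line 19}. Moreover, if after this single insertion into $N_i$, the center $r_C$ is picked up in the inner for loop of some later supercluster (case (b)), then $r_C$ is removed from $N_i$ at line \ref{line 15} and therefore cannot be reprocessed by the post-processing loop.

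Putting these observations together, each cluster $C \in P_i$ is charged to a supercluster at most once during phase $i$, so the superclusters formed in phase $i$ are pairwise disjoint. The main obstacle here is purely bookkeeping: the algorithm has two distinct phases of supercluster assignment (the main while loop and the tail loop over $N_i$), and one must be careful that a center placed in $N_i$ and subsequently reclassified during the while loop does not end up counted in both places. This is resolved by the monotonicity argument sketched above, which ensures a clean partition of the cluster centers into the three disjoint cases.
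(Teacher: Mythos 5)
Your proof is correct and follows essentially the same route as the paper's: both arguments rest on the invariant that a cluster can only join a supercluster while its center is in $S_i\cup N_i$, and that joining removes the center from both sets permanently, so no cluster is absorbed twice. Your version is simply a more explicit case analysis (seed, inner-loop absorption, tail loop over $N_i$) of the same monotonicity argument.
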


\begin{proof}	
	Let $\widehat{C}$ be a supercluster formed during phase $i$. 
	Recall that all centers of clusters that have joined $\widehat{C}$ were in $S_i\cup N_i$ until they joined $\widehat{C}$. Also recall that once a cluster joins a supercluster, its center is removed from $S_i$ and from $N_i$, and it is not added to $S_i$ or $N_i$ in the future.
	 
	On the one hand, this implies that all clusters that have joined $\widehat{C}$ did not join any other supercluster before they joined $\widehat{C}$. On the other hand, once a cluster joined $\widehat{C}$, its center is removed from $S_i$ and $N_i$, and therefore it will not join another supercluster in future. Hence, we conclude that all superclusters formed during phase $i$ are pairwise disjoint.
\end{proof}

 In the next lemma we provide an upper bound in the size of $P_i$, for every index $i\in [0,\ell]$.

\begin{lemma}\label{lemma bound pi}
	For $i\in [0,\ell]$, we have 
	$\quad|P_i| \leq n^{1-\frac{2^i-1}{\kappa}}.$
\end{lemma}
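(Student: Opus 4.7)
My plan is to prove the bound by induction on $i$, using Lemmas \ref{lemma widehat} and \ref{lemma disjoint} for the inductive step.

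For the base case $i=0$, we have $P_0 = \{\{v\} \mid v\in V\}$, so $|P_0| = n$, which matches the bound $n^{1-\frac{2^0-1}{\kappa}} = n^{1-0} = n$.

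For the inductive step, suppose $|P_i|\leq n^{1-\frac{2^i-1}{\kappa}}$ for some $i\in [0,\ell-1]$. Every cluster in $P_{i+1}$ is a supercluster $\widehat{C}$ formed during phase $i$. By Lemma \ref{lemma widehat}, each such $\widehat{C}$ consists of at least $deg_i+1 = n^{2^i/\kappa}+1$ clusters of $P_i$, and by Lemma \ref{lemma disjoint} these superclusters are pairwise disjoint (as families of $P_i$-clusters). Hence
\[
|P_{i+1}| \;\leq\; \frac{|P_i|}{deg_i+1} \;\leq\; \frac{|P_i|}{deg_i} \;\leq\; \frac{n^{1-\frac{2^i-1}{\kappa}}}{n^{2^i/\kappa}} \;=\; n^{1-\frac{2^i-1}{\kappa}-\frac{2^i}{\kappa}} \;=\; n^{1-\frac{2^{i+1}-1}{\kappa}},
\]
which completes the induction.

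There is no real obstacle here; the lemma is essentially a bookkeeping consequence of the two preceding lemmas, and the only thing to verify is the arithmetic identity $\frac{2^i-1}{\kappa} + \frac{2^i}{\kappa} = \frac{2^{i+1}-1}{\kappa}$ in the exponent. A minor point worth noting is that the argument needs only that superclusters of phase $i$ are disjoint as sets of clusters in $P_i$ (which is exactly what Lemma \ref{lemma disjoint} provides via the removal of centers from $S_i\cup N_i$ once they join a supercluster), so the bound $|P_{i+1}| \leq |P_i|/(deg_i+1)$ is justified without any double-counting concerns.
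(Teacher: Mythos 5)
Your proof is correct and matches the paper's own argument: the paper likewise proceeds by induction, invoking Lemmas \ref{lemma widehat} and \ref{lemma disjoint} to get $|P_{i+1}|\leq |P_i|\cdot(deg_i+1)^{-1}\leq |P_i|\cdot n^{-2^i/\kappa}$ and then doing the same exponent arithmetic.
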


\begin{proof}
	The proof is by induction on the index $i$. 
	For $i=0$, the right-hand side of the equation is equal to $n$. Thus the claim is trivial. 
	
	For the induction step, assume that the claim holds for some $i\in [0,\ell-1]$. By Lemmas \ref{lemma widehat} and \ref{lemma disjoint} and the induction hypothesis, we obtain:
	\begin{equation*}
	\begin{array}{lclclclcl}
	|P_{i+1}|\leq n^{1-\frac{2^i-1}{\kappa}}\cdot (deg_i+1)^{-1}\leq n^{1-\frac{2^i-1}{\kappa}}\cdot n^{-\frac{2^i}{\kappa}}
	\leq n^{1-\frac{2^{i+1}-1}{\kappa}}.
	\end{array}
	\end{equation*}
	Hence the claim holds also for $i+1$. 
\end{proof}

Recall that $\ell = \centell$. Observe that Lemma \ref{lemma bound pi} implies that
\begin{equation}\label{eq degl pl cent}
\begin{array}{lclclclclclclclc}
|P_{\ell}|\leq n^{1-\frac{2^\ell-1}{\kappa}} \leq n^\frac{2^\ell}{\kappa} = deg_{\ell} .
\end{array}
\end{equation}

Therefore, in phase $\ell$ there are no popular clusters. It follows that $P_{\ell+1}$ is an empty set, and that $U_\ell = P_\ell$.

Next, we examine the edges added by each phase $i$ of the algorithm, and charge each edge to 
a center of a cluster $C\in P_i$. 
Recall that there are two types of edges in the emulator:
\begin{enumerate}
	\item \textit{Interconnection edges}, added when the algorithm %Algorithm \ref{Alg process cluster cent} 
	considered an unpopular cluster center $r_C$. These edges are charged to $r_C$. See Figure \ref{fig cent intercon}.
	
	\item \textit{Superclustering edges}, added when a cluster ${C'}$ joined a supercluster $\widehat{C}$ that was formed around a cluster $C$, where $C\neq C'$. See Figures \ref{fig cent supercluster}, \ref{fig cent supercluster2} for an illustration.
	These edges are charged to the centers of clusters $C'$ that were superclustered into the new supercluster formed around $C$. For example, if for some $h\geq 1$, clusters $C_1,C_2,\dots,C_h$, centered at vertices $v_1,v_2,\dots,v_h$, respectively, are clustered into a supercluster rooted at a cluster $C$, then each of these centers $v_1,v_2,\dots,v_h$ is charged with a single edge. Note that the center of the cluster $C$ is not charged with any edges in phase $i$. 
\end{enumerate}

Interconnection edges that were added in phase $i$ are charged to centers of clusters $C\in U_i$. Observe that a cluster $C$ has joined $U_i$ only if its center has added less than $deg_i$ edges to the emulator $H$. Therefore, phase $i$ adds at most $|U_i|\cdot deg_i$ interconnection edges to the emulator $H$. 
(Note that $U_i$ might be empty.)

Superclustering edges that were added in phase $i$ are charged to centers of clusters that did not join $U_i$, and also that no supercluster was formed around them in phase $i$. Thus, phase $i$ adds exactly $|P_i|-|U_i|-|P_{i+1}|$ superclustering edges. 

Hence, in phase $i\in [0,\ell]$, the number of edges added to the emulator $H$ is at most:
\begin{equation}\label{eq bound number i}
|U_i|\cdot deg_i + |P_i|-|U_i|-|P_{i+1}| = |P_i|+|U_i|\cdot (deg_i-1)-|P_{i+1}|. 
\end{equation}
In particular, this bound applies to the last phase $i=\ell$. Recall that by \cref{eq degl pl cent} we have $|P_{\ell}|= |U_\ell|\leq deg_\ell$, and also $|P_{\ell+1}| = \emptyset$. Therefore the bound becomes just $|P_{\ell}|\cdot deg_\ell$. 

We will now use the size of $P_{i+1}$ to bound the size of $U_i$. 
Observe that by Lemma \ref{lemma widehat} and because superclusters of $P_{i+1}$ are disjoint, we have that for all $i\in [0,\ell]$,
\begin{equation}
\label{eq bound ui cent}
|U_i|\leq |P_i|-|P_{i+1}|\cdot (deg_i +1).
\end{equation}

By \cref{eq bound number i,eq bound ui cent} we have that in phase $i$, the number of edges added to the emulator $H$ is at most:
\begin{equation}\label{eq bound number i2}
\begin{array}{rlclclclc}

	 & |P_i|+|U_i|\cdot (deg_i-1)-|P_{i+1}| \\
\leq & |P_i|+(|P_i|-|P_{i+1}|\cdot (deg_i +1))\cdot (deg_i-1)-|P_{i+1}|\\
=& |P_i|\cdot deg_i-|P_{i+1}|\cdot (deg_i^2 - 1)-|P_{i+1}|\\
=& |P_i|\cdot deg_i-|P_{i+1}|\cdot deg_i^2 .
\end{array}
\end{equation}

We are now ready to bound the size of the emulator $H$.
\begin{lemma}
	\label{lemma bound size cent all}
	The number of edges in the emulator $H$ satisfies 
	$|H| \leq
	\nfrac .$
\end{lemma}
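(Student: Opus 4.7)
The plan is to sum the per-phase upper bound from \cref{eq bound number i2} over all phases $i\in[0,\ell]$ and observe that the resulting sum telescopes, leaving only the $i=0$ term $|P_0|\cdot deg_0 = n\cdot n^{1/\kappa}$.

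First I would write
\begin{equation*}
|H| \;\leq\; \sum_{i=0}^{\ell-1}\Bigl(|P_i|\cdot deg_i - |P_{i+1}|\cdot deg_i^{\,2}\Bigr) \;+\; |P_\ell|\cdot deg_\ell,
\end{equation*}
using \cref{eq bound number i2} for the phases $i\in[0,\ell-1]$ and using the special bound $|P_\ell|\cdot deg_\ell$ for phase $\ell$ (justified right after \cref{eq bound number i}, since $|P_{\ell+1}|=0$ and $|U_\ell|=|P_\ell|\leq deg_\ell$ by \cref{eq degl pl cent}). The crucial arithmetic identity that drives the telescoping is
\begin{equation*}
deg_i^{\,2} \;=\; \bigl(n^{2^i/\kappa}\bigr)^{2} \;=\; n^{2^{i+1}/\kappa} \;=\; deg_{i+1},
\end{equation*}
so the ``negative'' term at index $i$ matches the ``positive'' term at index $i+1$.

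Substituting $deg_i^{\,2}=deg_{i+1}$ and shifting the index of the second sum yields
\begin{equation*}
|H| \;\leq\; |P_0|\cdot deg_0 + \sum_{i=1}^{\ell-1}|P_i|\cdot deg_i \;-\; \sum_{j=1}^{\ell-1}|P_j|\cdot deg_j \;-\; |P_\ell|\cdot deg_\ell + |P_\ell|\cdot deg_\ell \;=\; |P_0|\cdot deg_0.
\end{equation*}
Since $|P_0|=n$ and $deg_0=n^{1/\kappa}$, this gives $|H|\leq n^{1+1/\kappa}=\nfrac$, as required.

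I do not expect any genuine obstacle here, since all the real work was done in establishing \cref{eq bound number i2}, Lemma \ref{lemma bound pi}, Lemma \ref{lemma widehat}, and Lemma \ref{lemma disjoint}. The only thing one has to be slightly careful about is the treatment of the final phase $\ell$: one must invoke \cref{eq degl pl cent} to replace the generic phase-$i$ bound (which would involve $|P_{\ell+1}|$) by the tighter $|P_\ell|\cdot deg_\ell$, and then verify that this last term is exactly what is needed to complete the telescoping cancellation with the $-|P_\ell|\cdot deg_\ell$ produced at index $i=\ell-1$ of the second sum. This is precisely why the choice of $\ell$ in Section \ref{sec set param}, which ensures $|P_\ell|\leq deg_\ell$, fits so neatly with the degree sequence $deg_i = n^{2^i/\kappa}$ satisfying $deg_i^2=deg_{i+1}$.
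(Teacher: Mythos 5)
Your proposal is correct and matches the paper's own argument: both sum the per-phase bound from \cref{eq bound number i2} over all phases, handle the last phase via $|P_{\ell+1}|=\emptyset$ so that its contribution is $|P_\ell|\cdot deg_\ell$, and cancel everything except $|P_0|\cdot deg_0=n^{1+1/\kappa}$ using the identity $deg_{i+1}=deg_i^2$ (the paper writes this as $deg_i-deg_{i-1}^2=0$ after re-indexing, which is the same telescoping).
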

\begin{proof}
	By \cref{eq bound number i2}, and since $P_{\ell+1}$ is an empty set, we obtain that the number of edges added by all phases of the algorithm is at most:
	\begin{equation*}
	\begin{array}{rllclclclclclc}
	%	\sum\limits_{i=0}^{\ell} (|P_i|+|U_i|\cdot (deg_i-1)-|P_{i+1}| )
	%	
	%	\leq& 
	\sum\limits_{i=0}^{\ell}(|P_i|\cdot deg_i-|P_{i+1}|\cdot deg_i^2)
	
	=& 
	
	|P_{0}|\cdot deg_0 + \sum\limits_{i=1}^{\ell} |P_i|\cdot (deg_i - deg^2_{i-1}).
	\end{array}
	\end{equation*}
	
	Recall that for all $i\in [0,\ell]$ we have $deg_i = n^\frac{2^i}{\kappa}$ thus $deg_i-deg_{i-1}^2 = 0$. Also, recall that $|P_{0}| = n$. Thus, the number of edges added to the emulator by all phases $i\in [0,\ell]$ is at most $\nfrac.$
\end{proof}

\subsubsection{Analysis of the Stretch}\label{sec cent stretch}

In this section we analyze the stretch of the emulator $H$. 
We begin by providing an upper bound on the radii of clusters in $P_i$.

For an index $i\in [0,\ell]$ and a cluster $C\in P_i$ centered around a vertex $r_C$, the radius of $C$ is defined to be $Rad(C) = { \max\{ d_H(r_C,v) \ | \ v\in C \} }$. The radius of the collection of clusters $P_i$ is defined to be $Rad(P_i) = { \max\{ Rad(C) \ | \ C\in P_i \} }$. We begin by proving that $R_i$ is an upper bound on the radii of clusters in $P_i$, for all $i\in [0,\ell]$. Recall that $\delta_i = \epsi+2R_i$, for every $i\in [0,\ell]$. Also, recall that 
$R_0= 0$, and for every $i\in [1,\ell]$, we have $R_{i+1} = 2\delta_i+R_i $.

\begin{lemma}\label{lemma radpi leq ri}
	For every index $i\in [0,\ell]$, we have $Rad(P_i)\leq R_i$.
\end{lemma}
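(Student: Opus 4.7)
The plan is to proceed by induction on $i\in [0,\ell]$. The base case $i=0$ is immediate: $P_0$ consists of singleton clusters, so $Rad(P_0) = 0 = R_0$.

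For the inductive step, assume $Rad(P_i) \leq R_i$ and fix an arbitrary supercluster $\widehat{C}\in P_{i+1}$ formed around a cluster $C\in P_i$ during phase $i$, where $r_{\widehat{C}} = r_C$. I would classify the vertices of $\widehat{C}$ according to how their original $P_i$-cluster came to be part of $\widehat{C}$: either (i) the vertex already belongs to $C$ itself, or (ii) it belongs to some cluster $C'\in P_i$ whose center satisfies $d_G(r_C,r_{C'})\leq \delta_i$ and joined $\widehat{C}$ during the superclustering step (lines \ref{line 14}--\ref{line 13}), or (iii) it belongs to some cluster $C''\in P_i$ whose center was placed in $N_i$ (at distance in $(\delta_i, 2\delta_i]$ from $r_C$) and remained unsuperclustered by the end of phase $i$, so was absorbed into $\widehat{C}$ by the loop at line \ref{line 19}.

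The heart of the argument is then a simple triangle-inequality bound in $H$. For any $v\in \widehat{C}$ lying in a cluster $C'\in P_i$ of type (ii), the algorithm inserts into $H$ an edge $(r_C,r_{C'})$ of weight $d_G(r_C,r_{C'}) \leq \delta_i$, so by the induction hypothesis
\begin{equation*}
d_H(r_C,v) \leq d_H(r_C,r_{C'}) + d_H(r_{C'},v) \leq \delta_i + R_i \leq 2\delta_i + R_i = R_{i+1}.
\end{equation*}
For $v$ in a cluster of type (iii), the edge $(r_C,r_{C''})$ added at line \ref{line 19} has weight $d_G(r_C,r_{C''}) \leq 2\delta_i$, and the same triangle inequality yields $d_H(r_C,v) \leq 2\delta_i + R_i = R_{i+1}$. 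Case (i) is subsumed in either estimate since $d_H(r_C,v)\leq R_i \leq R_{i+1}$. Taking the maximum over $v$ gives $Rad(\widehat{C})\leq R_{i+1}$, and since $\widehat{C}$ was an arbitrary cluster of $P_{i+1}$, the inductive step is complete.

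I do not expect any real obstacle here: the key observation is just that the algorithm deliberately inserts, for every cluster that joins $\widehat{C}$, an emulator edge from $r_C$ to that cluster's center whose weight equals (at most) its $G$-distance to $r_C$, and that distance is bounded by $\delta_i$ in the superclustering case and by $2\delta_i$ in the $N_i$ case. The only mild subtlety to double-check is that every cluster absorbed via $N_i$ was placed there from a distance at most $2\delta_i$ of the center of the very supercluster that eventually absorbs it --- which is exactly how $N_i$ is populated (line \ref{line 18}) and how the end-of-phase loop assigns absorption (line \ref{line 19}).
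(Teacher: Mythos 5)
Your proof is correct and follows essentially the same inductive triangle-inequality argument as the paper; the only difference is that you split the paper's single "joined cluster" case into the $\delta_i$ (superclustering-step) and $2\delta_i$ ($N_i$-absorption) subcases, whereas the paper bounds both at once by $d_G(r_C,r_{C'})\leq 2\delta_i$.
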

\begin{proof}
	The proof is by induction on the index of the phase $i$. For $i=0$, all clusters in $P_i$ are singletons, and also $R_0=0$, and so the claim holds. 
	
	Assume the claim holds for some index $i\in [0,\ell-1]$ and prove that it holds for $i+1$. 
	Consider a cluster $\widehat{C}\in P_{i+1}$. This cluster was formed around a vertex $r_C$ during phase $i$. Consider a vertex $u\in \widehat{C}$. 
	
	\textbf{Case 1:} The vertex $u$ belonged to the cluster of $r_C$ in $P_i$. In this case, by the induction hypothesis we have $d_H(r_C,u)\leq R_i\leq R_{i+1}$.

	\textbf{Case 2:} The vertex $u$ belonged to a cluster $C'\in P_i$, where $r_C\notin C'$. Denote by $r_{C'}$ the center of the cluster $C'$.
	Since the center $r_{C'}$ joined the supercluster of $r_C$, we conclude that $d_G(r_C,r_{C'})\leq 2\delta_i$. When $r_{C'}$ joined the supercluster $\widehat{C}$, the edge $(r_C,r_{C'})$ was added to the emulator $H$, with weight $d_G(r_C,r_{C'})$. Thus, $d_H(r_C,r_{C'}) \leq 2\delta_i$. By the induction hypothesis, we also have $d_H(r_{C'},u)\leq R_i$. Hence, 
	$$d_H(r_C,u)\leq d_H(r_C,r_{C'})+d_H(r_{C'},u)\leq 2\delta_i+R_i = R_{i+1}.$$
\end{proof}

We now provide an upper bound on $R_i$. 
Observe that for every $i\in [1,\ell]$, we have $R_{i+1} = 2\delta_i+R_i = 2\epsi+5R_i$. 

\begin{lemma}
	\label{lemma bound ri}
	For every index $i\in [0,\ell]$, we have 
	$$R_i = {2}\cdot\sum_{j=0}^{i-1} \epsilon^{-j}\cdot 5^{i-1-j}.$$
\end{lemma}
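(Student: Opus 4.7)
The plan is to prove the closed form by straightforward induction on $i$, using the one-step recurrence already implicit in the excerpt.

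First, I would make the recurrence explicit. Since $\delta_i = \epsilon^{-i} + 2R_i$ and $R_{i+1} = 2\delta_i + R_i$, substituting one into the other gives
\[
R_{i+1} = 2\epsilon^{-i} + 5R_i,
\]
which is exactly the observation noted just before the lemma statement. This is the only recurrence I need; everything else is bookkeeping on a geometric-like sum.

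Next, I would set up the induction. For the base case $i=0$, the claimed sum $\sum_{j=0}^{-1}$ is empty, so the right-hand side is $0$, matching $R_0 = 0$ from the definition. For the inductive step, assume $R_i = 2\sum_{j=0}^{i-1} \epsilon^{-j} 5^{i-1-j}$. Plugging this into the recurrence,
\[
R_{i+1} = 2\epsilon^{-i} + 5 \cdot 2\sum_{j=0}^{i-1} \epsilon^{-j} 5^{i-1-j} = 2\epsilon^{-i} + 2\sum_{j=0}^{i-1} \epsilon^{-j} 5^{i-j}.
\]
The first term is precisely the $j=i$ summand of $2\sum_{j=0}^{i} \epsilon^{-j} 5^{i-j}$ (since $5^{i-i}=1$), so absorbing it into the sum yields
\[
R_{i+1} = 2\sum_{j=0}^{i} \epsilon^{-j} 5^{i-j},
\]
which is the claim for index $i+1$.

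There is no real obstacle here; the only thing to watch is indexing conventions (the ``$i-1-j$'' in the exponent of $5$ shifts correctly to ``$i-j$'' after one induction step, and the empty-sum convention for $i=0$ must be used). So the proof is essentially two lines of algebra wrapped around an induction, and the main work was already done in deriving the recurrence $R_{i+1} = 2\epsilon^{-i} + 5R_i$.
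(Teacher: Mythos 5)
Your proof is correct and follows exactly the same route as the paper: it uses the recurrence $R_{i+1} = 2\epsilon^{-i} + 5R_i$ (which the paper also derives just before the lemma) and runs a straightforward induction, with the base case handled by the empty-sum convention and the inductive step absorbing the $2\epsilon^{-i}$ term as the $j=i$ summand. No issues.
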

\begin{proof}
The proof is by induction on the index $i$. For $i=0$, both sides of the equation are equal to $0$. So the base case holds. 

Assume that the claim holds for some index $i\in [0,\ell-1]$, and prove that it holds for $i+1$. 
By definition and the induction hypothesis we have: 
\begin{equation*}
\begin{array}{clllll}
R_{i+1} =& 2 \epsi+5\cdot
{2}\cdot\sum_{j=0}^{i-1}
\epsilon^{-j}\cdot 5^{i-1-j}
%\\
%&=& 2 \cdot \epsi+\left(
% 	{2}\cdot\sum_{j=0}^{i-1}
% 	\epsilon^{-j}\cdot 5^{i-j}\right)\\
=& 
{2}\cdot\sum_{j=0}^{i}
\epsilon^{-j}\cdot 5^{i-j}
\end{array}
\end{equation*}
\end{proof} 

By Lemma \ref{lemma bound ri}, we derive the following explicit bound on $R_i$, for all $i\in [0,\ell]$. 
\begin{equation*}
\begin{array}{clclclclclclc}
R_i =
2\cdot 5^{i-1}\sum_{j=0}^{i-1} (5\epsilon)^{-j}\leq
2\cdot 5^{i-1}\cdot \left(\frac{1}{5\epsilon}\right)^{i-1}\cdot \left(\frac{1}{1-5\epsilon}\right) = \frac{2}{1-5\epsilon}\cdot \eps{i-1}.
\end{array}
\end{equation*}

Assume that $\epsilon \leq 1/10$. It follows that 
\begin{equation}
\label{eq exp ri}
R_i \leq 4 \eps{i-1}.
\end{equation}

Next, we show that the emulator $H$ contains edges that connect every center of a cluster in $U_i$ with all its neighboring cluster centers. 
\begin{lemma}\label{lemma neighboring clusters cent}
	Let $i\in [0,\ell]$ and let $r_C$ be a center of a cluster $C\in U_i$. Then, for every neighboring cluster center $r_{C'}$ of $r_C$, we have $$d_H(r_C,r_{C'})=d_G(r_C,r_{C'}).$$
\end{lemma}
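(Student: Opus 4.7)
The plan is to show $d_H(r_C,r_{C'}) = d_G(r_C,r_{C'})$ by establishing the two inequalities separately. The inequality $d_H(r_C,r_{C'}) \geq d_G(r_C,r_{C'})$ is immediate: every edge inserted into $H$ is assigned its true $G$-distance as weight, so any path in $H$ corresponds to a walk in $G$ of the same total length. Thus the bulk of the argument is to show that the single edge $(r_C,r_{C'})$ of weight $d_G(r_C,r_{C'})$ is actually added to $H$ at some point during phase $i$.

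Let $t_C$ denote the moment in phase $i$ at which $r_C$ is removed from $S_i$ at line~\ref{line 6}. Since $r_C\in U_i$, the cluster center $r_C$ must lie in $S_i$ continuously from the start of phase $i$ until $t_C$ (once a cluster center leaves $S_i$ it never comes back). I would split into two cases according to whether $r_{C'}\in S_i\cup N_i$ at time $t_C$. If yes, then because $d_G(r_C,r_{C'})\leq \delta_i$, the edge $(r_C,r_{C'})$ is added on the spot at lines \ref{line 7}--\ref{line 8}, and we are done. If no, then $r_{C'}$ was already removed from $S_i\cup N_i$ before $t_C$.

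The main (and really the only nontrivial) step is analyzing the latter case, and showing that the only possibility consistent with $r_C\in U_i$ is that $r_{C'}$ was itself processed earlier as an unpopular cluster center. Enumerate the ways $r_{C'}$ could have left $S_i\cup N_i$ before $t_C$: (a) $r_{C'}$ was processed at line~\ref{line 6} and joined $U_i$; (b) $r_{C'}$ was processed and became popular, forming a supercluster; (c) $r_{C'}$ joined a supercluster $\widehat{C'''}$ built around some other processed center $r_{C'''}$ via lines \ref{line 14}--\ref{line 15}. In case (a), when $r_{C'}$ was processed, $r_C$ was still in $S_i$ and within distance $\delta_i$ of $r_{C'}$, so the edge $(r_{C'},r_C)$ was inserted then. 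In case (b), the same proximity would force $r_C$ itself to join the supercluster around $r_{C'}$, placing $r_C$ in $P_{i+1}$ rather than $U_i$, a contradiction. In case (c), the triangle inequality gives $d_G(r_{C'''},r_C)\leq d_G(r_{C'''},r_{C'})+d_G(r_{C'},r_C)\leq 2\delta_i$, so at the moment $r_{C'''}$ was processed, $r_C$ was either absorbed into $\widehat{C'''}$ (if within $\delta_i$) or moved from $S_i$ to $N_i$ (lines \ref{line 17}--\ref{line 18}); in both subcases $r_C$ ultimately lands in a supercluster of $P_{i+1}$ (either immediately or at line~\ref{line 19}), again contradicting $r_C\in U_i$.

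Thus the only surviving possibilities are the direct insertion at time $t_C$ and case (a), and in both the edge $(r_C,r_{C'})$ of weight $d_G(r_C,r_{C'})$ sits inside $H$, yielding $d_H(r_C,r_{C'})\leq d_G(r_C,r_{C'})$ and completing the proof. The main obstacle, as sketched above, is the bookkeeping in case (c): one must invoke the $2\delta_i$-radius purge of $S_i$ performed whenever a supercluster forms, and combine it with the triangle inequality to rule out the scenario that a neighbor of $r_{C'}$ could both be a neighbor of $r_C$ and still leave $r_C$ processable as an unpopular center.
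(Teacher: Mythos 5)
Your proposal is correct and follows essentially the same route as the paper's proof: the same two-case split on whether $r_{C'}$ is still in $S_i\cup N_i$ when $r_C$ is considered, with the second case resolved by the same contradiction argument — any superclustering event that removes $r_{C'}$ would, via the triangle inequality and the $2\delta_i$-radius purge at lines \ref{line 17}--\ref{line 18}, have removed $r_C$ from $S_i$ as well, contradicting $C\in U_i$. Your finer enumeration of subcases (b) and (c), and the explicit remark that emulator edge weights equal $G$-distances so $d_H\geq d_G$, are just more detailed renderings of what the paper states tersely.
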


\begin{proof}
	Let $r_C$ be a center of a cluster $C\in U_i$ and let $r_{C'}$ be a neighboring cluster center of $r_C$, such that $r_{C'}\in C'$ and $C'\in P_i$. By definition, $d_G(r_C,r_{C'})\leq \delta_i$.
	Since $C\in U_i$, the algorithm has considered $r_C$, and executed a Dijkstra exploration from it during phase $i$.

	\textbf{Case 1:} The cluster center $r_{C'}$ was in $S_i\cup N_i$ when the cluster center $r_C$ was considered by the algorithm. 
	In this case, since $d_G(r_C,r_{C'})\leq \delta_i$,
	the edge $(r_C,r_{C'})$ was added to $H$ with weight $d_G(r_C,r_{C'})$.

	\textbf{Case 2:} The cluster center $r_{C'}$ was not in $S_i\cup N_i$ when the cluster center $r_C$ was considered by the algorithm. 
	Then, the cluster $C'$ of $r_{C'}$ has either joined $U_i$ or became superclustered before the cluster center $r_C$ was considered by the algorithm. Assume towards contradiction that ${C'}$ became superclustered before $r_C$ was considered. Therefore, a supercluster was grown around a cluster center $r_{C''}$ with $d_G(r_{C''},r_C)\leq 2\delta_i$, and so $r_C$ was removed from $S_i$, contradiction (see \cref{line 17} of Algorithm \ref{Alg process cluster cent}). Therefore, we conclude that $C'$ has joined $U_i$ before $r_C$ was considered by the algorithm. The algorithm has executed a Dijkstra exploration from $r_{C'}$, and since $d_G(r_C,r_{C'})\leq \delta_i$, the emulator $H$ contains the edge $(r_C,r_{C'})$ with weight $d_G(r_C,r_{C'})$. 
\end{proof}

We now show that for every vertex $v\in V$ there exists an index $i\in [0,\ell]$ such that $v$ belongs to a cluster that joins the set $U_i$. For notational purposes, define $U_{-1} = \emptyset$, and $U^{(i)} = \bigcup_{j=-1}^i U_i$ for all $i\in [-1,\ell]$. We say that a vertex $v$ is $U^{(i)}$-\emph{clustered} for some $i\in [0, \ell]$ if there exists a cluster $C \in U^{(i)}$ such that $v \in C$.  

\begin{lemma}
	\label{lemma ui partition}
	For every index $i\in [0,\ell]$, the set $P_{i} \cup U^{(i-1)}$ is a partition of $\ V$. 	
\end{lemma}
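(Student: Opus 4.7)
The plan is to proceed by induction on $i \in [0,\ell]$. The base case $i=0$ is immediate: $U^{(-1)} = \emptyset$ by definition, and $P_0 = \{\{v\} : v\in V\}$ is trivially a partition of $V$.

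For the induction step, I would assume that $P_i \cup U^{(i-1)}$ is a partition of $V$ and reduce the claim for $i+1$ to a single local statement about phase $i$: every cluster $C\in P_i$ either joins $U_i$ as a whole, or is absorbed (as a vertex set) into exactly one supercluster that ends up in $P_{i+1}$, and these two alternatives are mutually exclusive. Once this is proved, combining it with the inductive hypothesis and with Lemma \ref{lemma disjoint} (which guarantees pairwise disjointness of the superclusters in $P_{i+1}$) gives immediately that $P_{i+1}\cup U^{(i)} = P_{i+1}\cup U_i \cup U^{(i-1)}$ is a partition of $V$, since $U^{(i-1)}$ is untouched during phase $i$.

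To establish the local statement I would trace the center $r_C$ through phase $i$ of Algorithm \ref{Alg process cluster cent}. At \cref{line 3} it is inserted into $S_i$, and the algorithm eventually removes it from $S_i\cup N_i$ in one of three ways: (a) $r_C$ is selected at \cref{line 6}, after which either $r_C$ is unpopular and $C$ is added to $U_i$ at \cref{line 10}, or a new supercluster rooted at $r_C$ is added to $P_{i+1}$ at \cref{line 13}; (b) $r_C$ is evicted at \cref{line 15} because a newly-formed supercluster center is within $\delta_i$ of it, in which case $C$ is absorbed into that supercluster; or (c) $r_C$ is removed at \cref{line 17} because some new supercluster center lies at distance within $(\delta_i,2\delta_i]$ from it, and $r_C$ is parked in $N_i$ at \cref{line 18}. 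In case (c), $r_C$ later is either pulled into a subsequent supercluster via \cref{line 15}, or survives in $N_i$ until the post-processing loop at \cref{line 19}, where $C$ is absorbed into the original ``sponsoring'' supercluster. Mutual exclusivity is then immediate: once $C$ joins $U_i$, its center is absent from both $S_i$ and $N_i$, and every supercluster-absorption step reads only from $S_i\cup N_i$.

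The only place where I expect a mild subtlety is case (c): one must argue that no center is orphaned in $N_i$ at the end of the phase. This is precisely the role of the loop at \cref{line 19}, which processes every surviving member of $N_i$ and merges its cluster into the supercluster that originally forced it into $N_i$. Beyond this small piece of bookkeeping, the proof is a routine case analysis driven directly by the pseudocode.
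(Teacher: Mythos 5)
Your proof is correct and follows essentially the same route as the paper: induction on the phase index, reducing the step to the observation that each cluster of $P_i$ either joins $U_i$ or is absorbed into a (unique) supercluster of $P_{i+1}$. The paper states this dichotomy in one sentence, whereas you justify it by tracing the pseudocode (including the $N_i$ bookkeeping and the final loop at \cref{line 19}); this extra detail is consistent with the algorithm and does not change the argument.
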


\begin{proof}
	The proof is by induction on the index of the phase $i$. For $i=0$, the claim is trivial since $P_{0}$ is a partition of $V$ into singleton clusters.
	
	Assume the claim holds for some index $i\in [0,\ell-1]$. Let $v\in V$. By the induction hypothesis, $v$ belongs to a cluster $C\in P_{i} \cup U^{(i-1)}$. If $C\in U^{(i-1)}$, then, by definition, $C\in U^{(i)} $. If $C\in P_i$, then in phase $i$, the cluster $C$ has either been superclustered into a supercluster of $P_{i+1}$, or it has joined $U_i$. In any case, $C\in P_{i+1} \cup U^{(i)}$, and so $P_{i+1} \cup U^{(i)}$ is a partition of $V$. Thus the claim holds for $i+1$. 
\end{proof}

Recall that by \cref{eq degl pl cent}, we have that $|P_{\ell}|\leq deg_\ell$ and thus, $P_{\ell+1} = \emptyset$. Therefore, Lemma \ref{lemma ui partition} implies that $U^{(\ell)}$ is a partition of $V$. 

In the following lemma we argue that superclusters form a laminar family. 

\begin{lemma} 
	\label{lemma laminar}
	Let $0 \leq j \leq i \leq \ell$ be a pair of indices. Let $C\in U_i$ be a cluster, and $v\in C$ be a vertex. Then, there exists a cluster $C'\in P_j$ such that $v\in C'$. 
\end{lemma}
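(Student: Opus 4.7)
The plan is to prove this by downward induction on $j$, with $i$ held fixed, starting from the case $j=i$ and descending to $j=0$. The base case $j=i$ should be immediate: the construction only ever inserts a cluster into $U_i$ after drawing it out of $P_i$ (see \cref{line 6,line 10} of Algorithm \ref{Alg process cluster cent}), so $U_i\subseteq P_i$. Hence the given $C\in U_i$ already lies in $P_i=P_j$ and we may take $C'=C$.

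For the inductive step, assuming the claim for $j+1$ where $0\leq j<i$, I would first obtain some $C^{(j+1)}\in P_{j+1}$ with $v\in C^{(j+1)}$ from the inductive hypothesis. Because $j+1\geq 1$ this cluster cannot belong to the initial singleton partition $P_0$, so it must have been constructed during phase $j$ as a supercluster $\widehat{C}$. Inspecting how $\widehat{C}$ is assembled, it is built exclusively from whole $P_j$-clusters: the seed cluster around the popular center $r_C$ (\cref{line 12}), the clusters whose centers lie in $S_j\cup N_j$ and satisfy the distance condition (the loop at \cref{line 14,line 15}), and any residual clusters from the buffer $N_j$ absorbed in the closing loop at \cref{line 19}. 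Since every center ever placed in $S_j$ or $N_j$ during phase $j$ is itself the center of a $P_j$-cluster, $\widehat{C}=C^{(j+1)}$ is a disjoint union of $P_j$-clusters, and therefore $v$ belongs to one such cluster $C^{(j)}\in P_j$, completing the step.

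The only real subtlety, and the one point that genuinely needs to be verified rather than merely asserted, is that the buffer mechanism preserves the laminar structure: clusters delayed through $N_j$ and absorbed only at the end of phase $j$ must still be traced back to whole $P_j$-clusters, and the algorithm must never split an existing cluster. Both facts follow directly from the observation that every set operation in Algorithm \ref{Alg process cluster cent} acts on whole clusters rather than on individual vertices, so the laminar structure is built into the construction. Consequently, the write-up should be short: essentially one line of base case plus a brief enumeration of the three mechanisms that contribute members to $\widehat{C}$, each accompanied by a reference to the corresponding line of the pseudocode.
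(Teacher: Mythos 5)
Your proposal is correct and follows essentially the same route as the paper: an induction (yours descending on $j$, the paper's on $i-j$, which is the same thing) with base case $C'=C$ via $U_i\subseteq P_i$, and an inductive step resting on the fact that every cluster of $P_{j+1}$ is a disjoint union of whole clusters from $P_j$. The only difference is that you justify that disjoint-union fact by enumerating the three mechanisms in the pseudocode, whereas the paper simply asserts it; both are fine.
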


\begin{proof}
	The proof is by induction on $i-j$. The induction base is $i-j=0$. Then, $C'=C$, and we are done.
	
	For the induction step, suppose that the assertion holds for some non-negative integer $h$. We prove it for $h+1$. 
	
	By the induction hypothesis, there exists a cluster $\tilde{C} \in P_{i-h}$ such that $v\in \tilde{C}$. Also, the cluster $\tilde{C}$ is a disjoint union of clusters from $P_{(i-h)-1}$. Hence, in particular, there exists a cluster $C'\in P_{i-h-1} = P_{i-(h+1)}$ such that $v\in C'$.
\end{proof}

%==============

We are now ready to bound the stretch of the emulator $H$. 
The outline of the proof is as follows. Consider a pair of vertices $u,v\in V$ and let $\pi(u,v)$ be the shortest path between them in $G$. We will show that $\pi(u,v)$ can be divided into smaller segments, and that for each such segment there is a path in $H$ between its endpoints $u',v'$ that is not significantly longer than the distance between $u',v'$ in the original graph $G$. 

Define recursively $\beta_0 = 0,\alpha_0=1$, and for $i>1$ define $\beta_i = 2\beta_{i-1}+6R_i$ and $\alpha_i = \alpha_{i-1} +\frac{\epsilon^i}{1-\epsilon^i}\cdot \beta_i$. 

\begin{lemma}\label{lemma emu stretch}
	Let $u,v\in V$ be a pair of vertices and let $\pi(u,v)$ be a shortest path between them. Let $i$ be the minimal index such that all vertices on the path $\pi(u,v)$ are $U^{(i)}$-clustered. Then, 
	$$d_H(u,v)\leq \alpha_i\cdot d_G(u,v)+ \beta_i.$$ 
\end{lemma}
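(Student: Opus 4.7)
The plan is to proceed by induction on $i$.

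For the base case $i=0$, every vertex on $\pi(u,v)$ is its own singleton cluster, since $U_0\subseteq P_0$ and $P_0$ is the singleton partition. Each consecutive pair on $\pi$ is at $G$-distance $1=\delta_0$, so Lemma \ref{lemma neighboring clusters cent} places the corresponding unit-weight edge in $H$, giving $d_H(u,v)\leq d_G(u,v)=\alpha_0\cdot d_G(u,v)+\beta_0$.

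For the inductive step, assume the statement for all indices below $i$. I would call a vertex on $\pi$ \emph{high} if it belongs to a cluster of $U_i$ (equivalently, to a cluster of $P_i$ by Lemma \ref{lemma ui partition}), and \emph{low} if it is $U^{(i-1)}$-clustered. Two building blocks drive the argument. First, on any fully low sub-path $\sigma$ of $\pi$, the induction hypothesis applied at the minimal index (which is at most $i-1$) yields $d_H\leq \alpha_{i-1}\cdot|\sigma|+\beta_{i-1}$. Second, for any two high vertices $x\in C\in U_i$ and $y\in C'\in U_i$ with $d_G(r_C,r_{C'})\leq \delta_i$, Lemma \ref{lemma neighboring clusters cent} supplies the $H$-edge $(r_C,r_{C'})$ of weight $d_G(r_C,r_{C'})$, which combined with Lemma \ref{lemma radpi leq ri} gives the cluster-jump bound $d_H(x,y)\leq 2R_i+d_G(r_C,r_{C'})\leq d_G(x,y)+4R_i$.

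Using these tools, I would construct a decomposition of $\pi$ by a greedy forward walk. Whenever the current vertex is low, consume the longest purely low sub-path and apply the first building block. Whenever the current vertex is high in cluster $C$, apply the second to jump as far along $\pi$ as possible to the next high vertex $y'$ whose cluster $C'\in U_i$ satisfies $d_G(r_C,r_{C'})\leq \delta_i$. The crucial length observation is that if such a jump terminates prematurely --- that is, a still-forthcoming high vertex $y''$ has cluster-center $G$-distance greater than $\delta_i$ from $r_C$ --- then using $d_H\geq d_G$ and the radius bound $R_i$, the jump itself must have covered at least $\delta_i-2R_i=(1/\epsilon)^i$ in graph distance. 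Consequently, along a path of $G$-length $L=d_G(u,v)$, both the number of jumps and the number of interleaved low sub-paths are $O(L\epsilon^i)+O(1)$.

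Summing contributions, the total multiplicative cost is at most $\alpha_{i-1}\cdot L_{\text{low}}+L_{\text{high}}\leq \alpha_{i-1}\cdot L$, and the total additive cost is roughly (number of jumps)$\,\cdot 4R_i$ plus (number of low segments)$\,\cdot \beta_{i-1}$, plus a constant boundary adjustment. Since both counts are $O(L\epsilon^i)+O(1)$, the additive cost has the form $cL\epsilon^i+O(1)$; the linear-in-$L$ piece folds into the multiplicative factor via $\alpha_i=\alpha_{i-1}+\frac{\epsilon^i}{1-\epsilon^i}\beta_i$, while the constant piece fits inside $\beta_i=2\beta_{i-1}+6R_i$. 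The hard part will be the careful boundary bookkeeping: single $G$-edges joining a low and a high region are not guaranteed to lie in $H$ and must be rerouted through the local high cluster's center at an extra cost of $R_i$, which is where the coefficient $6$ in $\beta_i$ arises rather than the naive $4$. Ensuring that every such interface is counted exactly once, and that the decomposition is well-defined regardless of whether $u$ and $v$ themselves are high or low, is the most delicate part of the argument.
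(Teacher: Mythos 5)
Your strategy is essentially the paper's: induct on $i$, cut $\pi(u,v)$ into pieces of graph-length $\Theta\left(\epsi\right)$ (the paper uses fixed-length segments of $\lfloor\epsi\rfloor$ rather than your greedy maximal jumps, but the resulting count of $O(d_G(u,v)\cdot\epsilon^i)+O(1)$ pieces is the same), traverse each piece by hopping between the centers $r_1,r_2$ of the first and last $U_i$-clusters it meets via Lemma \ref{lemma neighboring clusters cent} and the radius bound of Lemma \ref{lemma radpi leq ri}, and dispatch the purely $U^{(i-1)}$-clustered stretches by the induction hypothesis. All of that is sound.

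The genuine gap is exactly the step you defer as ``the most delicate part'': crossing from a low vertex $w_1$ (the predecessor on $\pi$ of the first $U_i$-clustered vertex $z_1$) to the center $r_1$ of $z_1$'s cluster $C_1$. Since $w_1\notin C_1$, the radius bound gives you nothing about $d_H(w_1,r_1)$, and the graph edge $(w_1,z_1)$ need not be in $H$; asserting that one can ``reroute through the local high cluster's center at an extra cost of $R_i$'' names the difficulty without resolving it. The paper's resolution is a three-hop chain that uses machinery you never invoke: let $C'\in U_j$, $j<i$, be $w_1$'s cluster with center $r'$; by the laminar-family Lemma \ref{lemma laminar} there is a level-$j$ cluster $C_1'\subseteq C_1$ containing $z_1$, with center $r_1'$; the edge $(w_1,z_1)$ makes $C'$ and $C_1'$ neighboring at level $j$, so Lemma \ref{lemma neighboring clusters cent} applied at phase $j$ (not at phase $i$) yields $d_H(r',r_1')=d_G(r',r_1')\leq 2R_j+1$; finally $r_1'\in C_1$ gives $d_H(r_1',r_1)\leq R_i$. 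Altogether $d_H(w_1,r_1)\leq 3R_j+1+R_i\leq 2R_i$, the last inequality needing $R_i\geq 2+5R_j$, which must be checked against the recurrence for $R_i$. Without Lemma \ref{lemma laminar} and this second, lower-level application of Lemma \ref{lemma neighboring clusters cent}, there is no path in $H$ from $w_1$ to $r_1$ that you can exhibit, so the interface cost is unproven rather than merely unbookkept. (Relatedly, the correct per-interface cost is $2R_i$, not $R_i$: the per-piece additive total the recursion is built to absorb is $2R_i+2R_i+2R_i=6R_i$, i.e., two interfaces at $2R_i$ each plus the center-to-center hop overhead $d_H(r_1,r_2)\leq d_G(z_1,z_2)+2R_i$.)
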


\begin{proof}
	The proof is by induction on the index $i$. For $i=0$, all vertices on the path $\pi(u,v)$ are $U_0$-clustered, thus they all added to the emulator $H$ all edges that are incident to them.(Recall that $\delta_0 = 1/\epsilon^0 +2R_0 = 1$. See also \cref{line 7,line 8,line 9,line 10} of Algorithm \ref{Alg process cluster cent}.) Therefore, the path $\pi(u,v)$ itself is contained in the emulator $H$. 
	
	Let $i\in [1,\ell]$. Assume that the claim holds for $i-1$, and prove that it holds for $i$. 
	Let $(u,v)$ be a pair of vertices such that all vertices on a shortest path $\pi(u,v)$ are $U^{(i)}$-clustered. Denote $d= |\pi(u,v)|$. For convenience, we imagine that the vertices of $\pi(u,v)$ appear from left to right, where $u$ is the leftmost vertex and $v$ is the rightmost vertex.
	
	We divide the path 	$\pi(u,v)$ into segments $S_1,S_2,\dots S_q$, each of length exactly $\lfloor\epsi\rfloor$, except for the last segment that can be shorter than $\lfloor\epsi\rfloor$. Hence, $q\leq \lceil \frac{d}{\lfloor\epsi\rfloor} \rceil \leq \lceil \frac{d}{\epsi-1} \rceil = \lceil \frac{d\epsilon^i}{1-\epsilon^i} \rceil$.
	Consider a single segment $S$. Denote by $x,y$ the left and the right endpoints of $S$, and denote by $S=\pi(x,y)$ the subpath of $\pi(u,v)$ between them. (It is convenient to visualize the path $\pi(u,v)$ as going from the leftmost vertex $u$ to the rightmost vertex $v$.)
	
	\textbf{Case 1:} The segment $S$ does not contain a $U_i$-clustered vertex. Then, all the vertices of the segment are $U^{(i-1)}$ clustered. Hence, by the induction hypothesis 
	\begin{equation*}
		d_H(x,y)\leq
		 \alpha_{i-1}\cdot d_G(x,y)+ \beta_{i-1}.
	\end{equation*}

	\textbf{Case 2:} Let $z_1 $, $z_2 $ be the first and the last $U_i$-clustered vertices on the path $\pi(x,y)$, respectively. 
	Let $C_1,C_2\in U_i$ be the clusters such that $z_1\in C_1$ and $z_2\in C_2$. (Note that it is possible that $C_1=C_2$.)
	Both clusters intersect $S= \pi(x,y)$, and the length of $S$ is at most $\epsi$. In addition, by Lemma \ref{lemma radpi leq ri} we have that the radii of the clusters $C_1,C_2$ is at most $R_i$. Let $r_1,r_2$ denote the centers of clusters $C_1,C_2$, respectively. It follows that $d_G(r_1,r_2)\leq \epsi+2R_i = \delta_i$. Hence clusters $C_1,C_2$ are neighboring. 
	By Lemma \ref{lemma neighboring clusters cent} we have that $d_H(r_1,r_2)=d_G(r_1,r_2)$. See Figure \ref{fig emu path} for an illustration. As a result, by triangle inequality, 	
	\begin{equation*}
	d_H(r_1,r_2) = d_G(r_1,r_2) \leq R_i+ d_G(z_1,z_2)+R_i.
	\label{eq centers dist}
	\end{equation*}

	\begin{figure}
		\begin{minipage}{.48\textwidth}
			\centering
				\includegraphics[scale=0.15]{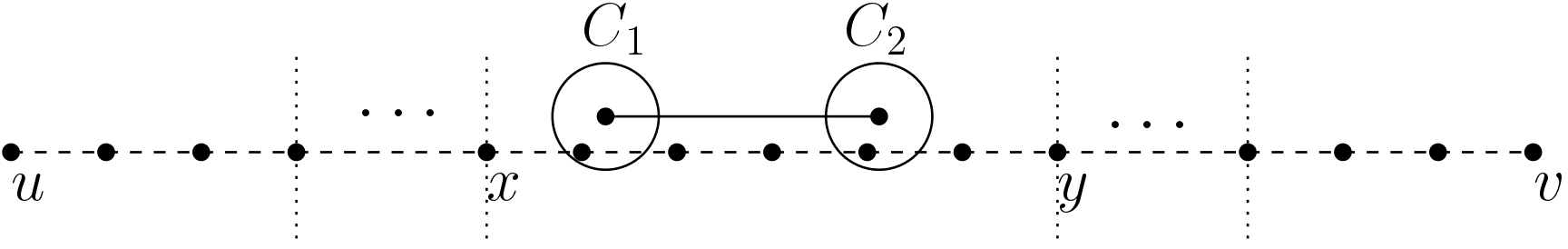}
			\caption{The analysis of the stretch of the emulator. For a pair $u,v\in V$ such that a shortest path $\pi(u,v)$ between them is $U^{(i)}$-clustered, we divide the path $\pi(u,v)$ into segments of length at most $\epsi$. For a single segment, denote by $C_1,C_2$ the first and the last $U_i$-clustered clusters on the path $\pi(u,v)$. The emulator $H$ contains an edge between the centers $r_1,r_2$ of the clusters $C_1,C_2$ with weight $d_G(r_1,r_2)$.} 		
			\label{fig emu path}
		\end{minipage}
		\begin{minipage}{.02\textwidth}
			\hspace{.05cm}
		\end{minipage}
		\begin{minipage}{.48\textwidth}
			\centering
			\includegraphics[scale=0.17]{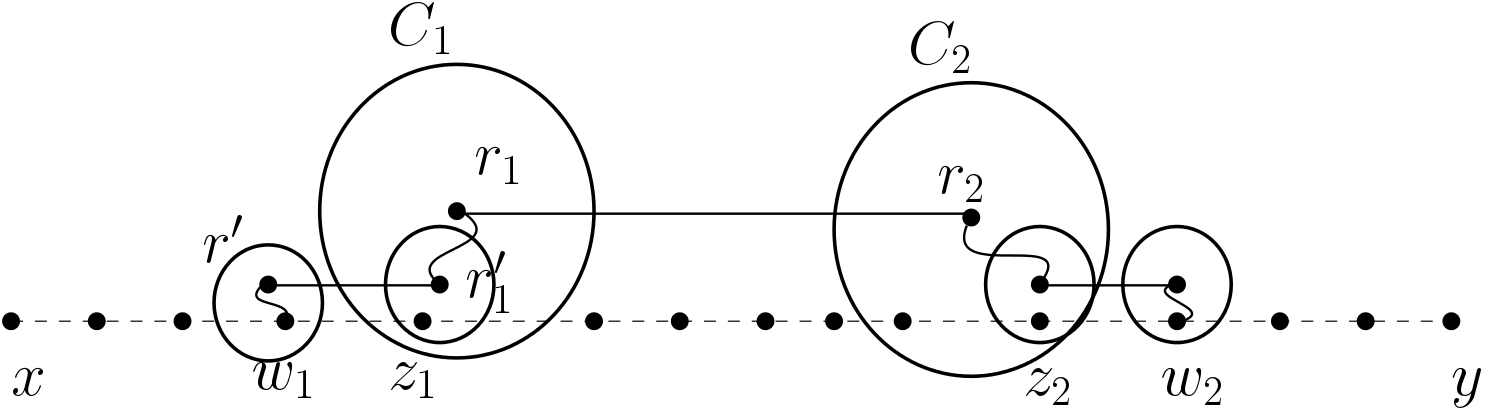}
		\caption{The path in $H$ between $w_1,r_1$, where $z_1$ is the first $U_i$-clustered vertex on the segment $S$, $w_1$ is the predecessor of $z_1$ on $S$, and $r_1$ is the center of the cluster $C_1$ such that $z_1\in C_1$. The dotted line represents the path $x,y$ in $G$. The straight solid lines represent edges of the emulator, and the curved lines represent paths in the emulator between vertices and the centers of their respective clusters in phase $j$.} 		
		\label{fig emu path2}
		\end{minipage}%
	\end{figure}

	Next, we bound the distances $d_H(x,r_1),d_H(r_2,y)$. 	
	Let $w_1,w_2$ be the predecessor and successor of $z_1,z_2$ on $\pi(x,y)$, respectively. See Figure \ref{fig emu path2} for an illustration. Observe that both $w_1,w_2$ are $U^{(i-1)}$-clustered. 
	
	We will show that there is a path of length at most $2R_i$ from $w_1$ to $r_1$. (This is also true for $r_2,w_2$, and the proof is analogous.)
	
	Let $C'\in U_j$ be the cluster such that $w_1\in C'$. Observe that $j<i$. Let $C_1'\in P_{j}$ such that $z_1\in C_1'$ (by Lemma \ref{lemma laminar}, such a cluster exists). Denote $r',r_1'$ the centers of the clusters $C',C_1'$, respectively. 
	
	By Lemma \ref{lemma radpi leq ri}, we have 
	\begin{equation}
	\label{eq w1 r'}
	d_H(w_1,r')\leq R_j.
	\end{equation}
	
	In phase $j$, the cluster $C'$ joined $U_j$. Since there is an edge between the clusters $C',C_1'$, and since their radii are at most $R_j$, we have $d_G(r',r_1')\leq 2R_j+1 \leq \delta_j.$ 
	By Lemma \ref{lemma neighboring clusters cent}, we have 
	\begin{equation}\label{eq r' r1'}
		d_H(r',r_1')= d_G(r',r_1')\leq 2R_j+1 .
	\end{equation}
	 
	Since $r_1'$ belongs to the cluster $C_1$, we also have by Lemma \ref{lemma radpi leq ri} that
	\begin{equation}\label{eq r1' r1}
		d_H(r_1',r_1)\leq R_i.
	\end{equation}
	 
	 By \cref{eq w1 r',eq r' r1',eq r1' r1} we have 
	%\begin{equation}\label{eq w1 r1}
	$d_H(w_1,r_1)\leq 3R_j+1+R_i$.
	%\end{equation}
	Observe that $R_i = 2\eps{i-1} + 5R_{i-1} \geq 2 +5R_j$. It follows that 
	\begin{equation}	\label{eq 2ri}
	d_H(w_1,r_1)\leq 2 R_i.
	\end{equation}
	Similarly, we have
	\begin{equation}
	\label{eq 2ri2}
	d_H(r_2,w_2)\leq 2 R_i.
	\end{equation}

	Recall that all vertices on the subpaths of $\pi(x,y)$ between $x,w_1$ and $w_2,y$, are $U^{(i-1)} $-clustered. Thus the induction hypothesis is applicable to them. It follows that: 	
	\begin{equation}
	\begin{array}{clclclclclc}\label{eq induc path}
	d_H(x,w_1) &\leq & \alpha_{i-1}\cdot d_G(x,w_1) + \beta_{i-1} & and &
	d_H(w_2,y) &\leq & \alpha_{i-1}\cdot d_G(w_2,y) + \beta_{i-1}
	\end{array}
	\end{equation} 
	
	By \cref{eq 2ri,eq 2ri2,eq centers dist,eq induc path} we derive: 
	\begin{equation*}
	\begin{array}{lclclc}\label{eq xy path}
	d_H(x,y) &\leq & d_H(x,w_1)+d_H(w_1,r_1)+d_H(r_1,r_2)+d_H(r_2,w_2)+d_H(w_2,y)\\
	&\leq& d_H(x,w_1)+2R_i+2R_i+d_G(z_1,z_2)+2R_i+d_H(w_2,y)\\
	&\leq& \alpha_{i-1}\cdot d_G(x,w_1)+d_G(z_1,z_2)+ \alpha_{i-1}\cdot d_G(w_2,y)+ 2\beta_{i-1}+6R_i\\
	&\leq& \alpha_{i-1}\cdot d_G(x,y) + 2\beta_{i-1}+6R_i.
	\end{array}
	\end{equation*}

	Since we divided the path $\pi(u,v)$ into $q \leq \lceil \frac{d\epsilon^i}{1-\epsilon^i}\rceil$ such segments, we obtain that for the pair $u,v$, their distance in the emulator $H$ satisfies: 
	\begin{equation*}
	\begin{array}{lclclc}\label{eq uv path}
	d_H(u,v) &\leq & \sum_{j=1}^{q} (\alpha_{i-1}\cdot d_G(x_j,y_j)+ 2\beta_{i-1}+6R_i)\\
	&\leq & \alpha_{i-1}\cdot d_G(u,v)+ (d_G(u,v)\cdot \frac{\epsilon^i}{1-\epsilon^i}+1)\cdot ( 2\beta_{i-1}+6R_i)\\
	
	&\leq & d_G(u,v)\cdot \left( \alpha_{i-1} + \frac{\epsilon^i}{1-\epsilon^i}\cdot ( 2\beta_{i-1}+6R_i)\right) + 2\beta_{i-1}+6R_i .
	\end{array}
	\end{equation*}

	Recall that $\beta_i = 2\beta_{i-1}+6R_i$ and that $\alpha_i = \alpha_{i-1} +\frac{\epsilon^i}{1-\epsilon^i}\cdot \beta_i$. It follows that:
	\begin{equation*}
	\begin{array}{lclclc}\label{eq uv path2}
	d_H(u,v) 
	&\leq & \left( \alpha_{i-1} + \frac{\epsilon^i}{1-\epsilon^i}\cdot \beta_i\right)\cdot d_G(u,v)+ \beta_i 
	&= & \alpha_i\cdot d_G(u,v) + \beta_i .
	\end{array}
	\end{equation*} 	
\end{proof}

Recall that $U^{(\ell)}$ is a partition of $V$. 
As a corollary to Lemma \ref{lemma emu stretch} we have:

\begin{corollary}\label{coro st emu 1}
	For every pair of vertices $u,v\in V$, the distance between them in $H$ satisfies:
	$$d_H(u,v)\leq \alpha_\ell\cdot d_G(u,v)+ \beta_\ell.$$
\end{corollary}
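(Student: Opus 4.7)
The plan is to deduce this corollary as a one-line consequence of Lemma \ref{lemma emu stretch}, once one verifies two elementary facts: (i) every vertex of $V$ is $U^{(\ell)}$-clustered, so the hypothesis of Lemma \ref{lemma emu stretch} applies to every pair $u,v\in V$ with some minimal index $i\in[0,\ell]$; and (ii) the sequences $\{\alpha_i\}_{i=0}^{\ell}$ and $\{\beta_i\}_{i=0}^{\ell}$ are monotone non-decreasing, so the bound $\alpha_i d_G(u,v)+\beta_i$ obtained from Lemma \ref{lemma emu stretch} for the minimal $i$ can be relaxed to $\alpha_\ell d_G(u,v)+\beta_\ell$.

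First I would invoke Lemma \ref{lemma ui partition} together with the observation (made right after \cref{eq degl pl cent}) that $P_{\ell+1}=\emptyset$; combined, these say that $U^{(\ell)}$ is a partition of $V$. Consequently, for any pair $u,v\in V$ and any shortest path $\pi(u,v)$, every vertex of $\pi(u,v)$ lies in some cluster of $U^{(\ell)}$, so the minimal index $i=i(u,v)\in[0,\ell]$ for which all vertices of $\pi(u,v)$ are $U^{(i)}$-clustered is well-defined and satisfies $i\leq\ell$. Lemma \ref{lemma emu stretch} then yields $d_H(u,v)\leq \alpha_i d_G(u,v)+\beta_i$.

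It remains to check monotonicity of the $\alpha$- and $\beta$-sequences. For $\beta$, the recursion $\beta_i=2\beta_{i-1}+6R_i$ together with $\beta_0=0$ and $R_i\geq 0$ gives $\beta_i\geq \beta_{i-1}\geq 0$ by a trivial induction. For $\alpha$, the recursion $\alpha_i=\alpha_{i-1}+\tfrac{\epsilon^i}{1-\epsilon^i}\beta_i$ together with $\epsilon<1$ and $\beta_i\geq 0$ gives $\alpha_i\geq \alpha_{i-1}$. Hence $\alpha_i\leq \alpha_\ell$ and $\beta_i\leq \beta_\ell$ for every $i\leq \ell$, and the bound from Lemma \ref{lemma emu stretch} implies $d_H(u,v)\leq \alpha_\ell d_G(u,v)+\beta_\ell$, as claimed. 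There is essentially no obstacle here — the corollary is a direct packaging of Lemma \ref{lemma emu stretch} applied uniformly to all pairs of vertices, with monotonicity used only to state a single $(\alpha_\ell,\beta_\ell)$-bound rather than an index-dependent one.
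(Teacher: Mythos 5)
Your proposal is correct and follows exactly the paper's route: the paper derives Corollary \ref{coro st emu 1} immediately from Lemma \ref{lemma emu stretch} together with the observation (via Lemma \ref{lemma ui partition} and $P_{\ell+1}=\emptyset$) that $U^{(\ell)}$ is a partition of $V$. The only difference is that you spell out the monotonicity of the sequences $\alpha_i$ and $\beta_i$, which the paper leaves implicit; that verification is correct and harmless.
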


It is left to provide an upper bound on $\alpha_{\ell},\beta_{\ell}$. 
Recall that $\beta_0 = 0,\alpha_0=1$, and for $i>1$ we have $\beta_i = 2\beta_{i-1}+6R_i$ and $\alpha_i = \alpha_{i-1} +\frac{\epsilon^i}{1-\epsilon^i}\cdot \beta_i$.

\begin{lemma}
	\label{lemma test bound bi}
	For all $i\in [0,\ell]$, we have:
	$\beta_i =\sum_{j=0}^{i} 2^{i-j}\cdot 6 R_j .$
\end{lemma}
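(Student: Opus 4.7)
The plan is a straightforward induction on $i$, using the defining recurrence $\beta_0 = 0$, $\beta_i = 2\beta_{i-1} + 6R_i$.

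For the base case $i=0$, the right-hand side collapses to the single term $2^{0}\cdot 6R_0 = 6R_0$, and since $R_0 = 0$ by definition, this equals $\beta_0 = 0$.

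For the inductive step, I would assume that $\beta_{i-1} = \sum_{j=0}^{i-1} 2^{i-1-j} \cdot 6R_j$ and compute
\begin{equation*}
\beta_i \;=\; 2\beta_{i-1} + 6R_i \;=\; 2\sum_{j=0}^{i-1} 2^{i-1-j}\cdot 6R_j \;+\; 6R_i \;=\; \sum_{j=0}^{i-1} 2^{i-j}\cdot 6R_j \;+\; 2^{0}\cdot 6R_i,
\end{equation*}
and then absorb the isolated $6R_i$ term as the $j=i$ summand to obtain $\sum_{j=0}^{i} 2^{i-j}\cdot 6R_j$, as claimed.

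There is essentially no obstacle here — the recurrence is linear and the closed form is exactly what unrolling it produces. The only thing to be mindful of is to keep the indexing consistent and to use $R_0 = 0$ for the base case, both of which are immediate from the definitions in Section~\ref{sec set param}.
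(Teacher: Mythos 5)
Your proof is correct and is essentially identical to the paper's own argument: both proceed by induction on $i$, using the base case $\beta_0 = R_0 = 0$ and unrolling the recurrence $\beta_i = 2\beta_{i-1} + 6R_i$ in the inductive step. The only cosmetic difference is that the paper phrases the step as passing from $i$ to $i+1$ while you pass from $i-1$ to $i$.
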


\begin{proof}
	The proof is by induction on the index of the phase $i$. For $i=0$, since $\beta_0 = 0$ and $R_0=0$, both sides of the equation are equal to $0$. 	
	
	We assume that the claim holds for some $i\in [0,\ell-1]$, and prove that it holds for $i+1$. 
	By the induction hypothesis we obtain:
	\begin{equation*}
	\begin{array}{lclclclclclc}
	\beta_{i+1} &=& 2\beta_{i}+6R_{i+1}
	&=& 6R_{i+1}+ 2\cdot \sum_{j=0}^{i} 2^{i-j}\cdot 6 R_j
	%	&=& 6R_{i+1}+ \sum_{j=1}^{i} 2^{i+1-j}\cdot 6 R_j\\
	&=& \sum_{j=0}^{i+1} 2^{i+1-j}\cdot 6 R_j
	\end{array}
	\end{equation*}
\end{proof}

We will now provide an explicit bound on $\beta_i$. 
By \cref{eq exp ri}
for all $i\in [1,\ell]$, we have that $R_i \leq 4\cdot \eps{i-1}$. Recall also that $R_0 =0$. Since we assume $\epsilon \leq 1/10$, we have 
\begin{equation}\label{eq explicit bi test}
\begin{array}{lclclclclclc}
\beta_i 
&\leq& 
\sum_{j=1}^{i} 2^{i-j}\cdot 6 R_j

&\leq&
\sum_{j=0}^{i} 2^{i-j}\cdot 24\cdot \eps{j-1}

%
%&\leq&18 \cdot 
%2^i\epsilon \cdot 
%\left[
%\frac{(\frac{1}{2\epsilon})^{i+1} }{\frac{1}{2\epsilon}-1}
%\right]\\
%
%
%&=& 18 \cdot 
%2^i\epsilon \cdot 
%
%(\frac{1}{2\epsilon})^{i+1}\cdot \frac{2\epsilon}{1-2\epsilon}\\

&\leq &\frac{24}{1-2\epsilon} \cdot 
\eps{i-1} &\leq & 30 \eps{i-1}.
\end{array}
\end{equation}

% ================ alpha ================

For all $i\in [1,\ell]$, we have $\alpha_{i} = \alpha_{i-1} +\frac{\epsilon^{i}}{1-\epsilon^i}\cdot \beta_{i}$. Thus, $\alpha_{i} \leq \alpha_{i-1} +\frac{30\epsilon}{1-\epsilon^i} \leq \alpha_{i-1} +34\epsilon $. (Note that $\epsilon \leq 1/10$.) Since $\alpha_0 = 1$, we have:
\begin{equation}
\label{eq alphai}
\alpha_i = 1+ 34\epsilon\cdot i .
\end{equation}

As a corollary to Corollary \ref{coro st emu 1} and \cref{eq explicit bi test,eq alphai}, we have:
\begin{corollary}
	\label{coro st emu}
	For every pair of vertices $u,v\in V$ the distance between them in the emulator $H$ satisfies: 
	$$d_H(u,v)\leq (1+ 34\epsilon\cdot \ell)\cdot d_G(u,v)+ 30 \cdot \eps{\ell-1}.$$ 
\end{corollary}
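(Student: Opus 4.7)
The plan is to derive the corollary as a direct consequence of the three ingredients just established: Corollary \ref{coro st emu 1} together with the explicit bounds on $\alpha_\ell$ and $\beta_\ell$ given in \cref{eq alphai,eq explicit bi test}. Since the underlying recurrences for $\alpha_i$ and $\beta_i$ have already been unrolled and bounded in closed form, what remains is purely a substitution.

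Concretely, I would first invoke Corollary \ref{coro st emu 1}, which asserts that for every $u,v\in V$,
\[
d_H(u,v)\leq \alpha_\ell\cdot d_G(u,v)+\beta_\ell.
\]
Then I would plug in the bound $\alpha_\ell\leq 1+34\epsilon\cdot \ell$ from \cref{eq alphai} as the multiplicative coefficient, and the bound $\beta_\ell\leq 30\eps{\ell-1}$ from \cref{eq explicit bi test} as the additive term. Combining yields exactly the claimed inequality
\[
d_H(u,v)\leq (1+34\epsilon\cdot \ell)\cdot d_G(u,v)+30\cdot \eps{\ell-1}.
\]

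There is no real obstacle in this step; the corollary is essentially a rewriting of Corollary \ref{coro st emu 1} with explicit closed-form expressions substituted. The only thing I would be careful about is to note that the assumption $\epsilon\leq 1/10$ used in deriving \cref{eq exp ri} (and hence \cref{eq explicit bi test,eq alphai}) is in effect throughout the analysis of this section, so the substitutions are valid in the regime considered. Beyond this sanity check, the proof is a one-line derivation.
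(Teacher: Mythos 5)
Your proposal is correct and matches the paper's own derivation exactly: the paper obtains this corollary by combining Corollary \ref{coro st emu 1} with the explicit bounds in \cref{eq explicit bi test} and \cref{eq alphai}, precisely as you do. Your remark that the standing assumption $\epsilon\leq 1/10$ underlies these bounds is a valid and appropriate sanity check.
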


\subsubsection{Analysis of the Running Time}\label{sec cent rt}

The algorithm runs for $\ell+1$ phases. Each phase $i\in [0,\ell]$ consists of executing at most $|P_i|\leq n$ Dijkstra explorations, each requires $O({|E| + n{\log n}})$ time. By
\ref{lemma bound pi} we have that $|P_i|\leq n^{1-\frac{2^i-1}{\kappa}}$ for all $i\in [0,\ell]$. 
Recall that $\ell = \centell$. Hence, the running time of the entire algorithm is bounded by 
\begin{equation}
\label{eq rt cent}
\sum_{i=0}^{\ell} O({|E|+n {\log n}})\cdot |P_i| \leq 
 O({|E|+n {\log n}})\cdot\sum_{i=0}^{\ell} n^{1-\frac{2^i-1}{\kappa}}%\leq 
%O\left( n\cdot |E|\cdot n^2{\log n}\right)
\end{equation}

\subsubsection{Rescaling}\label{sec rescale}
Define $\epsilon' = 34\epsilon\cdot \ell$. Observe that we have $\epsilon = \frac{\epsilon'}{34\ell}$. We replace the condition $\epsilon<1/10$ with the much stronger condition $\epsilon' <1$.

Recall that $\ell = \centell$. Note that $\centell\leq {\log \kappa}$ for all $\kappa\geq 2$. 
The additive term $\beta_\ell$ now translates to: 
\begin{equation*}
\beta_\ell \leq 30 \cdot \eps{\ell-1} = 30 \cdot \left(\frac{1}{\left(\frac{\epsilon'}{34\ell}\right)}\right)^{\ell-1} 
=30 \cdot \left(\frac{34 { {\log \kappa}}}{\epsilon'}\right)^{{\log \kappa} -1} 
\end{equation*}

Denote now $\epsilon= \epsilon'$. 
\begin{corollary}\label{coro emu}
	For any parameters $\epsilon <1$ and $\kappa\geq 2$, and any $n$-vertex graph $G=(V,E)$, our algorithm constructs a $\left(1+\epsilon,\beta \right)$-emulator with at most 
	$ \nfrac $
	edges in polynomial deterministic time in the centralized model, where 
	$$\beta = \betaemu.$$
\end{corollary}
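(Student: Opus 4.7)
The plan is to assemble Corollary \ref{coro emu} by gluing together the three separate strands developed earlier in this section: the size bound, the stretch bound, and the running time bound, and then re-parameterize to obtain the advertised form of $\beta$.

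First I would invoke Lemma \ref{lemma bound size cent all} directly: its statement already gives that $|H| \leq \nfrac$, which is exactly the size bound claimed in the corollary. No further work is needed here, since the key combinatorial identity $deg_i - deg_{i-1}^2 = 0$ (which uses $deg_i = n^{2^i/\kappa}$) was arranged precisely so that the telescoping sum over all phases collapses to a single $n \cdot deg_0 = \nfrac$ term. The running time is handled by the bound in \cref{eq rt cent} from Section \ref{sec cent rt}, which is polynomial in $n$ and $|E|$ since each phase runs at most $|P_i| \leq n$ Dijkstra explorations of cost $O(|E|+n\log n)$ and there are $\ell+1 = O(\log \kappa) = O(\log n)$ phases.

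Second, for the stretch, I would invoke Corollary \ref{coro st emu}, which provides the bound $d_H(u,v) \leq (1 + 34\epsilon\cdot\ell)\cdot d_G(u,v) + 30\cdot \eps{\ell-1}$ under the working assumption $\epsilon \leq 1/10$. The only subtlety is that this ``intermediate $\epsilon$'' is not yet the $\epsilon$ of the corollary statement, so a rescaling is required.

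Third, I would carry out the rescaling from Section \ref{sec rescale}. Define a new parameter $\epsilon' = 34 \epsilon \cdot \ell$, so that $\epsilon = \epsilon'/(34\ell)$. The multiplicative stretch factor $1 + 34\epsilon\cdot\ell$ becomes exactly $1+\epsilon'$. The condition $\epsilon \leq 1/10$ is replaced by the stronger (but more natural) condition $\epsilon' < 1$, which is exactly the hypothesis of the corollary. Substituting into the additive term and using $\ell = \centell \leq {\log \kappa}$ for $\kappa\geq 2$, the additive stretch becomes
\begin{equation*}
\beta \leq 30 \cdot \left(\frac{34\ell}{\epsilon'}\right)^{\ell-1} \leq 30 \cdot \left(\frac{34 \log\kappa}{\epsilon'}\right)^{\log\kappa - 1} = \bbeta,
\end{equation*}
at which point renaming $\epsilon'$ back to $\epsilon$ yields exactly the form stated in the corollary. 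I don't expect a real obstacle here, since everything reduces to bookkeeping on top of already-proved lemmas; the only place where I would be careful is ensuring that the bound $\ell \leq \log \kappa$ is valid for all $\kappa \geq 2$ (so that the exponent $\log\kappa - 1$ in the final expression is a correct upper bound), and that after the rescaling the multiplicative stretch indeed becomes $1+\epsilon$ as claimed, matching the hypothesis $\epsilon < 1$ with the earlier intermediate condition.
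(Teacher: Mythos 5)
Your proposal is correct and follows essentially the same route as the paper: the corollary is obtained by citing Lemma \ref{lemma bound size cent all} for the size, the bound in \cref{eq rt cent} for the running time, and Corollary \ref{coro st emu} for the stretch, followed by the rescaling $\epsilon' = 34\epsilon\ell$ and the bound $\ell = \centell \leq \log\kappa$ exactly as in Section \ref{sec rescale}. One small slip: the final expression in your display should be $\betaemu$ rather than $\bbeta$ (the latter is the distributed/spanner bound involving $\rho$), though the inequality chain preceding it already gives the correct form.
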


Note that be setting $\kappa= f(n)\cdot ({\log n})$, for a function $f(n)= \omega(1)$, we obtain an emulator of size at most 
$$n^{1+\frac{1}{f(n){\log n}}} = n\cdot 2^{\frac{1}{f(n)}} = n\left(1+O\left( \frac{1}{f(n)}\right) \right) = n+o(n). $$
By Corollary \ref{coro emu}, we derive: 
\begin{corollary}\label{coro emu us}
	For any parameter $\epsilon <1$ and any $n$-vertex graph $G=(V,E)$, our algorithm constructs a $\left(1+\epsilon,\beta \right)$-emulator with
	$ n+o(n) $
	edges in $poly(n)$ deterministic time in the centralized model, where $$\beta =\left(\frac{{\log {\log n}}}{\epsilon}\right)^{(1 + o(1)){\log {\log n}}}.$$
\end{corollary}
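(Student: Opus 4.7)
The plan is to instantiate Corollary~\ref{coro emu} with a carefully chosen slow-growing value of $\kappa$ as a function of $n$, as already hinted in the paragraph preceding the statement. The goal is to simultaneously make $n^{1/\kappa} = 1 + o(1)$ (so that the size becomes ultra-sparse) and keep $\log\kappa = (1+o(1))\log\log n$ (so that the stretch takes the claimed form).

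First I would set $\kappa = f(n)\cdot \log n$, where $f(n) = \omega(1)$ is any function growing arbitrarily slowly, say $f(n) = \log\log\log n$. Applying Corollary~\ref{coro emu} with this $\kappa$ yields a $\oeb$-emulator with at most $\nfrac$ edges and $\beta = \betaemu$, constructed in deterministic polynomial time. Nothing in the hypotheses of Corollary~\ref{coro emu} prevents $\kappa$ from depending on $n$, so this substitution is legal.

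Next I would verify the size bound by a short computation:
\begin{equation*}
n^{1+\frac{1}{\kappa}} \;=\; n\cdot n^{\frac{1}{f(n)\log n}} \;=\; n\cdot 2^{\frac{\log n}{f(n)\log n}} \;=\; n\cdot 2^{1/f(n)} \;=\; n\cdot\left(1 + O\!\left(\tfrac{1}{f(n)}\right)\right) \;=\; n+o(n),
\end{equation*}
using $f(n)=\omega(1)$ in the last step. For the stretch, write $\log\kappa = \log\log n + \log f(n)$. Since $f(n)$ grows arbitrarily slowly, $\log f(n) = o(\log\log n)$, hence $\log\kappa = (1+o(1))\log\log n$. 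Substituting into $\betaemu$ produces a base of order $\log\log n/\epsilon$ (up to a multiplicative constant) raised to the exponent $(1+o(1))\log\log n - 1$.

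The only remaining point, and essentially the only subtle step, is absorbing the multiplicative constant in the base and the additive $-1$ in the exponent into the $(1+o(1))$ factor of the final exponent. This is legitimate because any fixed constant $C$ can be written as $C = (\log\log n/\epsilon)^{o(1)}$ for large $n$, so that $(C\log\log n/\epsilon)^{\log\log n} = (\log\log n/\epsilon)^{(1+o(1))\log\log n}$, and the $-1$ in the exponent is trivially absorbed. No new combinatorial argument is needed; the entire proof reduces to asymptotic bookkeeping on top of Corollary~\ref{coro emu}.
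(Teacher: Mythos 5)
Your proposal is correct and takes essentially the same route as the paper: the paper likewise derives Corollary \ref{coro emu us} by substituting $\kappa = f(n)\cdot\log n$ with $f(n)=\omega(1)$ into Corollary \ref{coro emu} and performing the identical computation $n^{1+1/\kappa} = n\cdot 2^{1/f(n)} = n+o(n)$. Your explicit bookkeeping for the stretch (writing $\log\kappa = \log\log n + \log f(n) = (1+o(1))\log\log n$ and absorbing the constant in the base and the $-1$ in the exponent into the $(1+o(1))$ factor) is sound and in fact more detailed than what the paper records.
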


%
%
% 			NEW SECTION	
%	
%

\newcommand{\betadist}{\left(\frac{ {\log \kappa\rho}+ \rho^{-1} }{\epsilon\rho }\right)^{ {\log \kappa\rho}+ \rho^{-1}+O(1) }}

Using techniques discussed in Section \ref{sec congest} one can improve the running time in this result to $O(|E|\cdot n^\rho)$ , for an arbitrarily small parameter $\rho > 0$, at the expense of increasing $\beta$ to 

\begin{equation*}
\left(\frac{ {\log (\rho{\log n})}+ \rho^{-1} }{\epsilon\rho }\right)^{ {\log (\rho{\log n})}+ \rho^{-1}+O(1) }.
\end{equation*}
\section{A Construction of Ultra-Sparse Near-Additive Emulators in the \congest\ Model}\label{sec congest}
In this section we provide an implementation of the algorithm described in Section \ref{sec cent} in the distributed \congestmo. Here we aim at a low polynomial time, i.e., $O(n^\rho)$ for an arbitrarily small constant parameter $1/\kappa <\rho< 1/2$. Recall that $\kappa$ is a parameter that controls the size of the resulting emulator. 
We will show that 
for any parameters $\kappa\geq 2$, and ${1}/{\kappa}\leq \rho<{1}/{2}$, and any $n$-vertex unweighted undirected graph $G=(V,E)$, our algorithm constructs a $\oeb$-emulator with
at most $ \nfrac $	edges,
in 
$ O\left(n^\rho\cdot \beta \right)$
deterministic time in the \congest\ model, where $\beta =\betadist$.

In particular, by setting $\kappa = \omega({\log n})$, we obtain a $(1+\epsilon,\beta)$-emulator of size $n+o(n)$, with $\beta = 	
\left(\frac{ {\log (\rho{\log n})}+ \rho^{-1} }{\epsilon\rho }\right)^{ {\log (\rho{\log n})}+ \rho^{-1}+O(1)}$, in deterministic \congestmo\ in low polynomial time.

From this point until the end of the paper, we assume that all vertices have unique IDs such that for all $v,\ v.ID\in [0, n-1]$, and all vertices know their respective IDs. Moreover, we assume that all vertices know the number of vertices $n$. In fact, our results apply even if vertices know an estimate $\tilde{n}$ for $n$, where $n\leq \tilde{n}\leq poly(n)$, and have distinct ID numbers in the range $[1,\tilde{n}]$.

%==========

\subsection{The Construction}\label{sec emu dist construction}
As in the centralized variant of the algorithm, the distributed variant also initializes $H= \emptyset$ and proceeds in phases. The input to each phase $i\in [0,\ell]$ is a collection of clusters $P_i$, a degree parameter $deg_i$ and a distance threshold parameter $\delta_i$. 
The parameters $\ell,\{\deg_i,\delta_i\ | \ i\in[0,\ell]\}$ are slightly different in the current variant of the algorithm, and are specified in Section \ref{sec param dist}. The set $P_0$ is initialized as the partition of $V$ into singleton clusters.

In the distributed model, sequentially considering clusters requires too much time. Hence, the definition of popular clusters and cluster centers is slightly different in the distributed variant of the algorithm. 

For every index $i\in [0,\ell]$, a pair of distinct clusters $C,C'\in P_i$ and their respective centers $r_C,r_{C'}$ are said to be \emph{neighboring clusters} and \emph{neighboring cluster centers} if $d_G(r_C,r_{C'})\leq \delta_i$. A cluster $C$ and its center $r_C$ are said to be \textit{popular}, if $C$ has at least $deg_i$ neighboring clusters. 

Intuitively, each phase is divided into two consecutive steps. The \emph{superclustering step} of phase $i$ begins by detecting popular clusters from $P_i$ and clusters that have a neighboring popular cluster, and continues by grouping them into superclusters. When this step terminates, all clusters that have not been superclustered are not popular, and also, all of their neighboring clusters are not popular. Denote by $U_i$ the set of clusters from $P_i$ that did not join a supercluster during phase $i$. In the \emph{interconnection step}, clusters from $U_i$ are interconnected with their neighboring clusters. The details of the implementation of the superclustering step and the interconnection step are given in Sections \ref{sec super} and \ref{sec intercon}, respectively.

As in the centralized version, we will show that in the last phase $\ell$, we will have that $|P_\ell|\leq deg_\ell$, and therefore there are no popular clusters. Hence, the superclustering step of this phase is skipped, and we move directly to the interconnection step. 

\subsubsection{Setting Parameters}\label{sec param dist}
Define recursively $R_0= 0$, and for every $i\in [1,\ell]$ define $R_{i+1} = (\frac{4}{\rho}+2)\delta_i+R_i$. The distance threshold parameter is defined to be $\delta_i = \epsi+2R_i$, for every $i\in [0,\ell]$. 

In our distributed implementation of the algorithm, the execution of each phase $i$ requires $\Omega(deg_i)$ time. Recall that we aim at a low polynomial time. Therefore, we ensure that $deg_i\leq n^\rho$ for all phases $i\in [0,\ell]$. We divide the phases into two stages. In the exponential growth stage, that consists of phases $\{0,1,\dots,i_0 = \lfloor{\log \kappa\rho }\rfloor \}$, we set $deg_i = \degi$. For the fixed growth stage, that consists of phases $\{ i_0+1,i_0+2\dots, \ell = i_0+\frac{\kappa+1 }{\kappa\rho}-1 \}$, we set $deg_i = n^\rho$.

\subsubsection{Superclustering Step}\label{sec super}
In this section, we provide the execution details for the superclustering step of phase $i\in [0,\ell-1]$. During this step, we complete three tasks. The first task is to detect the set of popular clusters. The second task is to select representatives around which superclusters will be constructed. The third and most complicated task is to construct the superclusters around the selected representatives, such that all popular clusters are superclustered.

\textbf{Task 1: Detecting popular clusters.} 
To detect popular clusters, we employ the modified Bellman-Ford exploration, devised in \cite{ElkinMatar}. 

Generally speaking, we initiate a modified parallel Bellman-Ford exploration from the set of centers of clusters in $P_i$. 
The exploration consists of $\delta_i+1$ strides. In stride $0$, 
each vertex $v\in V$ initializes a list $\mathcal{L}(v)=\emptyset$. Each center $r_C$ of a cluster $C\in P_i$ writes the element $\langle r_C ,0\rangle$ to its list $\mathcal{L}(r_C)$. 
In every stride $j\in [1,\delta_i]$, each vertex $v\in V$ delivers to its neighbors in $G$ messages regarding the (up to) $deg_i+1$ cluster centers it has learnt about during stride $j-1$.
If a vertex has received messages regarding more than $deg_i+1$ centers during some stride $j\in [0,\delta_i-1]$, it arbitrarily chooses $deg_i+1$ of these messages to forward during stride $j+1$. 
Observe that stride $0$ requires $O(1)$ time, and each one of the strides $j\in [1,\delta_i]$ require $O(deg_i)$ communication rounds. 
When the exploration terminates, each center $r_C$ of a cluster $C\in P_i$ that has received messages regarding at least $deg_i$ other cluster centers is defined popular. Denote by $W_i$ the set of popular cluster centers. 

For completeness, the pseudo-code of the algorithm appears below. Theorem \ref{theorem popular} summarizes the properties of the algorithm. For its proof, see Theorem 2.1 in \cite{ElkinMatar}.

\begin{algorithm}[H]
	\caption{Detecting Popular Clusters}
	\label{Alg number of near neighbors}
	\begin{algorithmic}[1]
		\State \textbf{Input:} graph $G=(V,E)$, a set of clusters ${P}_i$, parameters $deg_i,\delta_i$
		\State \textbf{Output:} a set $W_i$.
		\State Each vertex $v\in V$ initializes a list $\mathcal{L}(v)= \emptyset$.
		\State Each $r_C\in S_i$ adds $\langle r_C.ID,0\rangle$ to $\mathcal{L}(r_C)$.
		\For {$j=1 \ to \ \delta_i$ }
		\For{$deg_i$ rounds}
		\If {$v$ received at most $deg_i+1$ messages $\langle r_C,j-1\rangle$}
		\State For each received messages $\langle r_C,j-1\rangle$, $v$ sends $\langle r_C,j\rangle$
		\EndIf
		\If {$v$ received more than $deg_i+1$ messages $\langle r_C,j-1\rangle$}
		\State {For arbitrary $deg_i+1$ received messages $\langle r_C,j-1\rangle$, $v$ sends $\langle r_C,j\rangle$}
		\EndIf
		\EndFor
		\EndFor 
		\State Each $r_C\in S_i$ that has learned about at least $deg_i$ other cluster centers joins $W_i$. 
	\end{algorithmic}
\end{algorithm}

\begin{theorem}\label{theorem popular} % \cite{ElkinMatar}
	Given a graph $G=(V,E)$, a collection of clusters $P_i$ centered around cluster centers $S_i$ and parameters $\delta_i,\ \deg_i$, Algorithm \ref{Alg number of near neighbors} returns a set $W_i$ in $O(deg_i\cdot\delta_i)$ time such that: 
	\begin{enumerate}
		\item $W_i$ is the set of all centers of popular clusters from $P_i$.
		\item Every cluster center $r_C\in S_i$ that did not join $W_i$ knows the identities of all the centers $r_{C'}\in S_i$ such that $d_G(r_C,r_{C'})\leq \delta_i$. Furthermore, for each pair of such centers $r_C,r_{C'}$, there is a shortest path $\pi$ between them such that all vertices on $\pi$ know their distance from $r_{C'}$.
\end{enumerate} 
\end{theorem}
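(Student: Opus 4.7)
The plan is to follow the throttled parallel Bellman-Ford argument of Theorem 2.1 in \cite{ElkinMatar}. Algorithm \ref{Alg number of near neighbors} is a Bellman-Ford exploration from the source set $S_i$ in which every vertex forwards at most $deg_i+1$ distinct cluster-center IDs per stride. The time bound is immediate by inspection: there are $\delta_i+1$ strides, each consisting of $deg_i$ inner rounds, for a total of $O(\delta_i \cdot deg_i)$ rounds. The easy direction of part (1) follows from the fact that messages carry stride counters and propagate exactly one hop per stride, so any cluster center $r_{C'}$ whose ID reaches $r_C$ must satisfy $d_G(r_C, r_{C'})\leq \delta_i$; if $r_C$ learns about $\geq deg_i$ other IDs then $C$ has $\geq deg_i$ neighbors in $P_i$ and is popular.

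For the harder direction, that every popular $r_C$ joins $W_i$, I would argue by contradiction. Assume $r_C$ is popular but learns about fewer than $deg_i$ other centers. Then there is some $\delta_i$-neighbor $r_{C'}$ whose message never reaches $r_C$. I would pick a shortest $G$-path $\pi$ from $r_{C'}$ to $r_C$ and locate the first vertex $v$ on $\pi$ at which the $(r_{C'},\cdot)$-message is dropped during some stride $j$. The drop means that $v$ saw more than $deg_i+1$ distinct cluster-center IDs in the previous stride, all corresponding to centers at distance at most $j-1$ from $v$. A triangle-inequality argument (carried out in detail in \cite{ElkinMatar}) then converts this local congestion around $v$ into the global statement that $\geq deg_i$ distinct centers from $S_i$ do have their messages successfully delivered to $r_C$, contradicting the assumption.

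For part (2), the contrapositive of the above congestion analysis gives that if $r_C \notin W_i$ then no vertex on any shortest path from a $\delta_i$-neighbor $r_{C'}$ of $r_C$ to $r_C$ ever exceeds the $deg_i+1$ forwarding cap en route to $r_C$, so $r_C$ learns every such $r_{C'}$. The shortest-path-with-distance-labels property is obtained by having each vertex $v$ record, for every new $(r_{C'},\cdot)$-message it sees, the neighbor from which the message arrived and the current stride counter $j$; since the exploration advances exactly one hop per stride, $j$ equals $d_G(v, r_{C'})$, and the recorded predecessor pointers trace out the required shortest path on which each vertex knows its distance to $r_{C'}$. The main obstacle is the \emph{local-to-global} step for the harder direction of part (1) --- turning a single congested vertex $v$ into a pool of $\geq deg_i$ distinct centers whose messages actually make it all the way to $r_C$ --- which is precisely the delicate combinatorial argument of \cite{ElkinMatar}.
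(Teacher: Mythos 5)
Your proposal matches the paper's treatment, which gives no self-contained proof and simply defers to Theorem 2.1 of \cite{ElkinMatar}: your account of the time bound, the easy direction of part (1) (stride counters force $d_G(r_C,r_{C'})\leq \delta_i$ for every learned center), and the mechanics of part (2) are correct, and you correctly isolate the local-to-global congestion argument as the piece supplied by the reference. One small caution on your sketch of that piece: the ``first vertex on $\pi$ where the message is dropped'' framing is not by itself sufficient, since the $deg_i+1$ messages that a congested vertex forwards can themselves be dropped further downstream; the argument in \cite{ElkinMatar} is an induction over strides establishing that each vertex $v$ either receives, at the correct stride, the message of every center within distance $j$ of it (which is also what yields the correct distance labels along $\pi$ in part (2)) or receives messages from at least $deg_i+1$ distinct centers.
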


%new paragraph
Alternatively\footnote{We are grateful to an anonymous reviewer of PODC'21 for pointing this to us.}, one can accomplish the task of Algorithm \ref{Alg number of near neighbors} even faster, in time $O(deg_i+\delta_i)$, via the $(S,d,k)$-source detection algorithm of Lenzen and Peleg \cite{LenzenP13}.
In the $(S,d,k)$-source detection problem, one is given a subset $S$ of sources, and two integers $d$ and $k$. The algorithm of \cite{LenzenP13} computes for every vertex $v\in V$ at most $k$ sources $s\in S$ that satisfy $d_G(v,s)\leq d$. The running time of their (deterministic) algorithm is $O({\min\{ d,D\}  }+{\min\{ k,|S|\}  })$. In our case, $S= S_i$, $d=\delta_i$, $k= deg_i$, and as a result the running time is $O(deg_i+\delta_i)$. The algorithm of \cite{LenzenP13} can also produce the shortest path $\pi$ between cluster centers $r_C$ and   $r_{C'}$ as above, within the same running time. 
Our algorithm, however, has a number of other steps that require $O(deg_i\cdot \delta_i)$ time, and thus using the (simpler) algorithm given in Algorithm \ref{Alg number of near neighbors} for detecting popular clusters is good enough for our purposes. 
%=======
%=======
%=======
%=======
%=======
%=======

\textbf{Task 2: Selecting representatives.}
To select a subset of the popular clusters, we compute a $(2\delta_i+1,2\delta_i/\rho)$-ruling set for $W_i$ w.r.t. the graph $G$. See Section \ref{sec def ruling setes} for the definition of ruling sets. 
This is done using the algorithm of \cite{sew,KuhnMW18}. Theorem \ref{theorem ruling set} summarizes the properties of the returned ruling set $S_i$. 

\begin{theorem}
	\label{theorem ruling set}
	{\normalfont \cite{sew,KuhnMW18}}
	Given a graph $G=(V,E)$, a set of vertices $W_i\subseteq V$ and parameters $q\in\{1,2,\ldots \},c>1$, 
	one can compute a $(q+1,cq)$-ruling subset for $W_i$ in $O(q\cdot c\cdot n^\frac{1}{c})$ deterministic time, in the \congestmo. 
\end{theorem}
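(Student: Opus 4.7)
The plan is to construct $A$ iteratively in $c$ stages using the deterministic MIS-style competition scheme underlying \cite{sew,KuhnMW18}. At the start, $A=\emptyset$ and every vertex of $W_i$ is active. In each stage $j\in[1,c]$, I would use a block of $(\log n)/c$ bits of the ID numbers to group active vertices into equivalence classes, so that at most $O(n^{1/c})$ distinct class labels are in play inside any given depth-$q$ ball. Each active vertex then performs a depth-$q$ exploration in $G$ to learn the identity of the minimum-ID active vertex in its own class; vertices that are local class-minima within $G$-distance $q$ become candidates, and a short collision-avoidance phase (another depth-$q$ min-ID comparison, this time across classes) guarantees that any two vertices that actually join $A$ in this stage lie at $G$-distance at least $q+1$ from each other. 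All active vertices within $G$-distance $q$ of a newly added $A$-member are deactivated before stage $j+1$.

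The correctness of the output as a $(q+1,cq)$-ruling set would then rest on two claims. The separation property follows from the collision-avoidance step together with the deactivation rule: every time a vertex is added to $A$, every active vertex in its $q$-neighborhood is removed, so no later addition lies within distance $q$ of any earlier one. The covering property is proved by induction on $j$: I would show that after stage $j$, every vertex of $W_i$ is either still active or has $G$-distance at most $jq$ to $A$, since each deactivation attaches one more $q$-hop to an existing chain ending at a member of $A$. Finally, I would argue $R_c=\emptyset$ by observing that the bit-block partitioning refines the effective ID space by a factor $n^{1/c}$ per stage, so a vertex surviving all $c$ stages would be a singleton class and would therefore have joined $A$ itself; in particular every remaining vertex has $A$-distance at most $cq$.

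The main obstacle is the \congest-model implementation of each stage within $O(q\cdot n^{1/c})$ rounds, since a naive broadcast of candidate IDs over depth-$q$ balls could cause many distinct IDs to traverse the same edge in the same round. I would resolve this using the same pipelined Bellman--Ford mechanism as Algorithm \ref{Alg number of near neighbors}: in each of the $q$ strides of the exploration, every vertex forwards at most one candidate per equivalence class, and since only $O(n^{1/c})$ class labels are active at any point, a single stride can be completed in $O(n^{1/c})$ communication rounds. Summing over the $q$ strides gives $O(q\cdot n^{1/c})$ rounds per stage, and summing over the $c$ stages yields the claimed $O(q\cdot c\cdot n^{1/c})$ total deterministic running time.
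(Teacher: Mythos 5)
The paper does not actually prove Theorem \ref{theorem ruling set}; it imports it as a black box from \cite{sew,KuhnMW18}, so your attempt has to stand on its own against the standard construction in those works. That construction processes the $n^{1/c}$ ID-classes of each of the $c$ levels \emph{sequentially}: it maintains a partial ruling set $B$ and, for each class in turn, adds those of its members lying at distance at least $q+1$ from the current $B$. Your stage instead selects, in parallel, the per-class local ID-minima and then thins them by a cross-class local-minimum comparison, and this is where the coverage half of the argument breaks. First, a candidate $v$ that loses the cross-class comparison to some $u$ with smaller ID is neither added to $A$ nor necessarily deactivated: $u$ may itself lose to a yet smaller $u'$, and the eventual local minimum along this decreasing-ID chain that does join $A$ can be at distance up to $q$ times the chain length from $v$; so $v$ survives the stage while being far from $A$. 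Second, the termination claim ``a vertex surviving all $c$ stages is a singleton class and would therefore have joined $A$'' is not justified. If the stage-$j$ classes are given by the $j$-th block of bits alone, classes do not refine to singletons: with $c=3$, a vertex with ID $999$ can be blocked in stages $1,2,3$ by distinct active vertices with IDs $900$, $190$, $129$, each agreeing with it on that stage's block and having a smaller ID, and can thus survive all stages without ever becoming a candidate and without ever entering the $q$-neighborhood of a winner. If instead the classes are cumulative (determined by the first $j$ blocks), stage $j$ has $n^{j/c}$ classes, which destroys the $O(n^{1/c})$-messages-per-stride congestion bound you rely on. And even a genuine singleton-class candidate can still lose the cross-class collision avoidance, so singleton classes do not imply membership in $A$.

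What does work in your write-up is the separation property (two vertices that both join in one stage are indeed at distance at least $q+1$, by the mutual-minimum argument) and the pipelined Bellman--Ford accounting for a single stage. The missing idea is the \emph{sequential} order over classes within a level: when class $g$ is processed against the already-built $B$, every member of class $g$ either joins $B$ or is within distance exactly $q$ of the current $B$ in a single step, so the covering radius grows by precisely $q$ per level (hence $cq$ overall) and the time is $O(q)$ per class, $O(q\cdot n^{1/c})$ per level, and $O(q\cdot c\cdot n^{1/c})$ in total. Sequential processing is what collapses your unbounded ``loser points to a smaller loser'' chains into a single hop; without it the covering bound $cq$ cannot be established.
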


For the sake of brevity, denote $sep_i =2\delta_i+1 $ and $rul_i =({2}/{\rho})\cdot\delta_i $.
By Theorem \ref{theorem ruling set}, the returned subset $S_i$ is a $(sep_i,rul_i)$\textit{-ruling set} for the set of popular clusters $W_i$.

%=======
%=======
%=======
%=======
%=======
%=======
\textbf{Task 3: Constructing superclusters.}
First, a BFS exploration rooted at the ruling set $S_i$ is executed to depth $rul_i+\delta_i$ in $G$. As a result, a forest $F_i$ is constructed, rooted at vertices of $S_i$. 

Consider a cluster center $r_C\in S_i$, and let $T_C$ be its tree in the forest $F_i$.
A cluster $C'$ is said to be spanned by $T_C$ if its center $r_{C'}$ is spanned by $T_C$. Intuitively, we would like to form a new supercluster $\widehat{C}$ centered around $r_C$, that will contain all the clusters $C'$ spanned by $T_C$. This requires informing $r_C$ of all the centers of clusters that are spanned by $T_C$, which may cause significant congestion. Therefore, we use a different approach, that may form several superclusters that will cover all clusters spanned by $T_C$.

%==========

To form superclusters, we backtrack the BFS exploration that has created $T_C$. The backtracking procedure operates for $rul_i+\delta_i$ strides, each consists of $\lfloor2deg_i\rfloor+2$ communication rounds. During each stride $d$, each vertex $v\in V$ that is spanned by $T_C$ and has $d_{T_C}(r_C,v) = rul_i+\delta_i-d$ sends messages to its parent in the tree $T_C$. For $d=0$, let $M=\emptyset$. For $d>0$, let $M$ be the set of messages that $v$ has received during stride $d-1$ of the procedure. If $v$ is a center of a cluster from $P_i$, it adds the 
message $m_v = \langle v, d_G(r_C,v) \rangle$ to $M$. If the number of messages in $M$ is smaller than $2deg_i+2$, then $v$ sends all the messages in $M$ to its parent w.r.t. $T_C$ during stride $d$.

Consider the case where $|M|\geq 2deg_i+2$. In this case, we say that $v$ is a \emph{hub-vertex}. Since it cannot send $|M|$ messages to its parent in $T_C$, the vertex $v$ decides to split from $T_C$ and form new superclusters. 
Note that the vertex $v$ receives messages from its children in the tree $T_C$ only during stride $d-1$, where $d= d_{T_C}(r_C,v)$. 

If $v$ is a center of a cluster from $P_i$, it forms a single new supercluster $\widehat{C}_v$, and $v$ is set to be the center of the new supercluster $\widehat{C}_v$. For every message $m_{r_{C'}}=\langle r_{C'}, d_G(r_C,r_{C'})\rangle$ in $M$, it adds the edge $(v,r_{C'})$ to the emulator with weight $d_G(v,r_{C'}) = d_G(r_C,r_{C'})- d_G(r_C,v)$. The vertex $v$ informs $r_{C'}$ of the new edge and its weight. This is done by sending the message $m_{r_{C'}}^r=\langle (v,r_{C'}), d_G(v,r_{C'})\rangle$ along the same route that the message $m_{r_{C'}}$ has traversed.

If $v$ is not a center of a supercluster from $P_i$, we do not allow it to be a center of a cluster of $P_{i+1}$, and therefore it forms other superclusters. The vertex $v$ partitions its children in $T_C$ into sets $V_1,V_2,\dots, V_t$, such that for every $j\in [1,t]$ the number of messages that $v$ has received from all vertices in $V_j$ is between $2deg_i+2$ and $6deg_i+6$. 
For every $j\in [1,t]$, let $Z_j$ be the set of vertices in $T_C$ that have sent messages that have arrived $v$ via a vertex in $V_j$.
Intuitively, a supercluster $\widehat{C}_j$ is formed for every $j\in [1,t]$. This supercluster will contain every cluster $C'\in P_i$ such that its centers $r_{C'}$ is in $Z_j$. See Figure \ref{fig dist sup} for an illustration.

Partitioning the children of $v$ into $t$ sets is done in the following way. For a set $X$ of children of $v$, denote by $M(X)$ the set of messages that $v$ has received from all vertices in $X$. 
The vertex $v$ greedily adds its children into sets $V_1,V_2,\dots, V_{t'}$, such that each set $V_j$ is filled until $|M(V_j)|$ is at most $4deg_i+4$. Note that since the number of messages received by $v$ from each one of its children is less than $2deg_i+2$, we have that $|M(V_j)|\geq 2deg_i+2$, for any $j\in [1,t'-1]$. If for the last set $V_{t'}$ we have $|M(V_{t'})|< 2deg_i+2$, we add the set $V_{t'}$ to the set $V_{t'-1}$. Let $t$ be the number of sets formed by this process (i.e., if $|M(V_{t'})|< 2deg_i+2$ then $t= t'-1$. Otherwise, $t=t'$). Now we have that $2deg_i+2 \leq |M(V_j)| \leq 6deg_i+6$ for every $j\in [1,t]$.

For every $j\in [1,t]$, the vertex $v$ selects a single vertex $r\in Z_j$ to be the center of $\widehat{C}_j$. Then, $v$ must inform all vertices in $Z_j$ that their attempt to join $\widehat{C}$ has failed, and provide information regarding their new cluster center and superclustering edge. 
To this aim, we define the tree $T_C^j$ to be the tree that contains all paths from $T_C$ between a vertex in $Z_j$ and $v$. Observe that $T_C^j$ does not contain any other hub-vertices.

The vertex $v$ broadcasts the message $\langle r\rangle$ in the tree $T_C^j$. This informs all centers in $Z_j$ that their attempt to join $\widehat{C}$ has failed, and that the center of their new supercluster is $r$. In addition, for every $r'\in Z_j$, the vertex $v$ broadcasts the message $\langle r', d_G(r',v)+d_G(v,r)\rangle$ to all vertices in $T_C^j$. 
In particular, this step informs the vertices $r,r'$ that the edge $(r,r')$
was added to the emulator $H$ with weight $d_G(r',v)+d_G(v,r)$. 
Observe that the vertex $v$ knows $d_G(r_C,v)$, $d_G(r_C,r)$ and $d_G(r_C,r')$, and since it belongs to shortest $r_C-r$ and $r_C-r'$ paths, it can infer $d_G(v,r)$ and $d_G(v,r')$. In addition, 
by triangle inequality, these superclustering edges never shorten distances w.r.t. the graph $G$. 

After the $rul_i+\delta_i$ strides terminate, for every message $\langle r_{C'}, d_G(r_C,r_{C'}) \rangle$ that arrives to $r_C$, the edge $(r_C,r_{C'})$ is added to the emulator $H$ with weight $d_G(r_C,r_{C'})$. All vertices that belong to the cluster $C'$ centered around $r_{C'}$ join the supercluster $\widehat{C}$. 
This completes the description of the procedure for forming superclusters. 

\begin{figure}
	\centering 
	\includegraphics[scale=0.1]{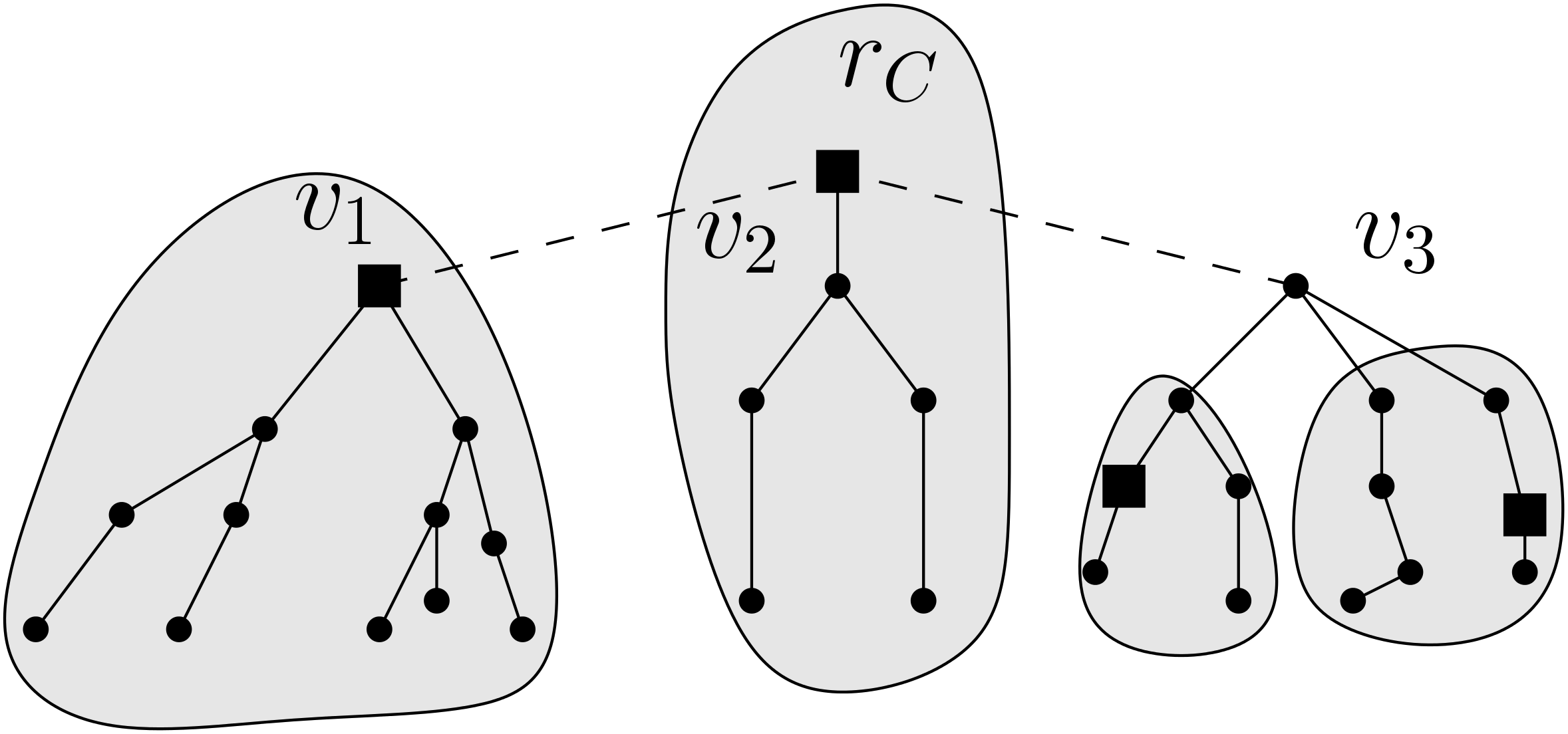}
	\caption{Forming superclusters that cover all clusters of $P_i$ spanned by the BFS tree $T_C$. In the figure, the vertex $r_C$ is the root of the tree $T_C$. The four gray areas are four superclusters, and the big squares are their respective centers. The dashed lines from $v_1,v_3$ to $r_C$ represent the fact that $v_1,v_3$ have more than $2deg_i+2$ messages to deliver to $r_C$. The vertex $v_1$ is a center of a cluster of $P_i$, and so it becomes the center of a new supercluster. The vertex $v_3$ is not a center of a supercluster, and so it divides its children into two sets, and chooses a representative from each set to become the center of its respective supercluster.}
		\label{fig dist sup}
\end{figure}

In the following lemma, we analyze the running time of Task 3, and show that all computation terminate within the $rul_i+\delta_i$ strides. 

\begin{lemma}\label{lemma task 3 rt} 
	For every index $i\in [0,\ell-1]$, Task 3 requires $O(\frac{\delta_i}{\rho}\cdot deg_i)$ communication rounds. 
\end{lemma}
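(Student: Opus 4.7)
I would decompose Task 3 into three sub-procedures and bound each separately: (a) the BFS from $S_i$ that builds the forest $F_i$; (b) the $(rul_i+\delta_i)$-stride backtracking procedure along each tree $T_C$; and (c) the downward broadcasts performed by hub-vertices inside their subtrees $T_C^j$. Since the trees $T_C$ for distinct $r_C\in S_i$ are vertex-disjoint in $F_i$ (each non-root vertex has a single parent), it suffices to bound the running time inside one tree $T_C$ and observe that work carried out in different trees proceeds concurrently without additional cost.

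For (a), a standard multi-source BFS from $S_i$ to depth $rul_i+\delta_i = O(\delta_i/\rho)$ terminates in $O(\delta_i/\rho)$ rounds in the \congest\ model. For (b), the number of strides is $rul_i+\delta_i$ and each stride consumes $\lfloor 2 deg_i \rfloor + 2 = O(deg_i)$ communication rounds, because every active (non-hub) vertex forwards at most $2deg_i+2$ distinct constant-size messages along the single edge to its parent in $T_C$. Consequently (b) runs in $O((\delta_i/\rho)\cdot deg_i)$ rounds.

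The main obstacle is (c). When a hub-vertex $v$ is detected, for each partition class $V_j$ it must deliver $|Z_j|+1 = O(deg_i)$ distinct constant-size messages (the identifier $\langle r\rangle$ plus one message per element of $Z_j$) down the subtree $T_C^j$, whose depth from $v$ is at most $rul_i+\delta_i$. By standard tree-pipelining of constant-size messages (at each round every edge carries one pending message downward), a single such broadcast finishes in $O(\mathrm{depth} + \#\mathrm{messages}) = O(\delta_i/\rho + deg_i)$ rounds. The crucial structural observation I would invoke is that, by the very definition of $T_C^j$, no other hub-vertex lies in the interior of $T_C^j$: any deeper hub absorbs the messages of its subtree and therefore never contributes to some $Z_j$ above it. Hence the subtrees $T_C^j$ arising from different hub-vertices and different indices $j$ inside a common tree $T_C$ are pairwise vertex-disjoint except at their roots, and all broadcasts can be pipelined concurrently without any edge being shared by two distinct broadcasts.

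Summing (a), (b) and (c) yields a total of $O(\delta_i/\rho) + O((\delta_i/\rho)\cdot deg_i) + O(\delta_i/\rho + deg_i) = O((\delta_i/\rho)\cdot deg_i)$ rounds, which is exactly the bound claimed in the lemma.
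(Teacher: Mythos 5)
Your proposal is correct and follows essentially the same decomposition as the paper's proof: BFS in $O(rul_i+\delta_i)$ rounds, backtracking in $(rul_i+\delta_i)$ strides of $O(deg_i)$ rounds each, and pipelined broadcasts of $O(deg_i)$ messages over depth $O(rul_i+\delta_i)$ for the hub-vertices, with $rul_i=(2/\rho)\delta_i$. Your explicit disjointness argument for the subtrees $T_C^j$ is a nice addition the paper leaves implicit; the only (harmless) omission is the symmetric case where the hub-vertex is itself a cluster center and returns fewer than $2deg_i+2$ messages to each child along the original routes, which admits the same $O(rul_i+\delta_i+deg_i)$ pipelined bound.
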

\begin{proof}
	Consider an index $i\in [0,\ell-1]$. On our way to constructing superclusters, we execute a BFS exploration that forms the forest $F_i$ and a backtracking procedure. In addition, we take care of \emph{hub-vertices}.

	The BFS exploration that forms the forest $F_i$ is executed to depth $rul_i+\delta_i$ from the set of vertices in $S_i$. Thus, it requires $O(rul_i+\delta_i)$ time. The backtracking procedure of phase $i$ is done in $rul_i+\delta_i$ strides, where each stride consists of $O(deg_i)$ communication rounds. Thus, its overall running time is $O((rul_i+\delta_i)deg_i)$.
	
	It is left to analyze the time required to take care of \emph{hub-vertices}. 	
	Let $v$ be a hub-vertex in some tree $T_C\in F_i$. 
	If $v$ is a center of a cluster of $P_i$, it forms a new supercluster around itself. 
	For every vertex $r_{C'}$ such that $v$ has received the message $m_{r_{C'}}$ from $r_{C'}$, the vertex $v$ sends a message $m_{r_{C'}}^r$. This message is sent along the same route that the message $m_{r_{C'}}$ has traversed. Note that $v$ has received less than $2deg_i+2$ messages from each one of its children. Therefore, $v$ sends less than $2deg_i+2$ messages to each one of its children. The distance each such message is required to traverse is at most $rul_i+\delta_i$. Hence, all messages returned by $v$ to their senders arrive within $O(rul_i+\delta_i+deg_i)$ communication rounds (via pipelined broadcast).
	
	If $v$ is not a center of a cluster in $P_i$, its partitions its children into sets $V_1,V_2,\dots, V_{t}$. Note that this partition is computed locally.
	Recall that for every $j\in [1,t]$, the number of messages that $v$ has received from all vertices in $V_j$ is $|Z_j|\leq 6deg_i+6$. Also recall that $v$ broadcasts $O(|Z_j|)$ messages along the edges of the tree $T_C^j$. The depth of the tree is at most $O(rul_i+\delta_i)$.
	Hence, a broadcast that originated from $v$ terminates within $O(rul_i+\delta_i+deg_i)$ communication rounds (via pipelined broadcast).
	
	Recall that $rul_i = ({2}/{\rho})\cdot\delta_i$. 
	It follows that the overall time required to complete Task 3 in phase $i$ is 
	$O((rul_i+\delta_i)deg_i) =O(\frac{\delta_i}{\rho}\cdot deg_i).$
\end{proof}

Next, we prove that all popular clusters in $P_i$ and clusters in $P_i$ that have a popular neighboring cluster are superclustered into superclusters of $P_{i+1}$. 

\begin{lemma}\label{lemma popular are clustered}
	Consider a cluster $C\in P_i$. If $C$ is popular or it has a neighboring cluster that is popular, then $C$ belongs to a supercluster of $P_{i+1}$.
\end{lemma}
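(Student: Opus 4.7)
The plan is to combine three ingredients: (1) the guarantee from Task 1 (Theorem \ref{theorem popular}) that $W_i$ contains the centers of all popular clusters of $P_i$, (2) the ruling-set property of $S_i\subseteq W_i$ which gives $d_G(r_C, S_i) \leq rul_i$ for every popular $r_C$, and (3) the fact that the BFS forest $F_i$ in Task 3 is grown from $S_i$ to depth $rul_i + \delta_i$, and that the backtracking procedure \emph{guarantees} every center spanned by a tree of $F_i$ ends up in some supercluster of $P_{i+1}$ (either the one centered at the root, or one formed at an intermediate hub-vertex).

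First I will handle the case where $C$ itself is popular. Then $r_C \in W_i$ by Theorem \ref{theorem popular}, so there is some $r \in S_i$ with $d_G(r_C, r) \leq rul_i \leq rul_i + \delta_i$. Hence $r_C$ is spanned by the BFS tree $T$ of $F_i$ rooted at $r$. I will then argue that when the backtracking procedure reaches the stride corresponding to $r_C$'s depth in $T$, the center $r_C$ inserts its own message $m_{r_C}$ into the stream travelling up toward $r$. This message is either absorbed by a hub-vertex $v$ on the way (and then by construction $C$ is placed into one of the superclusters $\widehat{C}_j$ that $v$ forms, or into $\widehat{C}_v$ if $v$ is itself a cluster center), or it reaches the root $r$ and $C$ joins $\widehat{C}$ centered at $r$. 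In every case $C \in P_{i+1}$.

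Next I will handle the case where $C$ has a popular neighbor $C'$. Then $r_{C'} \in W_i$ and $d_G(r_C, r_{C'}) \leq \delta_i$. By the ruling-set property applied to $r_{C'}$, there is $r \in S_i$ with $d_G(r_{C'}, r) \leq rul_i$, so by the triangle inequality
\[
d_G(r_C, r) \leq d_G(r_C, r_{C'}) + d_G(r_{C'}, r) \leq \delta_i + rul_i,
\]
which is exactly the depth to which the BFS from $S_i$ is extended. Thus $r_C$ is spanned by some tree of $F_i$, and the same backtracking argument as in the previous case places $C$ in a supercluster of $P_{i+1}$.

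The main obstacle is making precise that \emph{no} center spanned by $F_i$ is ever dropped: every message either reaches the root, is aggregated into $\widehat{C}_v$ by a hub-vertex that is itself a cluster center, or is reassigned by a non-cluster-center hub-vertex $v$ to one of the newly formed superclusters $\widehat{C}_j$ via the greedy partition of $v$'s children. I will need to verify that the partition covers \emph{all} of $v$'s children (not just those that fit into the first $t-1$ sets) — this is exactly why the construction merges a leftover set $V_{t'}$ with $V_{t'-1}$ when $|M(V_{t'})| < 2deg_i+2$. With that observation, no center is ever lost along the backtracking path, and the lemma follows.
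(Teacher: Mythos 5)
Your proof is correct and takes essentially the same route as the paper's: the two-case distance argument (a popular center is within $rul_i$ of $S_i$ by the ruling-set property; a center with a popular neighbor is within $rul_i+\delta_i$ by the triangle inequality), after which membership in the BFS forest $F_i$ yields superclustering. The paper treats the final step --- that every cluster spanned by a tree of $F_i$ is actually placed into some supercluster by the backtracking procedure, whether at the root or at a hub-vertex --- as implicit in the description of Task 3, whereas you spell it out explicitly; this is a welcome elaboration rather than a different approach.
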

\begin{proof}
	Recall that $W_i$ is the set of centers of popular clusters from $P_i$ and that $S_i$ is a $(sep_i,rul_i)$-ruling set for $W_i$.
	Consider a cluster $C\in P_i$. If $C$ is popular, then by definition of $S_i$ there exists a vertex $v\in S_i$ such that the distance between $v$ and the center of $C$ is at most $rul_i$. Hence, the cluster $C$ became superclustered during the superclustering step of phase $i$. 
	
	If $C$ is not popular but it has a neighboring cluster $C'$ that is popular, then there exists a vertex $v\in S_i$ such that the distance between $v$ and the center of $C'$ is at most $rul_i$. The distance between the centers of $C,C'$ is at most $\delta_i$. Therefore, there exists a vertex $v\in S_i$ with distance at most $rul_i+\delta_i$ from the center of $C$. Hence, the cluster $C$ became superclustered during the superclustering step of phase $i$. 
\end{proof}

\subsubsection{Interconnection Step}\label{sec intercon}
In this section, we provide the execution details for the interconnection step of phase $i\in [0,\ell]$. 
Denote by $U_i$ the set of clusters from $P_i$ that have not been superclustered during the superclustering step of phase $i$. 
In the interconnection step of phase $i$, each cluster in $U_i$ is connected with all its neighboring clusters from $P_i$. 

For $i\in [0,\ell-1]$, by Lemma \ref{lemma popular are clustered} we have that every cluster $C\in U_i$ is not popular. By Theorem \ref{theorem popular} the center $r_C$ of the cluster $C$ already knows the identities and distances to all its neighboring cluster centers, since Algorithm \ref{Alg number of near neighbors} was executed during the superclustering step. 
Hence, the center $r_C$ knows which edges it needs to add to the emulator $H$, as well as their weights. 

Let $r_{C'}$ be a neighboring cluster center of $r_C$. The center $r_C$ must inform $r_{C'}$ that $C$ has joined $U_i$, and therefore the edge $(r_C,r_{C'})$ is added to the emulator $H$. For this aim, we employ Algorithm \ref{Alg number of near neighbors}, as in the superclustering step of phase $i$, from the centers of all clusters in $U_i$. Recall that by Lemma \ref{lemma popular are clustered} we have that every cluster that has a neighboring cluster that is popular is superclustered. Since $C\in U_i$, we conclude that $r_{C'}$ is not popular. Hence, by Theorem \ref{theorem popular} we have that during the current execution of Algorithm \ref{Alg number of near neighbors}, the cluster $r_{C'}$ has received a message from $r_C$, and thus it knows the identity of $r_C$ and the distance $d_G(r_C,r_{C'})$. Now, both endpoints of the edge $(r_C,r_{C'})$ know that it is added to the emulator $H$ with weight $d_G(r_C,r_{C'})$. 

The interconnection step of phase $\ell$ is slightly different.
Recall that the superclustering step of phase $\ell$ is skipped. However, as we later show (\cref{eq explicit pl}), all clusters in $P_\ell$ are not popular. In the interconnection step of phase $\ell$, we execute Algorithm \ref{Alg number of near neighbors} from the set $P_\ell$ with parameters $deg_\ell,\delta_\ell$. By Theorem \ref{theorem popular}, we are guaranteed that all centers of clusters in $P_\ell$ know the identities and distance to all their neighboring cluster centers. Therefore, each center of a cluster $C\in P_\ell$ can add to the emulator $H$ edges to all its neighboring cluster centers. 
This completes the description of the interconnection step of phase $i$.

\subsection{Analysis of the Construction}
In this section, we analyze the size and stretch of the emulator, and the running time that is required to construct it.
We begin by showing that in the final phase $\ell$, we have $|P_\ell|\leq deg_\ell$. Thus, there are no popular clusters, and the superclustering step can be safely skipped. To do so, we show that the number of clusters in $P_{i+1}$ is significantly smaller than the number of clusters in $P_i$. 

Recall that during the superclustering step of every phase $i\in [0,\ell-1]$, a BFS exploration is executed from the ruling set $S_i$, and that $F_i$ is the ruling forest produced by this exploration. 

\begin{lemma}
	\label{lemma bouns pi1}
	Let $i\in [0,\ell-1]$. For every tree $T_C$ in $F_i$, let $x_C$ be the number of clusters of $P_i$ that are spanned by $T_C$. Then, the number of superclusters formed around vertices of $T_C$ is at most $x_C/(deg_i+1)$.
\end{lemma}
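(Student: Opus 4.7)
The plan is to show that the superclusters formed at vertices of $T_C$ induce a partition of the $x_C$ clusters of $P_i$ spanned by $T_C$, lower-bound the size of each such supercluster, and finish with a short arithmetic comparison.

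First I would argue that each cluster $C'\in P_i$ with $r_{C'}\in T_C$ joins exactly one supercluster formed at a vertex of $T_C$. The message $m_{r_{C'}}$ produced by $r_{C'}$ during the backtracking procedure travels up $T_C$ toward $r_C$ until it is either intercepted by the first hub-vertex $v$ on its path (and then $C'$ joins $\widehat{C}_v$ if $v$ is a cluster center of $P_i$, or the appropriate $\widehat{C}_j$ otherwise), or delivered all the way to $r_C$ (and $C'$ joins $\widehat{C}$). Since each message has a unique fate, $x_C$ equals the sum, over the superclusters formed at vertices of $T_C$, of the number of $P_i$-clusters they contain.

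Next I would lower bound every \emph{hub-formed} supercluster by $2deg_i+2$ clusters. For a hub-vertex $v$ we have $|M|\geq 2deg_i+2$ by definition. If $v$ is itself a cluster center, the unique supercluster $\widehat{C}_v$ contains one $P_i$-cluster for each message in $M$, so $|\widehat{C}_v|\geq 2deg_i+2$. If $v$ is not a cluster center, then by the greedy partitioning scheme described in Task 3 every set $V_j$ satisfies $|M(V_j)|\geq 2deg_i+2$, and $|\widehat{C}_j|=|Z_j|=|M(V_j)|$.

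The main obstacle is that the supercluster $\widehat{C}$ centered at $r_C$ does not enjoy this lower bound a priori, and I would handle it by a case split. If $T_C$ contains no hub-vertex, then every message reaches $r_C$, so $\widehat{C}$ is the unique supercluster and $|\widehat{C}|=x_C$; here I use that $r_C\in S_i\subseteq W_i$ is popular and so has at least $deg_i$ other cluster centers $r_{C'}$ with $d_G(r_C,r_{C'})\leq \delta_i$, and by the $(sep_i,rul_i)$-ruling property with $sep_i=2\delta_i+1$ each such $r_{C'}$ is strictly closer to $r_C$ than to any other member of $S_i$ and therefore lies in $T_C$, yielding $x_C\geq deg_i+1$ and $S=1\leq x_C/(deg_i+1)$. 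If instead there are $K\geq 1$ hub-formed superclusters, then $S=K+1$ and the partition identity gives
\[
x_C \;\geq\; |\widehat{C}|+K(2deg_i+2) \;\geq\; 1+2K(deg_i+1).
\]
Since $K\geq 1$ implies $(K+1)(deg_i+1)\leq 1+2K(deg_i+1)$, we obtain $S\leq x_C/(deg_i+1)$ in this case as well, completing the proof.
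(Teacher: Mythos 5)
Your proof is correct and follows essentially the same route as the paper's: the same case split (no hub-formed superclusters, handled via popularity of $r_C$ together with the $(2\delta_i+1,rul_i)$-ruling separation to get $x_C\geq deg_i+1$; versus at least one hub-formed supercluster, handled via the $2deg_i+2$ lower bound on each such supercluster) and the same concluding arithmetic, up to your keeping the extra $|\widehat{C}|\geq 1$ term that the paper simply drops. The explicit partition framing at the start is a slightly more careful presentation of what the paper leaves implicit, but it is not a different argument.
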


\begin{proof}
	Consider a tree $T_C$ in $F_i$, and let $r_C\in S_i$ be the root of $T_C$. Recall that $r_C$ is the center of a popular cluster $r_C$. 
	Let $s$ be the number of superclusters formed around vertices of $T_C$.

	If $s = 1$, then we have that all clusters that are spanned by $T_C$ belong to the supercluster formed around the center $r_C$. Recall that the set $S_i$ is a $(sep_i,rul_i)$-ruling set, where $sep_i = 2\delta_i+1$. 
	Hence, for every neighboring cluster center $r_{C'}$ of $r_C$, we have that $d_G(r_C,r_{C'})\leq \delta_i$ and thus, $r_C$ is the closest vertex to $r_{C'}$ in $S_i$. Therefore, $r_{C'}$ is spanned by $T_C$. Since $C$ is a popular cluster, it has at least $deg_i$ neighbors. Note that $C$ itself is also spanned by $T_C$. Hence, we have that $x_C \geq deg_i+1$, and the claim holds.

	Consider the case where $s\geq 2$. Every cluster formed around a vertex $v\in T_C$, where $v\neq r_C$ contains at least $2deg_i+2$ clusters of $P_i$ that are spanned by $T_C$. 
	We conclude that 
	
	\begin{equation*}
	\label{eq numtree}
	\begin{array}{lclclclclc}
	\frac{x_C} {deg_i+1} &\geq & 
	\frac{(s-1)\cdot (2deg_i+2)} {deg_i+1} & = & 2s-2 &\geq & s.
	\end{array}
	\end{equation*}
	
\end{proof}

Observe that Lemma \ref{lemma bouns pi1} implies that for every $i\in [0,\ell-1]$ we have that 
\begin{equation}
\label{eq explicit pi bound}
|P_{i+1}| \leq |P_i|\cdot deg_i^{-1}.
\end{equation}

Next, we provide an explicit bound on the number of superclusters formed in phase $i$. 

\begin{lemma}
	\label{lemma exp pi1}
	For every $i\in [0,i_0+1]$, we have that $|P_{i}|\leq n^{1-\frac{2^i-1}{\kappa}}$.
\end{lemma}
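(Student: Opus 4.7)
The plan is to argue by straightforward induction on $i$, using the per-phase shrinkage bound \cref{eq explicit pi bound} together with the fact that during the exponential growth stage (i.e., for phases $i \in [0, i_0]$) the degree parameter is exactly $deg_i = n^{2^i/\kappa}$.

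The base case $i=0$ is immediate: $|P_0| = n = n^{1 - (2^0-1)/\kappa}$, since $P_0$ is the partition of $V$ into singletons.

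For the inductive step, suppose the bound holds for some $i \in [0, i_0]$. Since $i \leq i_0$, we are still inside the exponential growth stage, so $deg_i = n^{2^i/\kappa}$. Applying \cref{eq explicit pi bound} (which in turn comes from Lemma \ref{lemma bouns pi1}) together with the inductive hypothesis yields
\begin{equation*}
|P_{i+1}| \;\leq\; \frac{|P_i|}{deg_i} \;\leq\; \frac{n^{1 - (2^i-1)/\kappa}}{n^{2^i/\kappa}} \;=\; n^{1 - (2^{i+1}-1)/\kappa},
\end{equation*}
which is exactly the claim for index $i+1$. Since $i$ ranges over $[0,i_0]$, this establishes the bound for every $i \in [1, i_0+1]$, and combined with the base case, for every $i \in [0, i_0+1]$.

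There is no real obstacle here: the only thing to watch is that the recursion $deg_i = n^{2^i/\kappa}$ is only valid throughout the exponential stage $i \leq i_0$, which is why the statement is restricted to $i \leq i_0+1$ (we invoke the recursion one last time at $i = i_0$ to derive the bound at $i_0+1$). Beyond this index, the bound on $|P_i|$ will have to be analyzed using the fixed growth rate $deg_i = n^\rho$ separately.
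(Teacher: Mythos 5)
Your proof is correct and follows essentially the same route as the paper: induction on $i$, with the base case $|P_0|=n$ and the inductive step combining \cref{eq explicit pi bound} with $deg_i = n^{2^i/\kappa}$ for $i\le i_0$. Your remark about why the range stops at $i_0+1$ is a nice clarification that the paper leaves implicit.
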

\begin{proof}
	The proof is by induction on the index of the phase $i$. For the base case, we have that $|P_0|=0$, and also $n^{1-(2^0-1)/\kappa} = n$.
	
	Assume that the claim holds for some $i\in [0,i_0]$, and prove that it holds for $i+1$. 
	
	By the induction hypothesis and by \cref{eq explicit pi bound} we have
	
	\begin{equation*}
	\label{eq explicit pi bound2}
	|P_{i+1}| \leq |P_i|\cdot deg_i^{-1} < n^{1-\frac{2^i-1}{\kappa}}\cdot n^{-\frac{2^i}{\kappa}} = n^{1-\frac{2^{i+1}-1}{\kappa}}. 
	\end{equation*}
\end{proof}

\begin{lemma}
	\label{lemma exp pi2}
	For every $i\in [i_0+1,\ell]$, we have that $|P_{i}|\leq n^{1-\frac{2^{i_0+1}-1}{\kappa}-(i-i_0-1)\rho}$.
\end{lemma}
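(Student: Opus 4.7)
The plan is a straightforward induction on $i$ starting from $i = i_0+1$, using the recurrence $|P_{i+1}| \leq |P_i| \cdot deg_i^{-1}$ that follows from \cref{eq explicit pi bound}, together with the fact that in the fixed growth stage we set $deg_i = n^\rho$ for every $i \in [i_0+1, \ell-1]$.

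For the base case $i = i_0+1$, the desired bound reads $|P_{i_0+1}| \leq n^{1-\frac{2^{i_0+1}-1}{\kappa}}$, since the term $(i-i_0-1)\rho$ vanishes. This is precisely the bound already guaranteed by Lemma \ref{lemma exp pi1} applied at the index $i_0+1$, which lies in its range of validity $[0, i_0+1]$.

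For the inductive step, suppose the claim holds for some index $i \in [i_0+1, \ell-1]$. By \cref{eq explicit pi bound} and the induction hypothesis, and using $deg_i = n^\rho$ for $i \geq i_0+1$,
\begin{equation*}
|P_{i+1}| \;\leq\; |P_i| \cdot deg_i^{-1} \;\leq\; n^{1-\frac{2^{i_0+1}-1}{\kappa}-(i-i_0-1)\rho} \cdot n^{-\rho} \;=\; n^{1-\frac{2^{i_0+1}-1}{\kappa}-(i-i_0)\rho},
\end{equation*}
which matches the claim at index $i+1$.

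There is no real obstacle here; the work has already been done in Lemma \ref{lemma bouns pi1} (yielding \cref{eq explicit pi bound}) and in Lemma \ref{lemma exp pi1} (supplying the base case). The only thing to verify is that $deg_i = n^\rho$ throughout the range $[i_0+1, \ell-1]$, which is just the definition of the fixed growth stage given in Section \ref{sec param dist}. Thus the proof amounts to invoking Lemma \ref{lemma exp pi1} for the base case and iterating the $n^{-\rho}$ shrinking factor $(i - i_0 - 1)$ times.
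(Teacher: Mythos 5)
Your proof is correct and matches the paper's argument exactly: the base case is Lemma \ref{lemma exp pi1} at index $i_0+1$, and the inductive step applies \cref{eq explicit pi bound} with $deg_i = n^\rho$ in the fixed growth stage. No issues.
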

\begin{proof}
	The proof is by induction on the index of the phase $i$. The base case $(i=i_0+1)$ holds since by Lemma \ref{lemma exp pi1} we have
	\begin{equation}
	\label{eq bound pi0}
	|P_{i_0+1}| \leq n^{1-\frac{2^{i_0+1}-1}{\kappa}}.
	\end{equation}

	Assume the claim holds for some $i\in [i_0+1,\ell-1]$, and prove that it holds for $i+1$. 
	By the induction hypothesis and by \cref{eq explicit pi bound} we have
	
	\begin{equation*}
	\label{eq explicit pi bound3}
	|P_{i+1}| \leq |P_i|\cdot deg_i^{-1} < n^{1-\frac{2^{i_0+1}-1}{\kappa}-(i-i_0-1)\rho} \cdot n^{-\rho} = 
	n^{1-\frac{2^{i_0+1}-1}{\kappa}-(i-i_0)\rho}.
	\end{equation*}
\end{proof}

Recall that $i_0 = \ize $ and that $\ell = \distell$. 
Note that $\rho \leq 1/2$, hence $\ell > i_0$, and so $deg_\ell = n^\rho$. By Lemma \ref{lemma exp pi2}, the size of the set $P_{\ell}$ satisfies 

\begin{equation}
\label{eq explicit pl}
|P_{\ell}| \leq 
n^{1-\frac{2^{\ize+1}-1}{\kappa}-(\lceil \frac{\kappa+1}{\kappa\rho}\rceil -2)\rho}
\leq 
n^{1-\frac{\kappa\rho-1}{\kappa}-\frac{\kappa+1}{\kappa} +2\rho}
\leq 
n^\rho.
\end{equation}

As a result, in the last phase $\ell$ we have $|P_\ell|\leq n^\rho = deg_\ell$, and there are no popular clusters in $P_\ell$. 

\subsubsection{Analysis of the Number of Edges}\label{sec dist analysis of size}
In this section, we analyze the number of edges added to the emulator $H$ by the algorithm. The analysis follows the lines of the corresponding arguments in Section \ref{sec cent analysis of size}. 
As in the centralized construction, here we can also charge each edge that is added to the emulator $H$ in some phase $i$ of the algorithm to a center of a cluster in $P_i$.

Every superclustering edge added to the emulator during phase $i$ can be charged to a center of a cluster $C\in P_i$ that neither joined $U_i$ nor was it selected to grow a supercluster around it during this phase. Therefore, the number of superclustering edges added to the emulator $H$ during phase $i$ is exactly $|P_i|-|U_i|-|P_{i+1}|$. 
Every interconnection edge added to the emulator during some phase $i\in [0,\ell]$ is charged to a center of a cluster $C\in U_i$. 
Recall that for every cluster $C\in U_i$, its center $r_C$ is charged with less than $deg_i$ edges. Hence, the number of interconnection edges added to the emulator during phase $i$ of the algorithm is at most $|U_i|\cdot deg_i$. (The equality is achieved when $U_i=\emptyset$.)
In addition, Lemma \ref{lemma bouns pi1} implies that the number of clusters from $P_i$ that belong to superclusters of $P_{i+1}$ is at least $|P_{i+1}|\cdot (deg_i+1)$. Hence, the number of clusters in $P_i$ that did not join a supercluster during phase $i$ satisfies $|U_i|\leq |P_i|-|P_{i+1}|\cdot (deg_i+1)$. Thus, the number of edges added to the emulator by all phases of the algorithm is bounded by 

\begin{equation}
\label{eq edges bound dist}
\begin{array}{lclclclclclclc}

|H| & \leq & \sum_{i=0}^{\ell-1} \left( |P_i|-|U_i|-|P_{i+1}| + |U_i|\cdot deg_i \right) + |P_\ell|\cdot deg_\ell \\ 
& = & \sum_{i=0}^{\ell-1} \left( |P_i|-|P_{i+1}| + |U_i|\cdot (deg_i-1) \right) + |P_\ell|\cdot deg_\ell \\
& \leq & \sum_{i=0}^{\ell-1} \left( |P_i|-|P_{i+1}| + (|P_i|-|P_{i+1}|\cdot (deg_i+1)) (deg_i-1) \right) + |P_\ell|\cdot deg_\ell \\

& = & \sum_{i=0}^{\ell-1} \left( |P_i|\cdot deg_i -|P_{i+1}|\cdot deg^2_i \right) + |P_\ell|\cdot deg_\ell \\

& = & | P_0 |\cdot deg_0 + \sum_{i=0}^{\ell-1} \left( |P_{i+1}|\cdot deg_{i+1} -|P_{i+1}|\cdot deg^2_{i} \right) .

\end{array}
\end{equation}

We now show that for all $i\in [0,\ell-1]$, we have $deg_{i+1} \leq deg^2_i$. 
Recall that in the exponential growth stage, $deg_i = \degi$, and so $deg_{i+1} = deg^2_i$ for any $i\in [0,i_0-1]$. 
Recall that $i_0 = \ize$, and so $deg^2_{i_0} = n^\frac{2^{i_0+1}}{\kappa} \geq n^\rho = deg_{i_0+1}$.
In the fixed growth stage, $deg_i = n^\rho$, and so $deg_{i+1} <deg^2_i$ for all $i\in [i_0+1, \ell-1]$. 
 It follows that for all $i\in [0,\ell-1]$, we have $deg_{i+1} \leq deg^2_i$. Thus, by \cref{eq edges bound dist}, the size of the emulator $H$ is at most

\begin{equation}
\label{eq edges bound dist fin}
\begin{array}{lclclclclclclc}

|H| & \leq & | P_0 |\cdot deg_0 & = & n^{1+1/\kappa}.

\end{array}
\end{equation}

\subsubsection{Analysis of the Stretch}\label{sec dist analysis of stretch}

In this section we analyze the stretch of the emulator $H$. We follow the lines of the analysis given in Section \ref{sec cent stretch}.
We begin by proving that $R_i$ is an upper bound on the radii of clusters in $P_i$, for all $i\in [0,\ell]$. Recall that 
$R_0= 0$, and for every $i\in [1,\ell]$, we have $R_{i+1} = (\frac{4}{\rho}+2)\delta_i+R_i $ (see Section \ref{sec param dist}).
Also recall that for every phase $i\in [0,\ell]$, a BFS ruling forest $F_i$ was constructed by a BFS exploration that was executed to depth $rul_i+\delta_i = \frac{2\delta_i}{\rho}+\delta_i = ({2}/{\rho}+1)\cdot\delta_i $. Hence the radius of each tree in $F_i$ (i.e., the maximal distance between the root of the tree and a vertex spanned by the tree) is at most $(\frac{2}{\rho} +1)\delta_i$ as well. 
For each tree $T\in F_i$, let $Rad(T)$ denote its radius, and $Rad(F_i) = \max_{T\in F_i} Rad(T)$. 
% rul_i = 2/rho \cdot \delta_i, and we also have an addition \delta_i because we need to cover all those which have neighbors that are popular. 

%$\delta_i = \epsi+2R_i$, for every $i\in [0,\ell]$. Also recall that 
%$R_0= 0$, and for every $i\in [1,\ell]$, we have $R_{i+1} = \frac{4}{\rho}\delta_i+R_i $.

\begin{lemma}\label{lemma radpi leq ri dist}
	For every index $i\in [0,\ell]$, we have $Rad(P_i)\leq R_i$.
\end{lemma}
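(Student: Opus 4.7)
The plan is to mimic the inductive proof of Lemma \ref{lemma radpi leq ri} from the centralized analysis, with modifications that account for the richer supercluster-formation process described in Section \ref{sec super}. The induction is on $i$. The base case $i=0$ is immediate since $P_0$ is a partition into singletons and $R_0 = 0$. For the inductive step, I assume $\mathrm{Rad}(P_i) \leq R_i$ and consider an arbitrary supercluster $\widehat{C} \in P_{i+1}$ with center $r_{\widehat{C}}$, together with a vertex $u \in \widehat{C}$. Then $u$ lies in some cluster $C' \in P_i$ with center $r_{C'}$, so by the induction hypothesis $d_H(r_{C'}, u) \leq R_i$. It therefore suffices to show that $d_H(r_{\widehat{C}}, r_{C'}) \leq (\tfrac{4}{\rho}+2)\delta_i$, since then the triangle inequality gives $d_H(r_{\widehat{C}}, u) \leq (\tfrac{4}{\rho}+2)\delta_i + R_i = R_{i+1}$.

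To bound $d_H(r_{\widehat{C}}, r_{C'})$ I would split into cases according to how $\widehat{C}$ was constructed during the backtracking procedure in Task 3. Let $T_C \in F_i$ be the BFS tree (rooted at some $r_C \in S_i$) that contains $r_{C'}$; recall $\mathrm{Rad}(T_C) \leq rul_i + \delta_i = (\tfrac{2}{\rho}+1)\delta_i$. In the first case, no hub-vertex intervenes between $r_{C'}$ and $r_C$ along $T_C$, so $r_{\widehat{C}} = r_C$ and the backtracking procedure adds the edge $(r_C, r_{C'})$ to $H$ with weight $d_G(r_C, r_{C'}) \leq rul_i + \delta_i$. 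In the second case, the nearest hub-vertex $v$ (on the path in $T_C$ from $r_{C'}$ toward the root) is itself a center of a cluster from $P_i$; by the construction $v$ opens a new supercluster around itself, $r_{\widehat{C}} = v$, and $v$ adds the edge $(v, r_{C'})$ with weight $d_G(v, r_{C'}) \leq rul_i + \delta_i$. In the third and heaviest case, $v$ is not a $P_i$-center, so it partitions its children into groups $V_1,\dots,V_t$, chooses a representative $r$ from the group $Z_j$ containing $r_{C'}$, and inserts into $H$ the edge $(r, r_{C'})$ with weight $d_G(r_{C'}, v) + d_G(v, r)$; here $r_{\widehat{C}} = r$, and each of $d_G(r_{C'}, v)$ and $d_G(v, r)$ is at most $rul_i + \delta_i$, giving total weight at most $2(rul_i + \delta_i) = (\tfrac{4}{\rho}+2)\delta_i$.

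In all three cases $d_H(r_{\widehat{C}}, r_{C'}) \leq (\tfrac{4}{\rho}+2)\delta_i$, which completes the inductive step. The main obstacle, relative to the centralized proof, is the third case: the hub-vertex splitting can push the newly designated supercluster center $r$ to a leaf-side location of $T_C$ far from $r_{C'}$, so one must use both legs $r \leftrightarrow v$ and $v \leftrightarrow r_{C'}$ of the tree. This is exactly why the recurrence constant jumps from the $2$ of the centralized setting (Lemma \ref{lemma radpi leq ri}) to $\tfrac{4}{\rho}+2$ here, and why it is essential that the BFS ruling forest depth was chosen as $rul_i + \delta_i = (\tfrac{2}{\rho}+1)\delta_i$, so that a single application of triangle inequality across $v$ still fits within the budget $R_{i+1} - R_i$.
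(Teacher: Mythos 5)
Your proof is correct and follows essentially the same route as the paper's: induction on $i$, with the key step being that every superclustering edge of phase $i$ has weight at most $(\tfrac{4}{\rho}+2)\delta_i$. The paper merely compresses your three-case analysis of the backtracking procedure into the single observation that the new supercluster center and $r_{C'}$ are both spanned by the same tree $T\in F_i$ of radius at most $rul_i+\delta_i$, so their tree distance (and hence the edge weight) is at most $2\,Rad(T)\leq(\tfrac{4}{\rho}+2)\delta_i$.
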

\begin{proof}
	The proof is by induction on the index of the phase $i$. For $i=0$, all clusters in $P_i$ are singletons, and also $R_0=0$. Thus the claim holds. 
	
	Assume the claim holds for $i\in [0,\ell-1]$ and prove that it holds for $i+1$. 
	Consider a cluster $\widehat{C}\in P_{i+1}$. This cluster was formed around a vertex $r_C$ during phase $i$. Let $C\in P_i$ be the cluster rooted at $r_C$. Consider a vertex $u\in \widehat{C}$. 
	
	\textbf{Case 1:} The vertex $u$ belongs to the cluster $C$. Then, by the induction hypothesis, we have $d_H(r_C,u)\leq R_i\leq R_{i+1}$.

	\textbf{Case 2:} The vertex $u$ belonged to a cluster $C'\in P_i$, where $C\neq C'$. Denote by $r_{C'}$ the center of the cluster $C'$. The centers $r_C$ and $r_{C'}$ are both spanned by the same tree $T$ in $F_i$. Therefore, we have $d_T(r_C,r_{C'}) \leq 2Rad(T) \leq(\frac{4}{\rho}+2)\delta_i$.	
	When $r_{C'}$ joined the supercluster $\widehat{C}$, the edge $(r_C,r_{C'})$ was added to the emulator $H$, with weight $d_{T}(r_C,r_{C'}) \leq (\frac{4}{\rho}+2)\delta_i$. 

	 In addition, by the induction hypothesis, we have $d_H(u,r_{C'}) \leq R_{i}$. Hence, $$d_H(r_C,u) \leq (\frac{4}{\rho}+2)\delta_i + R_i = R_{i+1}.$$
	
\end{proof}

We now provide an explicit upper bound on $R_i$. Recall that for every $i\in [1,\ell]$, we have $\delta_i = \epsi+2R_i$, and therefore
 $$R_{i+1} = \left(\frac{4}{\rho}+2\right)\delta_i+R_i = \left(\frac{4}{\rho}+2\right)\epsi+\left(\frac{8}{\rho}+5\right)R_i.$$

\begin{lemma}
	\label{lemma bound ri dist}
	For every index $i\in [0,\ell]$, we have 
	$$R_i =\left( {\frac{4}{\rho}} +2\right)\cdot\sum_{j=0}^{i-1} \eps{j}\cdot \left(\frac{8}{\rho}+5\right)^{i-1-j}.$$
\end{lemma}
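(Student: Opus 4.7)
\medskip

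\noindent\textbf{Proof plan.} The statement is a closed-form solution to the linear recurrence
$R_0=0$,
$R_{i+1}=\left(\tfrac{4}{\rho}+2\right)\epsi+\left(\tfrac{8}{\rho}+5\right)R_i$,
which was derived just before the lemma from $\delta_i=\epsi+2R_i$ and $R_{i+1}=(\tfrac{4}{\rho}+2)\delta_i+R_i$. I would prove it by a direct induction on $i$, with no extra ingredients.

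\smallskip

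\noindent\emph{Base case.} For $i=0$ the right-hand side is an empty sum and therefore equals $0$, matching $R_0=0$.

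\smallskip

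\noindent\emph{Inductive step.} Assuming the formula holds at index $i$, I would plug the hypothesis into the recurrence and pull the constant $\tfrac{8}{\rho}+5$ inside the sum to bump each exponent by one. Concretely, the computation is
\begin{equation*}
R_{i+1}
=\left(\tfrac{4}{\rho}+2\right)\epsi
+\left(\tfrac{8}{\rho}+5\right)\left(\tfrac{4}{\rho}+2\right)\sum_{j=0}^{i-1}\eps{j}\left(\tfrac{8}{\rho}+5\right)^{i-1-j},
\end{equation*}
and after distributing the $\left(\tfrac{8}{\rho}+5\right)$ factor through the sum this becomes
\begin{equation*}
\left(\tfrac{4}{\rho}+2\right)\left[\eps{i}+\sum_{j=0}^{i-1}\eps{j}\left(\tfrac{8}{\rho}+5\right)^{i-j}\right]
=\left(\tfrac{4}{\rho}+2\right)\sum_{j=0}^{i}\eps{j}\left(\tfrac{8}{\rho}+5\right)^{i-j},
\end{equation*}
where in the last step the isolated $\epsi$ term is absorbed as the new $j=i$ summand (for which the exponent of $\tfrac{8}{\rho}+5$ is $(i+1)-1-i=0$). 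This is precisely the claimed formula at index $i+1$.

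\smallskip

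\noindent\emph{Main obstacle.} There is essentially no obstacle here: the result is a routine unrolling of a first-order linear recurrence, and the only thing to watch is the bookkeeping on the geometric factor $\tfrac{8}{\rho}+5$ and the correct re-indexing of the sum so that the standalone $\epsi$ term fills the $j=i$ slot. The real content of this calculation is to set up the subsequent explicit upper bound on $R_i$ (analogous to \cref{eq exp ri} in the centralized analysis), which will in turn feed into the stretch analysis for the distributed algorithm.
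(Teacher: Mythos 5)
Your proof is correct and is essentially identical to the paper's: both argue by induction on $i$, with the base case $R_0=0$ matching the empty sum, and the inductive step substituting the hypothesis into the recurrence $R_{i+1}=\left(\frac{4}{\rho}+2\right)\epsi+\left(\frac{8}{\rho}+5\right)R_i$ and absorbing the standalone $\epsi$ term as the $j=i$ summand. No gaps.
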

\begin{proof}
	The proof is by induction on the index $i$. For $i=0$, both sides of the equation are equal to $0$, and so the base case holds. 
	
	Assume that the claim holds for some $i\in [0,\ell-1]$, and prove that it holds for $i+1$. 
	By definition and the induction hypothesis we have: 
	\begin{equation*}
	\begin{array}{clllll}
	R_{i+1} &=& 
	
	(\frac{4}{\rho}+2)\epsi+(\frac{8}{\rho}+5)\cdot ( {\frac{4}{\rho}} +2)\cdot\sum_{j=0}^{i-1} \eps{j}\cdot \left(\frac{8}{\rho}+5\right)^{i-1-j} \\
	
	 &=&

	 ( {\frac{4}{\rho}} +2)\cdot\sum_{j=0}^{i} \eps{j}\cdot \left(\frac{8}{\rho}+5\right)^{i-j}. \\

	\end{array}
	\end{equation*}
\end{proof} 

By Lemma \ref{lemma bound ri dist}, we derive the following explicit bound on $R_i$, for $i\in [0,\ell]$. 
\begin{equation*}
\begin{array}{lclclclclclc}
R_i & = &
( {\frac{4}{\rho}} +2)\cdot\sum_{j=0}^{i-1} \eps{j}\cdot \left(\frac{8}{\rho}+5\right)^{i-1-j}
\\
& = &
( {\frac{4}{\rho}} +2)\cdot\left(\frac{8}{\rho}+5\right)^{i-1} \cdot

\sum_{j=0}^{i-1} \eps{j}\cdot \left(\frac{\rho}{8+5\rho}\right)^{j} 

\\
& \leq &
( {\frac{4}{\rho}} +2)\cdot\left(\frac{8}{\rho}+5\right)^{i-1} \cdot 
\left[\frac{\left(\frac{\rho}{\epsilon(8+5\rho)}\right)^{i} }{\frac{\rho}{\epsilon(8+5\rho)} -1 } 
\right]

\\
& = &
( {\frac{4}{\rho}} +2)\cdot\left(\frac{8+5\rho}{\rho}\right)^{i-1} \cdot \left(\frac{\rho}{\epsilon(8+5\rho)}\right)^{i}
\cdot \frac{\epsilon(8+5\rho)}{\rho-\epsilon(8+5\rho)} 

& = &
\frac{4+2\rho}{\rho-\epsilon(8+5\rho)}\cdot \eps{i-1}. 

\end{array}
\end{equation*}

Recall that $\rho < 1/2$, and assume that $\rho\geq 25\epsilon$. It follows that for all $i\in [0,\ell]$:
\begin{equation}
\label{eq exp ri dist}
R_i \leq 
\frac{4+2\rho}{\rho-\epsilon(8+5\rho)} \cdot \eps{i-1} 
\leq
 \frac{5}{\rho -12.5\epsilon} \cdot \eps{i-1}
\leq 
\frac{10}{\rho}\cdot \eps{i-1}.
\end{equation}

%==============
%==============
%==============
%==============
%==============

As in Section \ref{sec cent stretch}, define recursively $\beta_0 = 0,\alpha_0=1$, and for $i>1$ define $\beta_i = 2\beta_{i-1}+6R_i$ and $\alpha_i = \alpha_{i-1} +\frac{\epsilon^i}{1-\epsilon^i}\cdot \beta_i$.

Recall that when a cluster is added to $U_i$, for some $i\in [0,\ell]$, the algorithm adds edges from its center to the centers of all its neighboring clusters. The weight of each such edge is set to be the distance in $G$ between its two endpoints. 
Therefore, the assertion of Lemma \ref{lemma neighboring clusters cent} also holds for the distributed construction. Thus, for every $i\in [0,\ell]$ and a center $r_C$ of a cluster $C\in U_i$, an for every neighboring cluster center $r_{C'}$ of $r_C$, we have $d_H(r_C,r_{C'})=d_G(r_C,r_{C'})$. As a result, Lemma \ref{lemma emu stretch} also holds for the distributed construction. In other words, for every pair of vertices $u,v\in V$ such that all vertices on a shortest $u-v$ path are $U^{(i)}$ clustered, we have that 
\begin{equation}
\label{eq dist uv}
d_H(u,v) \leq \alpha_i\cdot d_G(u,v)+\beta_i.
\end{equation}

Recall that $U_{-1} = \emptyset$, and $U^{(i)} = \bigcup_{j=-1}^i U_i$ for all $i\in [-1,\ell]$.
As in the centralized construction, here we also have that the set $U^{(\ell)}$ is a partition of $V$. Hence, \cref{eq dist uv} implies that for 
every pair of vertices $u,v\in V$ we have 
\begin{equation}
\label{eq dist uv ell}
d_H(u,v) \leq \alpha_\ell\cdot d_G(u,v)+\beta_\ell.
\end{equation}

By Lemma \ref{lemma test bound bi}, we have that the recursion $\beta_0 = 0$ and $\beta_i = 2\beta_{i-1}+6R_i$ for $i>1$ solves to 
\begin{equation}
	\beta_i =\sum_{j=0}^{i} 2^{i-j}\cdot 6 R_j .
\end{equation}

We will now provide an explicit bound on $\beta_i$. 
By \cref{eq exp ri dist}
for all $i\in [1,\ell]$, we have that $R_i \leq \frac{10}{\rho}\cdot \eps{i-1}$. Since we assume $\epsilon <1/10$, we have 
\begin{equation}\label{eq explicit bi test dist}
\begin{array}{lclclclclclc}
\beta_i 
&\leq& 
\sum_{j=0}^{i} 2^{i-j}\cdot 6 R_j

&\leq& \frac{60\cdot 2^i\epsilon}{\rho} \cdot 
\left[ \frac{ (\frac{1}{2\epsilon})^{i+1}} {\frac{1-2\epsilon}{2\epsilon} } 
\right]

&=& \frac{75}{\rho} \cdot \eps{i-1}.
\end{array}
\end{equation}

By \cref{eq explicit bi test dist} and since $\epsilon\leq 1/10$, the recursion $\alpha_0 = 0$ and $\alpha_{i} = \alpha_{i-1} +\frac{\epsilon^{i}}{1-\epsilon^i}\cdot \beta_{i}$ for $i>0$ solves to 
\begin{equation}
\label{eq alphai dist1}
\alpha_i = 1+ \frac{90\epsilon}{\rho}\cdot i.
\end{equation}

By \cref{eq dist uv ell,eq explicit bi test dist,eq alphai dist1} we derive the following corollary.

\begin{corollary}
	\label{coro st emu dist}
	For every pair of vertices $u,v\in V$ the distance between them in the emulator $H$ satisfies: 
	$$d_H(u,v)\leq \left(1+\frac{90\epsilon\cdot \ell }{\rho } \right)\cdot d_G(u,v)+ \frac{75}{\rho } \cdot \eps{\ell-1}.$$ 
\end{corollary}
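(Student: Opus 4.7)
The plan is to assemble the corollary directly from the three ingredients already established just above its statement: the recursive stretch bound from Lemma \ref{lemma emu stretch} (which carries over to the distributed construction as explained around \cref{eq dist uv,eq dist uv ell}), the closed-form bound on $\beta_i$ in \cref{eq explicit bi test dist}, and the closed-form bound on $\alpha_i$ in \cref{eq alphai dist1}. No new structural reasoning is needed; it is a matter of substituting $i=\ell$ into the previous displays.

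First I would invoke \cref{eq dist uv ell}, which states that for every $u,v \in V$,
\begin{equation*}
d_H(u,v)\leq \alpha_\ell\cdot d_G(u,v)+\beta_\ell .
\end{equation*}
The justification for this inequality is already in place: the distributed algorithm preserves the property of Lemma \ref{lemma neighboring clusters cent} (every cluster added to $U_i$ has emulator-edges of weight $d_G$ to all of its neighboring cluster centers), so the inductive argument of Lemma \ref{lemma emu stretch} transfers verbatim; combined with the fact that $U^{(\ell)}$ partitions $V$ (as in the centralized case), one obtains the global bound in terms of $\alpha_\ell$ and $\beta_\ell$.

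Next I would plug in the explicit estimates. By \cref{eq alphai dist1} we have $\alpha_\ell \leq 1+\frac{90\epsilon}{\rho}\cdot \ell$, and by \cref{eq explicit bi test dist} we have $\beta_\ell \leq \frac{75}{\rho}\cdot \eps{\ell-1}$. Substituting these into the preceding display yields exactly the stated bound
\begin{equation*}
d_H(u,v)\leq \left(1+\frac{90\epsilon\cdot \ell}{\rho}\right)\cdot d_G(u,v)+ \frac{75}{\rho}\cdot \eps{\ell-1}.
\end{equation*}

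There is no genuine obstacle here; the only thing to be careful about is verifying that the preconditions used to derive \cref{eq explicit bi test dist,eq alphai dist1} (namely $\epsilon \leq 1/10$ and $\rho \geq 25\epsilon$, as used to simplify the bound on $R_i$ in \cref{eq exp ri dist}) are in force under the standing assumptions of Section \ref{sec congest}, and to note explicitly that the distributed version of Lemma \ref{lemma emu stretch} requires nothing beyond the distributed analogue of Lemma \ref{lemma neighboring clusters cent}, which is already observed in the paragraph preceding \cref{eq dist uv}. With those remarks, the corollary follows by direct substitution.
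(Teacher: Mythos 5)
Your proposal is correct and follows exactly the paper's own derivation: the corollary is obtained by combining \cref{eq dist uv ell} with the explicit bounds \cref{eq explicit bi test dist,eq alphai dist1} and substituting $i=\ell$. Your additional remarks about verifying the standing assumptions ($\epsilon\leq 1/10$, $\rho\geq 25\epsilon$) and about the transfer of Lemma \ref{lemma emu stretch} to the distributed setting are consistent with what the paper already establishes in the preceding paragraphs.
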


\subsubsection{Analysis of the Running Time}\label{sec dist analysis of rt}
In this section, we analyze the running time of the algorithm. We begin by analyzing the running time of a single phase $i\in [0,\ell-1]$. 

\paragraph{Superclustering Step.} To detect the popular clusters, we executes Algorithm \ref{Alg number of near neighbors}. By Theorem \ref{theorem popular}, the algorithm requires $O(deg_i\cdot \delta_i)$ time. By Theorem \ref{theorem ruling set}, constructing a ruling set for the popular clusters requires $O(\delta_i\cdot \frac{1}{\rho} \cdot n^{\rho})$ time. 
By Lemma \ref{lemma task 3 rt},
Computing superclusters requires $O(\frac{\delta_i}{\rho}\cdot deg_i)$ time.
Hence, the superclustering step of phase $i$ can be executed in 
$O\left(\frac{\delta_i\cdot n^\rho}{\rho} \right)$ deterministic time in the \congestmo.

\paragraph{Interconnection Step.} The interconnection step consists of executing Algorithm \ref{Alg number of near neighbors}, as in the superclustering step. Hence, the running time of the interconnection step is dominated by the running time of the superclustering step.

For the final phase $\ell$, the superclustering step is skipped. The interconnection step of the final phase requires executing Algorithm \ref{Alg number of near neighbors}, in $O(deg_\ell\cdot \delta_\ell)$ time. Recall that $\delta_i = \epsi+2R_i$, and also that by \cref{eq exp ri dist} we have that $R_i \leq 
\frac{10}{\rho}\cdot \eps{i-1}$,
for every $i\in [0,\ell]$. In addition, recall that we assume $\rho > 25\epsilon$. Hence, for every $i\in [0,\ell]$ we have 
\begin{equation}
\label{eq bound deltai}
	\delta_i = O(\epsi)
\end{equation}
It follows that the running time of the entire algorithm is at most

%==============

\begin{equation}\label{eq dist rt}
\begin{array}{lclclclclclc}
O(deg_\ell\cdot \delta_\ell) + \sum_{i=0}^{\ell-1} O(\frac{n^\rho\cdot \delta_i}{\rho}) & = & 
 O\left(\frac{n^\rho}{\epsilon^\ell} + \frac{n^\rho}{\rho} \cdot \sum_{i=0}^{\ell-1} \epsi \right)& = & O\left(\frac{n^\rho}{\epsilon^\ell}\right).
\end{array}
\end{equation}

\subsubsection{Rescaling}\label{sec rescale dist}

Define $\epsilon' = \frac{90\epsilon\cdot \ell }{\rho }$. Observe that we have 
$\epsilon = \frac{\epsilon'\rho }{90\ell}$. We replace the condition $\epsilon<1/10$ with the much stronger condition $\epsilon' <1$. The assumption $\rho > 25\epsilon$ holds since $ \epsilon'< 1$. 

Recall that $\ell = \lfloor {\log \kappa \rho}\rfloor +\lceil { \frac{\kappa+1}{\kappa\rho}}\rceil -1$. Note that $\ell = {\log \kappa\rho}+ \rho^{-1}+O(1)$ for all $\kappa\geq 2$. 
The additive term $\beta_\ell$ now translates to: 
\begin{equation*}
\beta_\ell = \frac{75}{\rho } \cdot \eps{\ell-1} = 
\frac{75}{\rho } \cdot \left(\frac{90\ell}{\epsilon'\rho }\right)^{\ell-1} = 
\left(\frac{ {\log \kappa\rho}+ \rho^{-1} }{\epsilon'\rho }\right)^{ {\log \kappa\rho}+ \rho^{-1}+O(1) }.
\end{equation*}

Denote 
\begin{equation*}
\label{eq beta def dist}
\beta = 
\betadist.
\end{equation*}

By \cref{eq dist rt}, the running time of the algorithm is $
O\left(\frac{ n^\rho }{ \epsilon^{\ell}} \right)= 
O\left(\beta n^\rho\right).$
Denote now $\epsilon= \epsilon'$.

%===========================
% 		HERE
%===========================

\begin{corollary}\label{coro emu dist}
	For any parameters $\epsilon <1$, $\kappa\geq 2$ and $1/\kappa< \rho< 0.5$, and any $n$-vertex graph $G=(V,E)$, our algorithm constructs a $\left(1+\epsilon,\beta \right)$-emulator with at most 
	$ \nfrac $
	edges in $O(\beta n^\rho)$ deterministic \congest\ time, where 
	$$\beta = \betadist.$$
\end{corollary}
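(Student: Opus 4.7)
The plan is to assemble three quantitative bounds that have already been established throughout Section \ref{sec congest} and then apply the rescaling described in Section \ref{sec rescale dist}. The corollary packages these bounds into a clean statement valid for the weaker hypothesis $\epsilon < 1$.

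First I would cite the edge count. Equation (\ref{eq edges bound dist fin}) gives $|H| \leq n \cdot deg_0 = n^{1+1/\kappa}$, provided one knows that for all $i \in [0,\ell-1]$ the degree sequence satisfies $deg_{i+1} \leq deg_i^2$ so that the telescoping sum in (\ref{eq edges bound dist}) collapses. This is verified separately for the exponential growth stage ($deg_i = n^{2^i/\kappa}$, where equality holds), the transition phase at $i = i_0$ (by definition of $i_0 = \lfloor \log \kappa\rho \rfloor$), and the fixed growth stage ($deg_i = n^\rho$, where $deg_{i+1} = deg_i \leq deg_i^2$). The two ingredients feeding (\ref{eq edges bound dist}) are the charging argument that assigns each new emulator edge to a single cluster center of $P_i$ and the bound $|U_i| \leq |P_i| - |P_{i+1}|(deg_i+1)$ obtained from Lemma \ref{lemma bouns pi1}. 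Together with (\ref{eq explicit pl}), which ensures $|P_\ell| \leq deg_\ell$ so that the superclustering step can be skipped in phase $\ell$, this yields the size bound.

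Second, I would invoke Corollary \ref{coro st emu dist}, which gives $d_H(u,v) \leq (1 + 90 \epsilon \ell / \rho) d_G(u,v) + (75/\rho) (1/\epsilon)^{\ell-1}$ for every pair $u,v \in V$, under the standing assumptions $\epsilon \leq 1/10$ and $\rho \geq 25\epsilon$. For the running time, I would appeal to (\ref{eq dist rt}): per phase $i \in [0,\ell-1]$ the superclustering step takes $O(\delta_i n^\rho/\rho)$ time (combining Theorem \ref{theorem popular}, Theorem \ref{theorem ruling set} and Lemma \ref{lemma task 3 rt}), the interconnection step is absorbed into this, and the final phase $\ell$ needs $O(deg_\ell \delta_\ell)$. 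Summing, using $\delta_i = O((1/\epsilon)^i)$ from (\ref{eq bound deltai}), gives $O(n^\rho / \epsilon^\ell)$.

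Finally, the rescaling. I would define $\epsilon' = 90 \epsilon \ell / \rho$, so $\epsilon = \epsilon' \rho / (90 \ell)$. The inequalities $\epsilon \leq 1/10$ and $\rho \geq 25 \epsilon$ are both implied by $\epsilon' < 1$, matching the hypothesis of the corollary. Using $\ell = \lfloor \log \kappa \rho \rfloor + \lceil (\kappa+1)/(\kappa \rho) \rceil - 1 = \log \kappa \rho + 1/\rho + O(1)$, one substitutes in the stretch bound to obtain the multiplicative factor $1 + \epsilon'$ and additive term $\beta_\ell = (75/\rho)(90\ell/(\epsilon'\rho))^{\ell-1} = ((\log \kappa\rho + 1/\rho)/(\epsilon'\rho))^{\log \kappa\rho + 1/\rho + O(1)}$. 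Substituting in the running time bound gives $O(n^\rho / \epsilon^\ell) = O(\beta n^\rho)$. Renaming $\epsilon := \epsilon'$ then produces exactly the statement of the corollary. The only nontrivial step is the algebraic verification that the rescaled additive term can indeed be absorbed into the quoted closed form for $\beta$, which is routine once one notes that the factor $75/\rho$ and the small discrepancy between $\ell$ and $\ell-1$ are both swallowed by the $O(1)$ in the exponent.
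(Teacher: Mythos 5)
Your proposal is correct and follows exactly the paper's own route: the corollary is obtained by assembling the size bound of \cref{eq edges bound dist fin}, the stretch bound of Corollary \ref{coro st emu dist}, and the running-time bound of \cref{eq dist rt}, and then applying the rescaling $\epsilon' = 90\epsilon\ell/\rho$ of Section \ref{sec rescale dist}. All the cited ingredients and the verification that $\epsilon'<1$ implies the standing assumptions $\epsilon\leq 1/10$ and $\rho>25\epsilon$ match the paper's argument.
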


Note that by setting $\kappa= f(n)\cdot ({\log n})$, for a function $f(n)= \omega(1)$, we obtain an emulator of size at most 
$n^{1+\frac{1}{f(n){\log n}}} = n+o(n). $
By Corollary \ref{coro emu}, we derive: 

\begin{corollary}\label{coro emu us dist}
	For any parameters $\epsilon <1$ and $ \rho< 0.5$, and any $n$-vertex graph $G=(V,E)$, our algorithm constructs a $\left(1+\epsilon,\beta \right)$-emulator with
	$ n+o(n) $
	edges in $O(\beta n^\rho)$ deterministic \congest\ time, where 
	$$\beta = 
	\left(\frac{ {\log (\rho{\log n})}+ \rho^{-1} }{\epsilon'\rho }\right)^{ {\log (\rho{\log n})}+ \rho^{-1}+O(1) } 
	.$$
\end{corollary}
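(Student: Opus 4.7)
The plan is to obtain Corollary~\ref{coro emu us dist} as a direct specialization of Corollary~\ref{coro emu dist}: since $\rho$ is treated as a constant here, the only remaining freedom is to let $\kappa$ depend on $n$ in such a way that the size bound $\nfrac$ collapses to $n+o(n)$ while $\beta$ remains under control. The natural choice is $\kappa = f(n)\cdot \log n$ for an arbitrary slow-growing function $f(n)=\omega(1)$, and the hypothesis $1/\kappa < \rho$ of Corollary~\ref{coro emu dist} is then satisfied for all sufficiently large $n$.

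For the size bound, I would simply expand
\[
n^{1+\frac{1}{\kappa}} = n\cdot 2^{1/f(n)} = n\bigl(1 + O(1/f(n))\bigr) = n + o(n),
\]
using $2^x = 1 + O(x)$ as $x\to 0$, which already appears in the proof of Corollary~\ref{coro emu us} in the centralized case and transfers verbatim. This establishes the size half of the statement.

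For the additive term, I would substitute $\kappa\rho = \rho f(n)\log n$ into the bound of Corollary~\ref{coro emu dist}. Writing $\log(\kappa\rho) = \log(\rho\log n) + \log f(n)$, the expression becomes
\[
\beta = \left(\frac{\log(\rho\log n) + \log f(n) + \rho^{-1}}{\epsilon\rho}\right)^{\log(\rho\log n) + \log f(n) + \rho^{-1} + O(1)}.
\]
Choosing $f$ to grow slowly enough that $\log f(n) = o(\log\log n)$ (e.g., $f(n) = \log^{(3)} n$) absorbs the extraneous $\log f(n)$ into the $(1+o(1))$-factor implicit in the stated exponent, and $\beta$ reduces to the form in Corollary~\ref{coro emu us dist}. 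The running-time bound $O(\beta\cdot n^\rho)$ is inherited directly from Corollary~\ref{coro emu dist} without any further work.

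There is no genuine obstacle; the proof is pure parameter bookkeeping atop Corollary~\ref{coro emu dist}. The only mildly delicate point is aligning the $O(1)$ slack in the exponent with the extra $\log f(n)$ term introduced by the substitution, which forces $f$ to be chosen so that $\log f(n) = o(\log\log n)$. Any such $f$ works, and since the $f$-dependence appears in both the base and the exponent of $\beta$, one must double-check that a single, unified choice of $f$ suffices for all the asymptotic equalities — which it does, since all three quantities ($n^{1/\kappa}-1$, the base of $\beta$, and the exponent of $\beta$) improve monotonically as $f$ shrinks.
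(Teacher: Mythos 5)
Your proposal is correct and follows essentially the same route as the paper, which likewise obtains Corollary~\ref{coro emu us dist} by substituting $\kappa = f(n)\cdot\log n$ for a slowly growing $f(n)=\omega(1)$ into Corollary~\ref{coro emu dist} and expanding $n^{1/\kappa}=2^{1/f(n)}=1+O(1/f(n))$. If anything, you are slightly more careful than the paper, which does not explicitly discuss absorbing the extra $\log f(n)$ term into the slack of the exponent of $\beta$.
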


\newcommand{\betadistspan}{\left(\frac{ {\log \kappa\rho}+ \rho^{-1} }{\epsilon\rho }\right)^{ {\log \kappa\rho}+ \rho^{-1}+O(1+{\log^{(3)}\kappa}) }}

\subsection{Fast Centralized Construction}
\label{sec fast cent}
To devise an efficient construction of ultra-sparse near-additive emulators in the centralized model of computation, one can simulate the construction provided in Section \ref{sec emu dist construction} in the centralized model. Given an unweighted, undirected graph $G=(V,E)$ on $n$ vertices, and parameters $\epsilon <1$, $\kappa = 1,2,\dots$ and $\rho \in [1/\kappa, 1/2]$, our distributed algorithm runs in $O(\beta\cdot n^\rho)$ time. Note that in every communication round, at most one message of $O({\log n})$ bits is sent along each edge of the graph $G$. Thus, simulating this algorithm in the centralized model can be done in $O(|E|\cdot \beta\cdot n^\rho)$ time. In fact, such a centralized implementation is simpler than the distributed construction. This is because, in the centralized model, there is no need to inform both endpoints of every emulator edge $(u,v)$ of the existence of the edge. Thus, constructing superclusters becomes much easier. Specifically, the execution of Task 3 is simpler, since there is no need to split trees of the forest $F_i$. 
The properties of the centralized construction are summarized in the following theorems.

\begin{theorem}\label{theo emu cent fast}
	For any parameters $\epsilon <1$, $\kappa\geq 2$ and $1/\kappa< \rho< 0.5$, and any $n$-vertex graph $G=(V,E)$, our algorithm deterministically constructs a $\left(1+\epsilon,\beta \right)$-emulator with at most 
	$ \nfrac $
	edges in $O(|E|\cdot \beta n^\rho)$ time in the centralized model of computation, where 
	$$\beta = \betadist.$$
\end{theorem}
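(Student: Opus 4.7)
The plan is to obtain this result as a direct corollary of the distributed construction from Section \ref{sec emu dist construction}, by simulating each round of that \congest\ algorithm in the centralized model. Since Corollary \ref{coro emu dist} already guarantees the size bound $\nfrac$, the stretch bound $(1+\epsilon,\beta)$ with $\beta = \betadist$, and correctness of the distributed procedure, the only thing left to verify is that the whole execution can be carried out centrally in time $O(|E|\cdot \beta\cdot n^\rho)$.

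First I would observe that in the \congestmo, each communication round transmits at most one $O(\log n)$-bit message along each edge. A centralized simulator maintains, for every vertex $v$, the local state (current cluster membership, list $\mathcal L(v)$ of learned cluster centers, parent in the current BFS/ruling forest, etc.), and in each simulated round iterates over all edges $(u,v)\in E$ to deliver the messages that the distributed protocol prescribes. Each round therefore costs $O(|E|)$ time. Multiplied by the $O(\beta\cdot n^\rho)$ round bound from the running-time analysis in Section \ref{sec dist analysis of rt} (see \cref{eq dist rt}), this yields the claimed $O(|E|\cdot \beta\cdot n^\rho)$ total time.

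I would then note the (minor) simplification hinted at in the text: in the centralized model the algorithm only needs to produce the edge set $H$; there is no ``both endpoints must know the edge'' requirement. Consequently, during Task 3 of the superclustering step (Section \ref{sec super}) there is no reason to split trees at hub-vertices in order to bound message congestion. The centralized simulator can simply compute, for each tree $T_C$ in the ruling forest $F_i$, the set of cluster centers it spans, designate $r_C$ as the center of the single supercluster $\widehat C$, and add the appropriate superclustering edges to $H$ with weights $d_G(r_C,r_{C'})$. Because this only makes superclusters no larger in radius than those formed by the distributed procedure (and strictly smaller in the splitting case), the stretch and size bounds that were established for the distributed construction carry over verbatim.

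The ``hard part'' here is essentially nothing new: the real analytical work (bounding $|P_i|$, $|U_i|$, the radii $R_i$, and the additive stretch $\beta_\ell$) was already carried out in Sections \ref{sec dist analysis of size}--\ref{sec dist analysis of stretch}, and it depends only on the structural rules the algorithm obeys, not on the computational model. The only subtlety to double-check is the per-round cost: the Lenzen--Peleg source detection and the ruling-set subroutine of \cite{sew,KuhnMW18} both have straightforward centralized simulations of cost $O(|E|)$ per round, so no special centralized substitute is needed. Putting these pieces together immediately gives the statement of Theorem \ref{theo emu cent fast}.
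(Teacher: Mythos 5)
Your proposal is correct and follows essentially the same route as the paper: simulate the \congest\ algorithm round-by-round at cost $O(|E|)$ per round, multiply by the $O(\beta n^\rho)$ round bound from \cref{eq dist rt}, and observe that the centralized setting removes the need for hub-vertex splitting in Task 3 while preserving the size and stretch analyses. No gaps.
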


Note that by setting $\kappa= f(n)\cdot ({\log n})$, for a function $f(n)= \omega(1)$, we obtain an emulator of size at most 
$n^{1+\frac{1}{f(n){\log n}}} = n+o(n). $
By Corollary \ref{coro emu}, we derive: 

\begin{theorem}\label{theo emu us cent fast}
	For any parameters $\epsilon <1$ and $ \rho< 0.5$, and any $n$-vertex graph $G=(V,E)$, our algorithm deterministically constructs a $\left(1+\epsilon,\beta \right)$-emulator with
	$ n+o(n) $
	edges in $O(|E|\cdot \beta n^\rho)$ time in the centralized model of computation, where 
	$$\beta = 
	\left(\frac{ {\log (\rho{\log n})}+ \rho^{-1} }{\epsilon'\rho }\right)^{ {\log (\rho{\log n})}+ \rho^{-1}+O(1) } 
	.$$
\end{theorem}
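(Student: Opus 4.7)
The plan is to derive Theorem~\ref{theo emu us cent fast} as a direct corollary of Theorem~\ref{theo emu cent fast}, by substituting a suitably slow-growing choice of $\kappa$ into the previously established bounds. Concretely, I would fix an arbitrary function $f(n) = \omega(1)$ (for instance, $f(n) = \log^{(3)} n$) and instantiate Theorem~\ref{theo emu cent fast} with $\kappa := f(n) \cdot \log n$. I first check that the hypotheses are satisfied: $\kappa \geq 2$ for all sufficiently large $n$, and the condition $1/\kappa < \rho$ holds automatically for any fixed $\rho \in (0, 1/2)$ once $n$ is large, since $1/\kappa \to 0$.

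Second, I would verify the ultra-sparsity of the resulting emulator by expanding the size bound from Theorem~\ref{theo emu cent fast}. Substituting $\kappa = f(n) \log n$ gives
\begin{equation*}
n^{1 + 1/\kappa} \;=\; n \cdot n^{1/(f(n) \log n)} \;=\; n \cdot 2^{1/f(n)} \;=\; n \Bigl(1 + O\bigl(1/f(n)\bigr)\Bigr) \;=\; n + o(n),
\end{equation*}
where the last equality uses $f(n) = \omega(1)$. This yields the claimed size bound.

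Third, I would translate the additive stretch. Under the substitution $\kappa = f(n) \log n$, one has $\log(\kappa \rho) = \log(\rho \log n) + \log f(n)$. For any sufficiently slowly growing $f(n)$ we have $\log f(n) = o(\log \log n)$, so $\log(\kappa \rho) = \log(\rho \log n) \bigl(1 + o(1)\bigr)$. Plugging this into the bound
\begin{equation*}
\beta \;=\; \left(\frac{\log \kappa \rho + \rho^{-1}}{\epsilon \rho}\right)^{\log \kappa \rho + \rho^{-1} + O(1)}
\end{equation*}
from Theorem~\ref{theo emu cent fast} yields exactly the expression appearing in the statement of Theorem~\ref{theo emu us cent fast}, after absorbing $\log f(n)$ into the $O(1)$ term of the exponent. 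The running time bound $O(|E| \cdot \beta \cdot n^\rho)$ is inherited verbatim from Theorem~\ref{theo emu cent fast}.

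The only delicate point of the proof is making sure $f(n)$ grows slowly enough so that $\log f(n)$ is genuinely absorbed into the $O(1)$ additive term in the exponent, yet fast enough so that $n^{1/(f(n) \log n)} = 1 + o(1)$, which is the source of the ``$n + o(n)$'' size. Any function satisfying $f(n) = \omega(1)$ with $f(n) = 2^{o(\log \log n)}$ (e.g.\ $f(n) = \log \log \log n$) works simultaneously for both requirements, so this is not a real obstacle but simply a bookkeeping matter.
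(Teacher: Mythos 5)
Your proposal is correct and follows essentially the same route as the paper, which obtains this theorem by substituting $\kappa = f(n)\cdot\log n$ for $f(n)=\omega(1)$ into Theorem \ref{theo emu cent fast} and computing $n^{1+1/\kappa}=n\cdot 2^{1/f(n)}=n+o(n)$. Your additional care in choosing $f$ slowly growing so that $\log f(n)$ is negligible in the exponent of $\beta$ is a point the paper glosses over, but it does not change the argument.
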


In fact, it is easy to see that the factor $\beta$ can be shaved off from this running time. We omit the details in this version of the paper. (The same is true concerning both Theorems \ref{theo emu cent fast} and \ref{theo emu us cent fast}.)

\section{Near-Additive Spanners}\label{sec span}
In this section, we show how one can modify the construction given in Section \ref{sec congest} to obtain sparse near-additive spanners. Specifically, given an unweighted, undirected graph $G=(V,E)$, and parameters $\epsilon>0$, $\kappa=  2,3,\dots$ and $\rho \in [1/k, 1/2]$, our current algorithm constructs a $(1+\epsilon,\beta)$-spanner with $O(\nfrac)$ edges, in $O(\beta n^\rho)$ deterministic \congest\ time, where 
\begin{equation}
\beta = \betadistspan. 
\end{equation}

We follow the construction described in Section \ref{sec emu dist construction}. We define $H= \emptyset$ and proceed in phases. Throughout the algorithm, instead of adding to $H$ an emulator edges $(u,v)$ with weight $d$, we add to the spanner $H$ a $u-v$ path from $G$ of length at most $d$. 
Recall that in the emulators construction, whenever a vertex $u\in V$ adds an edge $(u,v)$ with weight $d$ to the emulator, it sends a message to $v$ along a path from $G$, of weight at most $d$. In the current version of the algorithm, we will add to the spanner the entire path from $u$ to $v$, along which $u$ informs $v$ of the new edge. As a result, the construction of superclusters becomes simpler, because the message sent from $u$ to $v$ contains only the details of $v$, and does not need to contain any information regarding $u$ or the edge $(u,v)$. Therefore, there is no need to define hub-vertices as in Task 3 of the superclustering step (see Section \ref{sec super}). Thus, a single supercluster $\widehat{C}$ is formed from every tree $T$ in the forest $F_i$. Observe that Lemma \ref{lemma bouns pi1} (and as a result, \cref{eq explicit pi bound}) and Lemma \ref{lemma radpi leq ri dist} hold under this modification.

The distance threshold sequence remains as in Section \ref{sec emu dist construction}. 
To obtain sparse emulators, we adopt  the degree sequence used in \cite{ElkinN17}, and as a result, the number of phases of the algorithm slightly increases. The analysis of the number of edges in the current construction is closely related to the respective analysis in \cite{ElkinN17}.

Let $\gamma = {\max \{ 2, {\log {\log \kappa}} \}}$. Define $i_0 = {\min \{ \lfloor {\log {\gamma\kappa\rho}}\rfloor , \lfloor \kappa\rho\rfloor \}}$. For the exponential growth stage, which consists of phases $i\in [0,i_0]$, we set $deg_i = n^{\frac{2^i-1}{\gamma\kappa}+\frac{1}{\kappa}}$. Define $i_0+1$ as a \emph{transition phase}, and set $deg_{i_0+1} = n^{\rho/2}$. For the fixed growth stage, which consists of phases $i\in [i_0+1,\ell' = i_0+ \lceil 1/\rho-1/2\rceil]$, set $deg_i = n^\rho$. We will show that $|P_{\ell'}|\leq n^\rho$. Therefore, there are no popular clusters in the last phase, and the superclustering step can be safely skipped. 

By argument similar to those used in Section \ref{sec dist analysis of stretch} and \ref{sec rescale dist}, one can show that the additive term $\beta$ of such a construction is 
\begin{equation}
\beta = \betadistspan. 
\end{equation} 

In addition, by arguments similar to those used in Section \ref{sec dist analysis of rt} and \ref{sec rescale dist}, one can show that the running time of the algorithm can be upper-bounded by 
\begin{equation}
	O\left(\beta{n^\rho}\right).
\end{equation}

\subsection{Analysis of the Number of Edges}
In this section, we analyze the number of edges added to the spanner $H$ by every phase of the algorithm. We begin by analyzing the number of superclustering edges added to the spanner $H$ by every phase $i\in [0,\ell'-1]$ of the algorithm.

Observe that during each phase $i$, the superclustering edges added to the spanner $H$ all belong to a forest $F_i$. Hence, each phase contributes at most $n$ superclustering edges. Since there are $\ell'+1 = O({\log \gamma\kappa\rho} +1/\rho)$ phases, this implies that the the number of superclustering edges in the spanner $H$ is at most 
\begin{equation}\label{eq bound super edges}
O(n({\log \gamma\kappa\rho} +1/\rho)) = O(n({\log \kappa\rho} +1/\rho)) .
\end{equation}

\newcommand{\repsi}{\left(\frac{90\ell'}{\rho\epsilon}  \right)^i}
Next, we analyze the number of interconnection edges added to the spanner $H$ by every phase $i\in [0,\ell']$. Observe that interconnection edges are added only by centers $r_C$ of clusters $C\in U_i$. The center $r_C$ is charged with paths to all its neighboring clusters. Since $C\in U_i$, we know that $C$ is not popular. Therefore, it is charged with at most $deg_i$ paths.
In addition, by definition and by \cref{eq bound deltai}, the length of each such path is at most $\delta_i = O\left(\repsi\right)$. 
Hence, the number of edges charged to each center of a cluster in $U_i$ can be upper-bounded by $O\left(deg_i\cdot \repsi\right)$.

We restrict ourselves to the case where $\frac{90\ell'}{\rho\epsilon} \leq \frac{n^{\frac{1}{2\kappa}}}{2}$, which holds whenever $\kappa \leq \frac{c'{\log n}}{{\log (\ell'/(\rho\epsilon))}} $, for a sufficiently small constant $c'$. 
Observe that $U_i\subseteq P_i$. The number of interconnection edges added to the spanner $H$ by each phase $i$ can now be upper-bounded by 
 \begin{equation}\label{eq bound intercon edges}
 O\left(|P_i|\cdot deg_i\cdot \repsi\right)=O\left(|P_i|\cdot deg_i\cdot \left({n^{\frac{1}{2\kappa}}}/{2}\right)^i\right).
 \end{equation} 
 
In the next three lemmas, we bound the size of $P_i$ for the exponential growth stage, the transition phase, and the fixed growth stage, respectively. 

\begin{lemma}
	\label{lemma bound pi exp}
	For $i\in [0, i_0+1]$, we have 
	$| {P}_i| \leq n^{1 - \frac{2^i-1-i}{\gamma\kappa}-\frac{i}{\kappa}}.$
\end{lemma}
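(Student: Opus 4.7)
The plan is to prove this by induction on $i$, mirroring the structure of Lemma \ref{lemma exp pi1}. The essential tool is the bound $|P_{i+1}| \leq |P_i|\cdot deg_i^{-1}$ from \cref{eq explicit pi bound}, combined with the explicit formula $deg_i = n^{\frac{2^i-1}{\gamma\kappa}+\frac{1}{\kappa}}$ used throughout the exponential growth stage $i \in [0, i_0]$. The key point is that the degree used at the last step of the induction (going from $i_0$ to $i_0+1$) is still the exponential-growth degree, so a single uniform inductive argument suffices for the entire range.

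For the base case $i=0$, both sides equal $n$. For the inductive step, assume the bound holds at some $i \in [0, i_0]$ and apply \cref{eq explicit pi bound}:
\begin{equation*}
|P_{i+1}| \;\leq\; |P_i|\cdot deg_i^{-1} \;\leq\; n^{1 - \frac{2^i-1-i}{\gamma\kappa}-\frac{i}{\kappa}} \cdot n^{-\frac{2^i-1}{\gamma\kappa}-\frac{1}{\kappa}} \;=\; n^{1 - \frac{2^{i+1}-2-i}{\gamma\kappa}-\frac{i+1}{\kappa}}.
\end{equation*}
The exponent on the right matches the target exponent for $i+1$, since $2^{i+1}-1-(i+1) = 2^{i+1}-2-i$, which closes the induction.

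There is no real obstacle here — the argument is a direct arithmetic verification. The only subtlety worth flagging is that the inductive step producing $|P_{i_0+1}|$ relies on $deg_{i_0}$ from the exponential-growth regime (not on the transition-phase value $deg_{i_0+1} = n^{\rho/2}$), so the formula in the lemma statement reflects the exponential degree sequence throughout. The follow-up lemmas for the transition phase and the fixed-growth stage will require a separate inductive step with the appropriate degree, but for the range $[0, i_0+1]$ the single recurrence above suffices.
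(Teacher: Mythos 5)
Your proof is correct and matches the paper's own argument exactly: induction on $i$ with base case $i=0$, using $|P_{i+1}| \leq |P_i|\cdot deg_i^{-1}$ from \cref{eq explicit pi bound} together with $deg_i = n^{\frac{2^i-1}{\gamma\kappa}+\frac{1}{\kappa}}$ for $i\in[0,i_0]$, and the exponent arithmetic checks out. The remark that the step to $i_0+1$ uses the exponential-growth degree $deg_{i_0}$ rather than the transition-phase degree is an accurate and worthwhile clarification.
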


\begin{proof}
We will prove the lemma by induction on the index of the phase $i$. 

For $i=0$, the right-hand side is $n^{1 - \frac{2^0-1-0}{\gamma\kappa}-\frac{0}{\kappa}} = n$. Thus the claim is trivial. 

Assume that the claim holds for some $i\in [0,{\ell'}-1]$ and prove it also holds for $i+1$.
By \cref{eq explicit pi bound} we have that $|P_{i+1}| \leq |P_i|\cdot deg_i^{-1}$.

Together with the induction hypothesis, and since for i, $i \in[0,i_0]$, we have $deg_i = n^{\frac{2^i-1}{\gamma\kappa}+\frac{1}{\kappa}}$, we have that 
\begin{equation*}
\begin{array}{lclclclclclc}
|P_{i+1} | &\leq& |P_i|\cdot deg_i^{-1} 
&\leq & n^{1 - \frac{2^i-1-i}{\gamma\kappa}-\frac{i}{\kappa}} \cdot n^{-\frac{2^i-1}{\gamma\kappa}-\frac{1}{\kappa}}

&= & n^{1 - \frac{2^{i+1}-1-(i+1)}{\gamma\kappa}-\frac{i+1}{\kappa}} .
\end{array}
\end{equation*}
\end{proof}

Recall that $\gamma \geq 2$. Observe that by \cref{eq bound intercon edges} and Lemma \ref{lemma bound pi exp}, we have that the number of edges added to the spanner $H$ by every phase $i\in [0,i_0]$ is at most 

 \begin{equation}\label{eq bound intercon edges exp}
 \begin{array}{lclclclc}
O\left(|P_i|\cdot deg_i\cdot \left({n^{\frac{1}{2\kappa}}}/{2}\right)^i\right)
&=& 
O\left( n^{1 - \frac{2^i-1-i}{\gamma\kappa}-\frac{i}{\kappa}}\cdot 
n^{\frac{2^i-1}{\gamma\kappa}+\frac{1}{\kappa}}
\cdot {n^{\frac{i}{2\kappa}}}\cdot 2^{-i}\right)
%&=& 
%O\left(n^{1 - \frac{2^i-1}{\gamma\kappa} +\frac{i}{\gamma\kappa} -\frac{i}{\kappa} +\frac{2^i-1}{\gamma\kappa}+\frac{1}{\kappa}+ \frac{i}{2\kappa}} \cdot {2}^{-i}\right)\\
&=& 
O\left( {2}^{-i} n^{1 +\frac{1}{\kappa}} \right).
 \end{array}
\end{equation}

\begin{lemma}
	\label{lemma bound pi tran}
	The size of the input collection $P_{i_0+1}$ for the transition phase satisfies
	$| {P}_{i_0+1}| \leq n^{1 - \rho}.$
\end{lemma}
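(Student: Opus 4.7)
The plan is to invoke Lemma~\ref{lemma bound pi exp} at index $i = i_0+1$, which gives
\begin{equation*}
|P_{i_0+1}| \leq n^{1 - \frac{2^{i_0+1}-(i_0+2)}{\gamma\kappa} - \frac{i_0+1}{\kappa}},
\end{equation*}
and then reduce the task to showing that the exponent of $n$ in the right-hand side is at most $1-\rho$, i.e.
\begin{equation*}
\frac{2^{i_0+1}-(i_0+2)}{\gamma\kappa} + \frac{i_0+1}{\kappa} \;\geq\; \rho. \qquad (\star)
\end{equation*}

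Recall $i_0 = \min\{\lfloor \log(\gamma\kappa\rho)\rfloor,\ \lfloor \kappa\rho\rfloor\}$, so I will split the analysis into two cases according to which term achieves the minimum.

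In the first case, suppose $i_0 = \lfloor \log(\gamma\kappa\rho)\rfloor$. Then $2^{i_0+1} \geq \gamma\kappa\rho$, so $\tfrac{2^{i_0+1}}{\gamma\kappa} \geq \rho$, and $(\star)$ reduces to checking that the remaining ``correction'' terms $-\tfrac{i_0+2}{\gamma\kappa} + \tfrac{i_0+1}{\kappa}$ are nonnegative. Since $\gamma \geq 2$, we have $\gamma(i_0+1) \geq 2(i_0+1) \geq i_0+2$ (easily verified for every $i_0 \geq 0$), so this correction is indeed nonnegative and $(\star)$ holds. In the second case, suppose $i_0 = \lfloor \kappa\rho\rfloor$, so $i_0+1 > \kappa\rho$ and thus $\tfrac{i_0+1}{\kappa} > \rho$. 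The other term is nonnegative because $2^{i_0+1} \geq i_0+2$ for all $i_0 \geq 0$ (a one-line induction), and so $(\star)$ holds again. Combining the two cases with the bound from Lemma~\ref{lemma bound pi exp} yields $|P_{i_0+1}| \leq n^{1-\rho}$.

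There is essentially no hard step here: the only point requiring care is bookkeeping of the $-1-(i_0+1)$ correction in the exponent of Lemma~\ref{lemma bound pi exp}, which must not eat up the $\rho$ savings. This is precisely where the assumption $\gamma \geq 2$ enters, ensuring the ``loss'' $\tfrac{i_0+2}{\gamma\kappa}$ in the first term is absorbed by the ``gain'' $\tfrac{i_0+1}{\kappa}$ from the second term.
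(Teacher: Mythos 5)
Your proposal is correct and follows essentially the same route as the paper: apply Lemma \ref{lemma bound pi exp} at $i = i_0+1$ and split into the two cases defining $i_0$, using $\gamma\geq 2$ to absorb the lower-order correction terms. Your regrouping of the exponent (isolating $2^{i_0+1}/(\gamma\kappa)\geq\rho$ in the first case and $(i_0+1)/\kappa>\rho$ in the second) is, if anything, slightly cleaner than the paper's algebra.
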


\begin{proof}
	By Lemma \ref{lemma bound pi exp}, we have 
	$$| {P}_{i_0+1}| \leq n^{1 - \frac{2^{i_0+1}-1-({i_0+1})}{\gamma\kappa}-\frac{{i_0+1}}{\kappa}}= n^{1 -
	\frac{2^{i_0+1}-1}{\gamma\kappa} +\frac{i_0+1-\gamma({i_0+1})}{\gamma \kappa}}= 
	n^{1 -\frac{2^{i_0+1}-1}{\gamma\kappa} -\frac{(i_0+1)(\gamma-1)}{\gamma \kappa}}.$$
	Observe that since $i_0>0$ and $\gamma\geq2$, we have 
	$\frac{(i_0+1)(\gamma-1)}{\gamma \kappa} \geq \frac{1}{\gamma \kappa}$. 
	
	If $i_0 = \lfloor {\log {\gamma\kappa\rho}}\rfloor$, then 
	\begin{equation}
	| {P}_{i_0+1}| \leq 
	n^{1 -\frac{2^{\lfloor {\log {\gamma\kappa\rho}}\rfloor+1}-1}{\gamma\kappa} 
		-\frac{(i_0+1)(\gamma-1)}{\gamma \kappa}}
	\leq 
	n^{1 -\frac{ \gamma\kappa\rho-1}{\gamma\kappa} -\frac{1}{\gamma \kappa}}
	=
	n^{1 -\rho}.
	\end{equation}
	
	Otherwise, if $i_0 = \lfloor \kappa\rho\rfloor$, then 	
	\begin{equation}
	| {P}_{i_0+1}| \leq 
	n^{1 -\frac{2^{i_0+1}-1}{\gamma\kappa} -\frac{\kappa\rho(\gamma-1)}{\gamma \kappa}}
	=
	n^{1 -\frac{2^{i_0+1}-2}{\gamma\kappa} -\frac{\gamma\kappa\rho}{\gamma \kappa}}
	\leq 
	n^{1 -\rho}.	
	\end{equation}	
\end{proof}

Recall that $i_0 \leq \lfloor \kappa\rho\rfloor $. Observe that by \cref{eq bound intercon edges} and Lemma \ref{lemma bound pi tran}, we have that the number of edges added to the spanner $H$ by phase ${i_0+1}$ is at most 

\begin{equation}\label{eq bound intercon edges tran}
\begin{array}{lclclclc}
O\left(|P_{i_0+1}|\cdot deg_{i_0+1}\cdot \left({n^{\frac{1}{2\kappa}}}/{2}\right)^{i_0+1}\right)
&= & O\left({2}^{-(i_0+1)}n^{1 - \frac{\rho}{2} +\frac{\kappa\rho+1}{2\kappa}}\right)
&= & O\left({2}^{-(i_0+1)}n^{\frac{1}{2\kappa}}\right).
\end{array}
\end{equation}

\begin{lemma}
	\label{lemma bound pi fixed}
	For every $j\in [2,{\ell'}-i_0]$ we have $|P_{i_0+j}| \leq n^{1-\rho/2 - (j-1) \rho}$.
\end{lemma}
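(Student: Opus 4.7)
The plan is to prove the bound by straightforward induction on $j$, using the recurrence $|P_{i+1}| \leq |P_i| \cdot deg_i^{-1}$ from \cref{eq explicit pi bound} together with the degree assignment specified for the transition phase and the fixed growth stage.

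For the base case $j = 2$, I would combine Lemma \ref{lemma bound pi tran} (which yields $|P_{i_0+1}| \leq n^{1-\rho}$) with the transition-phase degree $deg_{i_0+1} = n^{\rho/2}$, obtaining
\begin{equation*}
|P_{i_0+2}| \leq |P_{i_0+1}| \cdot deg_{i_0+1}^{-1} \leq n^{1-\rho} \cdot n^{-\rho/2} = n^{1 - \rho/2 - \rho},
\end{equation*}
which matches the claimed bound for $j=2$.

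For the inductive step, assuming the claim holds for some $j \in [2, \ell' - i_0 - 1]$, I would apply \cref{eq explicit pi bound} once more and use the fact that in the fixed growth stage $deg_{i_0+j} = n^\rho$, yielding
\begin{equation*}
|P_{i_0+j+1}| \leq |P_{i_0+j}| \cdot deg_{i_0+j}^{-1} \leq n^{1 - \rho/2 - (j-1)\rho} \cdot n^{-\rho} = n^{1 - \rho/2 - j\rho},
\end{equation*}
which is the desired bound for $j+1$. There is no real obstacle here; the only care required is to separate the role of the transition phase (whose smaller degree $n^{\rho/2}$ is responsible for the leading $-\rho/2$ term in the exponent) from the subsequent fixed-stage phases (each of which contributes an additional factor of $n^{-\rho}$). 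Once this bookkeeping is in place, the induction is immediate from \cref{eq explicit pi bound}.
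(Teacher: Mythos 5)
Your proposal is correct and follows essentially the same route as the paper's own proof: induction on $j$, with the base case obtained from Lemma \ref{lemma bound pi tran} together with $deg_{i_0+1}=n^{\rho/2}$, and the inductive step from \cref{eq explicit pi bound} together with $deg_{i_0+j}=n^\rho$. No gaps.
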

\begin{proof}
	The proof is by induction on the index $j$. For $j=2$, by Lemma \ref{lemma bound pi tran} we have
	$|{P}_{i_0+1}| \leq n^{1 - \rho}$. In addition, by \cref{eq explicit pi bound} we have that $|P_{i_0+2}| \leq |P_{i_0+1}|\cdot deg_{i_0+1}^{-1}$. Recall that $deg_{i_0+1} = n^{\rho/2}$. Hence we have $ |{P}_{i_0+1}| \leq n^{1 - \rho -\rho/2}$ and so the claim holds.

	Assume that the claim holds for some $j\in [2,{\ell'}-i_0-1]$ and prove it holds for $j+1$. Recall that $deg_{i_0+j} = n^\rho$. By \cref{eq explicit pi bound} we have that $|P_{i_0+j+1}| \leq |P_{i_0+j}|\cdot deg_{i_0+j}^{-1}$. Together with the induction hypothesis, we have 
	
	\begin{equation*}
	\begin{array}{lclclclclclc}
		|P_{i_0+j} | &\leq&
		 n^{1-\rho/2 - (j-1)\cdot \rho}\cdot n^{-\rho} &=& n^{1-\rho/2 - (j+1-1)\rho} . 
	\end{array}
	\end{equation*}
\end{proof}

Recall that $\ell' = i_0+ \lceil 1/\rho-1/2\rceil$. Hence, by Lemma \ref{lemma bound pi fixed} we have 
\begin{equation}
\begin{array}{lclclcl}	
|P_{\ell'}| &\leq& n^{1-\rho/2 - (\lceil 1/\rho-1/2\rceil-1)\cdot \rho} &\leq&
n^{1-\rho/2 - 1+ (3/2)\cdot \rho}& = & n^{\rho}.
\end{array}
\end{equation}
Thus, there are no popular clusters during phase $\ell'$, and \cref{eq bound intercon edges} holds also for the last phase.

Recall that $i_0\leq \lfloor \kappa\rho \rfloor$, and therefore $\frac{i_0}{2\kappa} \leq \frac{\rho}{2}$.
It follows that 
	\begin{equation*}
\begin{array}{lclclclclclc}
\left(\frac{n^{\frac{1}{2\kappa}}}{2}\right)^{i_0} 
& = & 2^{-i_0}n^{\frac{i_0}{2\kappa}}
& \leq & 2^{-i_0}n^{\frac{\rho}{2}}.
\end{array}
\end{equation*}

 Also, recall that $1/\kappa <\rho$. 
Observe that by \cref{eq bound intercon edges} and Lemma \ref{lemma bound pi fixed}, we have that the number of edges added to the spanner $H$ by every phase $i_0+j$, for $j\in [2,{\ell'}-i_0]$ is at most

\begin{equation}\label{eq bound intercon edges fixed}
\begin{array}{lclclclc}
O\left(|P_{i_0+j}|\cdot deg_{i_0+j}\cdot \left({n^{\frac{1}{2\kappa}}}/{2}\right)^{i_0+j}\right)
&=&

O\left( n^{1-\rho/2 - (j-1)\cdot \rho} \cdot n^{\rho} \cdot n^{\rho/2}\cdot 2^{-i_0}\left({n^{\frac{1}{2\kappa}}}/{2}\right)^{j}\right)

\\
&=& 

O\left(2^{-i_0 -j} n^{1 - (j-2)\cdot \frac{1}{\kappa}+ \frac{j}{2\kappa}}\right)

\\

&=& 

O\left(2^{-i_0 -j} n^{1 + \frac{1}{\kappa}}\right).

\end{array}
\end{equation} 

By \cref{eq bound super edges,eq bound intercon edges exp,eq bound intercon edges tran,eq bound intercon edges fixed} we have that the overall number of edges added to the spanner $H$ by all phases of the algorithm is

\begin{equation}
O(n({\log\kappa\rho} +\frac{1}{\rho})) + \sum_{i = 0}^{i_0} O\left( {2}^{-i} n^{1 +\frac{1}{\kappa}} \right) + O\left({2}^{-(i_0+1)}n^{\frac{1}{2\kappa}}\right) + \sum_{j=2}^{{\ell'}-i_0} O\left(2^{-i_0-j} n^{1 + \frac{1}{\kappa}}\right) = O\left(\nfrac\right).
\end{equation}

Recall that we restrict ourselves to the case where $\kappa \leq \frac{c'{\log n}}{{\log (\ell'/(\rho\epsilon))}} $, for a sufficiently small constant $c'$. Also recall that $\ell' \leq \lfloor {\log \gamma\kappa\rho}\rfloor +  \lceil1/\rho-1/2\rceil$. 
Note that $\frac{c'{\log n}}{{\log (\ell'/(\rho\epsilon))}} \geq \frac{\Omega({\log n})}{{\log(1/\epsilon)} + {\log (1/\rho)}+{\log ^{(3)}n}}$, where ${\log ^{(3)}n}$ is the three-times iterated logarithm. 
The following corollary summarizes the properties of current construction.

\begin{corollary}\label{coro span}
	For any unweighted, undirected $n$-vertex graph $G=(V,E)$, and any parameters $\epsilon<1$, $\kappa \in [2,\frac{c{\log n}}{{\log(1/\epsilon)} + {\log (1/\rho)}+{\log ^{(3)}n}}] $, for a constant $c$ and $\rho \in [1/\kappa,1/2]$, our algorithm computes a $(1+\epsilon,\beta)$-spanner with $O(\nfrac)$ edges in $O(\beta n^\rho)$ deterministic \congest\ time, where 
	$$\beta = \betadistspan.$$
\end{corollary}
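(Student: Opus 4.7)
The plan is to assemble the pieces established throughout Section \ref{sec span} into a single unified statement. The construction itself is the spanner variant of the \congest\ emulator algorithm from Section \ref{sec emu dist construction}: whenever the emulator algorithm would insert an edge $(u,v)$ of weight $d$, we instead insert the entire $u$-$v$ path of length $\le d$ in $G$ along which the message is already routed. Because the endpoint $v$ no longer needs to be informed of $u$'s identity (it just receives a broadcast), we can dispense with the hub-vertex splitting of Task 3, and a single supercluster is created per tree in $F_i$; crucially, Lemma \ref{lemma radpi leq ri dist} and Lemma \ref{lemma bouns pi1} (hence \cref{eq explicit pi bound}) remain valid. The degree sequence is then the three-stage sequence already specified: exponential growth for $i\in[0,i_0]$, a transition phase $i_0+1$, and fixed growth for $i\in[i_0+2,\ell']$.

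For the \emph{edge count}, I would split the contributions into (a) superclustering edges, which lie in the BFS forest $F_i$ per phase and therefore contribute $O(n)$ per phase, summing to $O(n(\log\kappa\rho + 1/\rho))$ over all $\ell'+1$ phases (see \cref{eq bound super edges}); and (b) interconnection edges, for which each unclustered cluster $C\in U_i$ contributes at most $deg_i \cdot \delta_i$ edges --- one path of length at most $\delta_i=O((90\ell'/(\rho\epsilon))^i)$ per neighboring cluster. Plugging Lemma \ref{lemma bound pi exp}, Lemma \ref{lemma bound pi tran}, and Lemma \ref{lemma bound pi fixed} into \cref{eq bound intercon edges} for the three stages respectively yields \cref{eq bound intercon edges exp,eq bound intercon edges tran,eq bound intercon edges fixed}, which form geometric series with common ratio $\tfrac12$ provided $\delta_i \le n^{1/(2\kappa)}/2$. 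Summing everything gives $O(\nfrac)$.

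For the \emph{stretch}, the analysis of Section \ref{sec dist analysis of stretch} carries over verbatim: replacing an edge by a path of equal weight cannot shorten distances in $G$ nor increase them in $H$, so Lemma \ref{lemma radpi leq ri dist} and the analogue of Lemma \ref{lemma neighboring clusters cent} for the spanner still hold. Therefore Corollary \ref{coro st emu dist} applies, and the rescaling $\epsilon' = 90\epsilon\ell'/\rho$ from Section \ref{sec rescale dist} yields the stated $\beta$. For the \emph{running time}, the distributed implementation is in fact strictly simpler than in the emulator case (no hub-vertices, no broadcast of $\langle r'\rangle$), so each phase $i$ costs $O(\delta_i n^\rho/\rho)$ as in Section \ref{sec dist analysis of rt}, and summing gives $O(\beta n^\rho)$.

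The main obstacle is the edge-count bookkeeping in the presence of three distinct degree regimes: one must verify that the per-phase totals in \cref{eq bound intercon edges exp,eq bound intercon edges tran,eq bound intercon edges fixed} indeed telescope geometrically, and in particular that the upper bound on $\kappa$ is exactly what is needed to guarantee $\delta_{\ell'} \le n^{1/(2\kappa)}/2$. Unwinding $\delta_{\ell'} = O((\ell'/(\rho\epsilon))^{\ell'})$ with $\ell' = \Theta(\log\kappa\rho + 1/\rho)$ and taking logarithms gives the requirement $\kappa \log(\ell'/(\rho\epsilon)) \le O(\log n)$, which is precisely the upper bound on $\kappa$ stated in the corollary.
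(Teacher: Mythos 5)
Your overall route is the same as the paper's: replace emulator edges by the underlying paths, drop the hub-vertex machinery, keep Lemmas \ref{lemma bouns pi1} and \ref{lemma radpi leq ri dist}, split the edge count into forest (superclustering) edges and interconnection paths, and bound the latter stage by stage via Lemmas \ref{lemma bound pi exp}, \ref{lemma bound pi tran} and \ref{lemma bound pi fixed}. The stretch and running-time parts are fine.

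The one genuine problem is in the step you yourself single out as the crux: the justification of the upper bound on $\kappa$. The geometric decay in \cref{eq bound intercon edges exp,eq bound intercon edges tran,eq bound intercon edges fixed} does \emph{not} come from a condition of the form $\delta_{\ell'}\leq n^{1/(2\kappa)}/2$. What is needed is a bound on the \emph{base} of the exponential $\delta_i=O\bigl((90\ell'/(\rho\epsilon))^i\bigr)$, namely $\frac{90\ell'}{\rho\epsilon}\leq \frac{n^{1/(2\kappa)}}{2}$; this gives $\delta_i\leq \bigl(n^{1/(2\kappa)}/2\bigr)^i=n^{i/(2\kappa)}\cdot 2^{-i}$, where the factor $n^{i/(2\kappa)}$ is absorbed by the decay of $|P_i|\cdot deg_i$ and the factor $2^{-i}$ produces the geometric series. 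The condition you state, $\delta_{\ell'}\leq n^{1/(2\kappa)}/2$, is a different (and far stronger) requirement: taking logarithms of $(\ell'/(\rho\epsilon))^{\ell'}\leq n^{1/(2\kappa)}$ gives $\kappa\,\ell'\,\log(\ell'/(\rho\epsilon))=O(\log n)$, not $\kappa\log(\ell'/(\rho\epsilon))=O(\log n)$ as you write --- your derivation silently drops a factor of $\ell'=\Theta(\log\kappa\rho+1/\rho)$, and carried out correctly from your premise it would shrink the admissible range of $\kappa$ by that factor, which no longer covers the range stated in the corollary (in particular it would not reach $\kappa=\Theta(\log n/\log^{(3)}n)$, which is needed for the $O(n\log\log n)$-size spanner). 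The fix is simply to impose the base condition $\frac{90\ell'}{\rho\epsilon}\leq\frac{n^{1/(2\kappa)}}{2}$, which upon taking logarithms is equivalent to $\kappa\leq \frac{c'\log n}{\log(\ell'/(\rho\epsilon))}$ and hence to the range of $\kappa$ in the statement.
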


To obtain the sparsest spanners that one can get with this construction, we  set $\epsilon>0$ to be an arbitrarily small constant, and $\kappa =   \frac{c'{\log n}}{{\log ^{(3)}n}} $. Under this assignment of parameters, the size of the spanner is just $O(n{\log {\log n}})$, and the additive error is  $\beta = O(\frac{{\log {\log n}}+1/\rho}{\rho})^{{\log {\log n}}+1/\rho}$. 
	
	\bibliographystyle{alpha}

\bibliography{emu_cite}

\end{document}